\theoremstyle{plain}
\newtheorem{theorem}{Theorem}[section]
\newtheorem{lemma}[theorem]{Lemma}
\newtheorem{proposition}[theorem]{Proposition}
\newtheorem{corollary}[theorem]{Corollary}
\theoremstyle{definition}
\newtheorem{definition}[theorem]{Definition}
\newtheorem{example}[theorem]{Example}
\newtheorem{remark}[theorem]{Remark}
\newcommand{\tuplft}{{\langle}}
\newcommand{\tuprgt}{{\rangle}}
\newcommand{\tupsep}{{,\,}}
\newcommand{\tup}[1]{\tuplft#1\tuprgt}
\newcommand{\tuple}{\tup}
\newcommand{\pair}[2]{\tup{#1\tupsep#2}}
\newcommand{\nb}{\nobreakdash}
\newcommand{\sub}[2]{#1_{#2}}
\newcommand{\tsp}{\hspace{0em}}
\newcommand{\where}{\mid} 
\newcommand{\displset}[2]{\left\{\tsp #1 \tsp \where \tsp #2 \tsp \right\}}
\newcommand{\enumset}[1]{\left\{\tsp #1 \tsp\right\}}
\newcommand{\msf}{\mathsf}
\newcommand{\mbb}{\mathbb}
\newcommand{\mbf}{\mathbf}
\newcommand{\mbs}{\boldsymbol}
\newcommand{\mcl}{\mathcal}
\newcommand{\mrm}{\mathrm}
\newcommand{\nat}{\mbb{N}}
\newcommand{\emp}{\varepsilon}
\renewcommand{\emptyset}{\varnothing}
\newcommand{\dol}{D0L}
\newcommand{\pdol}{P\dol}
\newcommand{\cdol}{C\dol}
\newcommand{\divides}{\mid}
\newcommand{\ndivides}{\nmid} 
\newcommand{\aprog}{F}
\newcommand{\bprog}{G}
\newcommand{\aprg}{\aprog}
\newcommand{\iaprg}{\sub{\aprg}}
\newcommand{\bprg}{\bprog}
\newcommand{\undefd}[1]{{#1}{\uparrow}}
\newcommand{\defd}[1]{{#1}{\downarrow}}
\newcommand{\cbox}[2]{{\setlength{\fboxsep}{1.5pt}\colorbox{#1}{$#2$}}}
\colorlet{corange}{orange!90!red!60}
\colorlet{cpureorange}{orange}
\colorlet{cred}{red}
\colorlet{cdarkred}{red!70!black}
\colorlet{clightred}{red!70}
\colorlet{cpurple}{blue!50}
\definecolor{cblue}{rgb}{0,0.4,0.7}
\definecolor{clightblue}{rgb}{0.3,0.7,1.0}
\colorlet{cgreen}{green!80!black!60}
\colorlet{cdarkgreen}{green!60!black!70}
\colorlet{chighlight}{yellow!80!orange}
\newcommand{\symbfont}{\msf}
\newcommand{\symbs}{\cbox{white}{\symbfont{s}}}
\newcommand{\symba}{\cbox{corange}{\symbfont{a}}}
\newcommand{\nsymba}[1]{\cbox{corange}{\symbfont{a}^{#1}}}
\newcommand{\symbb}{\cbox{cpureorange}{\symbfont{b}}}
\newcommand{\symbi}{\cbox{cdarkgreen}{\symbfont{z}}}
\newcommand{\symbl}{\cbox{cgreen}{\symbfont{l}}}
\newcommand{\symbr}{\cbox{cgreen}{\symbfont{r}}}
\newcommand{\symbe}{\cbox{clightblue}{\symbfont{e}}}
\newcommand{\symbA}{\cbox{corange}{\symbfont{A}}}
\newcommand{\symbB}{\cbox{cpureorange}{\symbfont{B}}}
\newcommand{\symbI}{\cbox{cdarkgreen}{\symbfont{Z}}}
\newcommand{\symbL}{\cbox{cgreen}{\symbfont{L}}}
\newcommand{\symbR}{\cbox{cgreen}{\symbfont{R}}}
\newcommand{\symbE}{\cbox{clightblue}{\symbfont{E}}}
\newcommand{\symbc}{\cbox{clightblue}{\symbfont{c}}}
\newcommand{\symbd}{\cbox{clightblue!50}{\symbfont{d}}}
\newcommand{\symbo}{\cbox{cpurple!70}{\symbfont{o}}}
\newcommand{\symbp}{\cbox{white}{\symbfont{\star}}}
\newcommand{\nsymbpp}[1]{\cbox{white}{\symbfont{\star}^{#1}}}
\newcommand{\szero}{\cbox{chighlight}{\symbfont{0}}}
\newcommand{\sone}{\cbox{chighlight}{\symbfont{1}}}
\newcommand{\symbO}{\cbox{cpurple!70}{\symbfont{O}}}
\newcommand{\symbX}{\cbox{cpurple!70}{\symbfont{P}}}
\newcommand{\symbQ}{\cbox{black!70}{\textcolor{white}{\symbfont{Q}}}}
\newcommand{\Xspa}{\cbox{white}{\symbfont{X}}}
\newcommand{\spa}{\cbox{white}{\text{\textvisiblespace}}}
\newcommand{\nspa}[1]{\cbox{white}{\text{\textvisiblespace}^{#1}}}
\newcommand{\smin}{\cbox{white}{\circ}}
\newcommand{\nsmin}[1]{\cbox{white}{\circ^{#1}}}
\newcommand{\splu}{\cbox{white}{\bullet}}
\newcommand{\nsplu}[1]{\cbox{white}{\bullet^{#1}}}
\newcommand{\xline}[1]{
  \begin{tikzpicture}[nodes={rectangle},inner sep=0,outer sep=0,baseline=.3ex]
    \node (n) {};
    #1
  \end{tikzpicture}
}
\newcommand{\sat}[2]{
  \node (n) [at=(n.south east),anchor=south west] {$#1$};
  \node (a) [at=(n.south),anchor=north,yshift=-1mm] {{\tiny #2}};
}
\newcommand{\pto}{\rightharpoonup}
\newcommand{\sfunin}{{:}}
\newcommand{\funin}{\mathrel{\sfunin}}
\newcommand{\funap}[2]{#1(#2)}
\newcommand{\bfunap}[3]{\funap{#1}{#2,#3}}
\newcommand{\applicable}{\psi}
\newcommand{\anum}{n}
\newcommand{\aden}{d}
\newcommand{\ianum}{\sub{\anum}}
\newcommand{\iaden}{\sub{\aden}}
\newcommand{\iafrac}[1]{\frac{\ianum{#1}}{\iaden{#1}}}
\newcommand{\iafrc}{\sub{f}}%
\newcommand{\astate}{\alpha}
\newcommand{\iastate}{\sub{\astate}}
\newcommand{\areg}{r}
\newcommand{\iareg}{\sub{\areg}}
\newcommand{\mirror}[1]{\reflectbox{$#1$}}
\newcommand{\texcomp}[3]{#1{#2{#3}}}
\newcommand{\sspadic}{v}
\newcommand{\spadic}{\sub{\sspadic}}
\newcommand{\padic}{\texcomp{\funap}{\spadic}}
\newcommand{\aalph}{\Sigma}
\newcommand{\balph}{\Gamma}
\newcommand{\nalph}{\sub{\Sigma}}
\newcommand{\wrd}[1]{#1^{\ast}}
\newcommand{\newrd}[1]{#1^{+}}
\newcommand{\str}[1]{#1^{\omega}}
\newcommand{\wrdlt}{\prec}
\newcommand{\wrdle}{\preceq}
\renewcommand{\pmod}[1]{~~(\mrm{mod}~#1)}
\newcommand{\kola}{K}
\newcommand{\lt}{<}
\newcommand{\gt}{>}
\newcommand{\wrdemp}{\emp}
\newcommand{\zrep}[1]{(#1)_{\msf{z}}}
\newcommand{\zval}[1]{[#1]_{\msf{z}}}
\newcommand{\BIN}{\mrm{BIN}}
\newcommand{\col}{:~}
\newcommand{\sep}{\,,\;}%
\newcommand{\st}{\iastate}
\newcommand{\sh}{\sub{\mirror{\astate}}}%
\newcommand{\reg}{\iareg}
\newcommand{\out}{\sub{\beta}}%
\newcommand{\smetric}{d}
\newcommand{\metric}{\bfunap{\smetric}}
\newcommand{\cpi}[2]{\mathrm{\Pi}^{#1}_{#2}}
\newcommand{\csig}[2]{\mathrm{\Sigma}^{#1}_{#2}}
\newcommand{\length}[1]{|#1|}
\newcommand{\gap}{?}
\newcommand{\sssubwcompl}{p}
\newcommand{\ssubwcompl}{\sub{\sssubwcompl}}
\newcommand{\subwcompl}[1]{\funap{\ssubwcompl{#1}}}
\newcommand{\iobit}[1]{\cbox{chighlight}{\kappa_{#1}}}
\newcommand{\obit}{\funap{\kappa}}
\newcommand{\simpstep}{\rightsquigarrow}
\newcommand{\simp}{\sim}
\newcommand{\seq}{\mbs}
\newcommand{\problem}[2]{{\setlength{\leftmargini}{47pt}
  \vspace{-1ex}
  \begin{itemize}
    \item [{\sc Input:}] #1
    \item [{\sc Question:}] #2
  \end{itemize}\vspace{-1ex}}}
\begin{document}
\allowdisplaybreaks

\title{On Periodically Iterated Morphisms%
  \thanks{%
    This research has been funded by 
    the Netherlands Organization for Scientific Research (NWO)
    under grant numbers 639.021.020 and 612.000.934.%
  }%
}

\author{%
  J\"{o}rg Endrullis \\
  VU University Amsterdam \\
  Department of Computer Science \\
  {\normalfont{\texttt{j.endrullis@vu.nl}}}
  \and
  Dimitri Hendriks \\
  VU University Amsterdam \\
  Department of Computer Science \\
  {\normalfont{\texttt{r.d.a.hendriks@vu.nl}}}
}

\date{}

\maketitle

\begin{abstract}
  
We investigate the computational power of 
periodically iterated morphisms,
also known as \dol{} systems with periodic control, 
\pdol{} systems for short.
These systems give rise to a class of one-sided infinite sequences, 
called \pdol{} words.

We construct a \pdol{} word with exponential subword complexity,
thereby answering a question raised by Lepist\"o~\cite{lepi:1993}
on the existence of such words. 
We solve another open problem concerning the decidability of the first-order theories 
of \pdol{} words~\cite{much:prit:seme:2009}; 
we show it is already undecidable whether
a certain letter occurs in a \pdol{} word.
This stands in sharp contrast to the situation for 
\dol{} words (purely morphic words), 
which are known to have at most quadratic subword complexity,
and for which the monadic theory is decidable.

The main result of our paper, leading to these answers, 
is that every computable word~$\seq{w} \in \str{\aalph}$ 
can be embedded in a \pdol{} word~$\seq{u}\in\str{\balph}$
with $\balph\supset\aalph$ in the following two ways:
(i)~such that every finite prefix of $\seq{w}$ is a subword of~$\seq{u}$,
and 
(ii)~such that $\seq{w}$ is obtained from $\seq{u}$ by erasing all letters 
from $\balph\setminus\aalph$.
The \pdol{} system generating such a word $\seq{u}$
is constructed by encoding a Fractran program that computes the word $\seq{w}$;
Fractran is a programming language as powerful as Turing Machines.

As a consequence of~(ii), 
if we allow the application of finite state transducers to \pdol{} words,
we obtain the set of all computable words.
Thus the set of \pdol{} words is not closed under finite state transduction,
whereas the set of \dol{} words is.
It moreover follows that equality of \pdol{} words (given by their \pdol{} system)
is undecidable.
Finally, we show that if erasing morphisms are admitted,
then the question of productivity becomes undecidable,
that is, the question whether a given \pdol{} system defines an infinite word.

\end{abstract}%

\setcounter{section}{-1}

\section{Introduction}\label{sec:intro}
Morphisms for transforming and generating infinite words 
provide a fundamental tool for formal languages,
and have been studied extensively; 
we refer to~\cite{allo:shal:2003} and the bibliography therein.

In this paper we investigate the class of infinite words 
generated by periodically alternating morphisms~\cite{culi:karh:1992,culi:karh:lepi:1992,lepi:1993,cass:karh:1997}.
Instead of repeatedly applying a single morphism,
one alternates 
several morphisms from a given (finite) set in a periodic fashion.
Let us look at an example right away,
and consider the most famous word generated by such a procedure,
namely the Kolakoski word~\cite{kola:1965}
\begin{align*}
  \kola = 1 \, 22 \, 11 \, 2 \, 1 \, 22 \, 1 \, 22 \, 11 \, 2 \, 11 \, 22 \, 1 \, 2 \, 11 \, 2 \, 1 \, 22 \, 11 \, 2 \, \cdots
%
\end{align*}
which is defined such that 
$\kola(0) = 1$ and $\kola(n)$ equals the length of the $n$-th run of $\kola$; 
here by a `run' we mean a block of consecutive identical symbols.
The Kolakoski word can be generated by alternating two morphisms on the starting word $12$, 
$h_0$ for the even positions and $h_1$ for the odd positions, defined as follows:
\[
\begin{aligned}
  h_0 : 
  \begin{array}{l}
    1 \to 1 \\
    2 \to 11
  \end{array}
  &&
  &&
  h_1 : 
  \begin{array}{l}
    1 \to 2 \\
    2 \to 22
  \end{array}
\end{aligned}
\]
The first few iterations then are
\vspace{-1ex}
\begin{align*}
  & \mathrel{\phantom{=}} 1 2 \\
  h_0(1) \, h_1(2) & = 1 2\mbf{2} \\
  h_0(1) \, h_1(2) \, h_0(2) & = 1 22 \mbf{11} \\
  h_0(1) \, h_1(2) \, h_0(2) \, h_1(1) \, h_0(1) & = 1 22 11 \mbf{21} 
\end{align*}
It is known that the Kolakoski word is not purely morphic~\cite{culi:karh:lepi:1992}, 
i.e, cannot be generated by iterating a single morphism. 
However it is an open problem whether it is a morphic word, 
i.e., the image of a purely morphic word under a coding (= letter-to-letter morphism).
We shall use the `\dol' terminology: \dol{} for purely morphic, \cdol{} for morphic,
and \pdol{} for words generated by periodically alternating morphisms, like the Kolakoski word above.

A natural characteristic of sequences is their subword complexity~\cite{fere:1999,allo:1994,allo:shal:2003}.
The subword complexity of a sequence $\seq{u}$ is a function $\nat \to \nat$
mapping $n$ to the number of $n$\nb-length words that occur in~$\seq{u}$.
It is well-known that morphic words have at most quadratic subword complexity~\cite{ehre:lee:roze:1975}.
Lepist\"{o}~\cite{lepi:1993} proves that for all $r\in\mbb{R}$ there is a \pdol{} word 
whose subword complexity is in $\Omega(n^r)$; hence there are \pdol{} words that are not \cdol{}.
It remained an open problem whether \pdol{} words can exhibit exponential subword complexity.
This intriguing question formed the initial motivation for our investigations.
We actually establish a stronger result from 
which the existence of such words can be derived, 
as we will describe next.

The main results of our paper can be stated as follows:
\begin{itshape}%
  For every computable word~$\seq{w}\in\str{\aalph}$
  there exists a \pdol{} word~$\seq{u}$ such that
  \begin{enumerate}[I.]
    \item 
      all prefixes of $\seq{w}$ occur in $\seq{u}$
      as subwords between special marker symbols (Theorem~\ref{thm:comp:prefix}), 
    \item 
      $\seq{w}$ is the subsequence of~$\seq{u}$
      obtained from selecting all letters from $\aalph$ (Theorem~\ref{thm:comp:sparse}). 
  \end{enumerate}
\end{itshape}%
The construction of the \pdol{} systems generating such words $\seq{u}$ 
makes use of Fractran~\cite{conw:1972,conw:1987}, 
a Turing complete programming language invented by Conway,
in the following way.
First, in Section~\ref{sec:fractran}, we show how to employ Fractran to generate any computable infinite word.
Then we encode Fractran programs as \pdol{} systems, 
and prove that the \pdol{} system correctly simulates the Fractran program
and records its output, see Sections~\ref{sec:prod} and~\ref{sec:comp}.

Consequences of 
I and II are as follows:
\begin{enumerate}[(1)]
  \item 
    There exist \pdol{} words with exponential subword complexity (Theorem \ref{thm:exp}).
  \item 
    It is undecidable to determine, given a \pdol{} system~$\mcl{H}$ and a letter~$b$,  
    whether the letter $b$ occurs (infinitely often) in the word generated by $\mcl{H}$ (Theorem~\ref{thm:letter}).
  \item 
    The first-order theory of \pdol{} words is undecidable (Corollary~\ref{cor:logic:undecidable}).
  \item 
    Equality of \pdol{} words is undecidable (Corollary~\ref{cor:equality:undecidable}).
  \item 
    The set of \pdol{} words is not closed under finite state transductions
    (Corollary~\ref{cor:fst}).
\end{enumerate}
All the above results concern \pdol{} systems 
whose morphisms are non-erasing.
But we also study erasing \pdol{} systems, and find
\begin{enumerate}[(1)]
  \setcounter{enumi}{5}
  \item 
    It is undecidable to determine, on the input of an erasing \pdol{} system,
    whether it generates an infinite word (Theorem~\ref{thm:prod}).
\end{enumerate}

The outline of the paper is as follows.
In Sections~\ref{sec:pdol} and~\ref{sec:fractran}
we introduce the \emph{dramatis personae} of our story:
\pdol{} systems 
and Fractran programs. 
We explain the workings of the Fractran algorithm,
and how to program in this language.

Then, as a steppingstone to our main result, we start with a proof of~(6) in Section~\ref{sec:prod}.
This proof illustrates our key construction: encoding Fractran programs as \pdol{} systems.
We then modify and extend this encoding in Section~\ref{sec:comp} 
to prove Theorems~\ref{thm:comp:prefix} and~\ref{thm:comp:sparse}: 
\pdol{} words can embed  every computable word, in the sense of I~and~II above.
We give a detailed example of the translation, and prove (1)--(5) listed above.

\pdol{} systems resulting from encoding Fractran programs can be quite large.
For example, the system obtained from a simple binary counter 
(computing an infinite word with exponential subword complexity)
consists of  
\[ 536393214598471230 \] 
morphisms.
We present a direct solution in Section~\ref{sec:exp}, 
namely a \pdol{} system with $16$ morphisms
simulating such a counter.

\section[D0L Systems with Periodic Control]{\dol{} Systems with Periodic Control}\label{sec:pdol}

We use standard terminology and notations, see, e.g., \cite{allo:shal:2003}.
Let $\aalph$ be a finite alphabet. 
We denote by $\wrd{\aalph}$ the set of all finite words over~$\aalph$,
by $\wrdemp$ the empty word,
and by $\newrd{\aalph} = \wrd{\aalph} \setminus \enumset{\wrdemp}$ 
the set of finite non-empty words.

The set of infinite words over $\aalph$ is 
$\str{\aalph} = \displset{\seq{x}}{\seq{x} \funin \nat \to \aalph}$.
On the set of all words $\aalph^{\infty} = \wrd{\aalph} \cup \str{\aalph}$
we define the metric $\smetric$ for all $\seq{u},\seq{v} \in \aalph^{\infty}$ by
$\metric{\seq{u}}{\seq{v}} = 2^{-n}$, where $n$ is the length of the longest common prefix 
of $\seq{u}$ and $\seq{v}$.

We let $\nalph{p} = \enumset{0,\ldots,p-1}$.
We write~$\length{x}$ for the length of $x\in\aalph^\infty$,
with $\length{x} = \infty$ if $x$ is infinite.
We call a word $v \in\wrd{\aalph}$ a \emph{factor} of $x\in\aalph^\infty$
if $x = uvy$ for some $u\in \wrd{\aalph}$ and $y\in\aalph^\infty$,
and say that $v$ occurs at position~$\length{u}$.
For words $u,v\in\wrd{\aalph}$, we write $u \wrdlt v$ if $u$ is a strict prefix of $v$, 
i.e., if $v = u u'$ for some $u' \in \newrd{\aalph}$,
and use\hspace{.5pt} ${\wrdle}$\, for its reflexive closure.

A \emph{morphism} is a map $h \funin \wrd{\aalph} \to \wrd{\balph}$ 
such that $h(uv) = h(u)h(v)$ for all $u,v\in\wrd{\aalph}$,
and can thus be defined by giving its values on the symbols of $\aalph$.
A morphism~$h$ is called \emph{erasing} if $h(a) = \wrdemp$ for some $a\in\aalph$,
and \emph{$k$\nb-uniform}, with $k \in \nat$,
if $\length{h(a)} = k$ for all $a\in\aalph$;
$h$ is a \emph{coding} if it is $1$\nb-uniform.

Infinite sequences generated by periodically alternating morphisms, 
also called `\dol{} words with periodic control' or just `\pdol{} words' for short, 
were introduced in~\cite{culi:karh:1992}.
These form a generalization of \dol{} words, also known as \emph{purely morphic} words,
which are obtained by iterating a single morphism. 
\begin{definition}\label{def:pdol}
  Let $H = \tuple{h_0,\ldots,h_{p-1}}$ be a tuple 
  of morphisms $h_i \funin \wrd{\aalph} \to \wrd{\aalph}$.
  We define the map $H \funin \wrd{\aalph} \to \wrd{\aalph}$ as follows:
  \begin{align*}
    H(a_0 & a_1 \cdots a_n) = u_0 u_1 \cdots u_n \\ 
    & \text{where $u_i = h_k(a_i)$, with $k \equiv i \pmod{p}$ and $k \in \nalph{p}$.}
  \end{align*}
  If $s \in \wrd{\aalph}$ is such that $s \wrdle H(s)$,
  then the triple $\mcl{H} = \tuple{\aalph,H,s}$
  is called a \emph{\pdol{} system}.
  Then in the metric space $\pair{\aalph^\infty\!}{\smetric}$ the limit 
  \begin{align*}
    H^\omega(s) = \lim_{i\to\infty} H^i(s)
  \end{align*}
  exists, and we call $H^\omega(s)$ the \emph{\pdol{} word} generated by $\mcl{H}$.
  We say that $\mcl{H}$ is \emph{productive} if $H^\omega(s)$ is infinite,
  and $\mcl{H}$ is \emph{erasing} if some of its morphisms~$h_i$ are erasing.

  If $x$ is a \pdol{} word generated by $p$ morphisms,
  and $x = uvy$ for some $u,v\in\wrd{\aalph}$ and $y\in\aalph^\infty$,
  we say that the factor~$v$ of $x$ occurs at \emph{morphism index~$i$}
  when $i\in\nalph{p}$ and $i \equiv \length{u} \pmod{p}$.
\end{definition}

\dol{} words are generated by \dol{} systems~$\tup{\aalph,h,s}$, 
i.e., \pdol{} systems $\tup{\aalph,\tup{h},s}$ consisting of one single morphism~$h$.
Following~\cite{cass:karh:1997}, we call the image of a \dol{} word under a coding 
(a letter-to-letter morphism), a \emph{\cdol{}} word, better known as morphic words.

In the literature, one typically requires the morphisms $h_i$ to be non-erasing
to ensure that the limit is infinite. 
We have taken a more general definition of \pdol{}-words,
since also erasing morphisms may yield an infinite word in the limit.
See Remark~\ref{rem:pdol:prod} below.

In the sequel it will be helpful to have a recursive definition of the map~$H$.
\begin{lemma}\label{lem:H:rec}
  Let $H = \tup{h_0,h_1,\ldots,h_{p-1}}$ be a tuple of morphisms. 
  For $i \in \nalph{p}$ define $H_i = \tup{h_i,\ldots,h_{p-1},h_0,\ldots,h_{i-1}}$
  and the corresponding map $H_i \funin \wrd{\aalph} \to \wrd{\aalph}$
  by 
  \begin{align*}
    H_i(\wrdemp) & = \wrdemp \\
    H_i(au) & = h_i(a) H_{i+1}(u) & \text{($a\in\aalph,\,u\in\wrd{\aalph}$)}
  \end{align*}
  where addition in the subscript of $H$ is taken modulo~$p$.

  Then $H_0 = H$ with $H$ the map defined in Definition~\ref{def:pdol},
  and $H_i(uv) = H_i(u) H_{i+\length{u}}(v)$ for all $u,v\in\wrd{\aalph}$ and $i\in\nalph{p}$.
  \qed
\end{lemma}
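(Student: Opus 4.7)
The plan is to prove both assertions by straightforward induction on word length, using the recursive clause defining $H_i$ and the convention that every subscript is read modulo $p$. Neither statement poses any genuine difficulty; the content of the lemma is that the recursive presentation matches the positional one.

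For the claim $H_0 = H$, I would induct on $n$ with input $a_0 a_1 \cdots a_n \in \wrd{\aalph}$. Iteratively unfolding the clause $H_i(a u) = h_i(a) H_{i+1}(u)$ starting from $i = 0$ yields
\begin{align*}
  H_0(a_0 a_1 \cdots a_n) = h_0(a_0) \, h_1(a_1) \cdots h_n(a_n),
\end{align*}
with subscripts taken modulo $p$. This is exactly the right-hand side of the definition in Definition~\ref{def:pdol}, where the $i$-th letter is sent to $h_k(a_i)$ with $k \equiv i \pmod{p}$.

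For the composition law $H_i(uv) = H_i(u) H_{i+\length{u}}(v)$, I would induct on $\length{u}$. The base case $u = \wrdemp$ is immediate from $H_i(\wrdemp) = \wrdemp$ and $\length{\wrdemp} = 0$. For the inductive step, writing $u = a u'$ with $a \in \aalph$, I would unfold the recursive clause once to get $h_i(a) H_{i+1}(u' v)$, apply the inductive hypothesis to $H_{i+1}(u' v)$ to split off $H_{(i+1)+\length{u'}}(v)$, and then fold the remaining prefix back into $H_i(a u')$, noting $\length{a u'} = 1 + \length{u'}$.

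The only point that requires attention is keeping the modular arithmetic on the $H$-subscripts consistent, but since this is built into the indexing convention already fixed in the statement, the computation proceeds mechanically. The real purpose of the lemma is not its proof but its payoff: it gives the recursive rewrite rule that will be used repeatedly in the Fractran-to-\pdol{} encoding in Sections~\ref{sec:prod} and~\ref{sec:comp}.
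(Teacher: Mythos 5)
Your proof is correct and matches the paper's intent exactly: the paper marks this lemma with an immediate \qed, treating both claims as following by the routine unfolding/induction you spell out. Both of your inductions (unfolding $H_0$ letter by letter to recover Definition~\ref{def:pdol}, and induction on $\length{u}$ for the concatenation law) are sound, with the modular bookkeeping handled correctly.
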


Using this notation we now formulate the \pdol{} analogue 
of the usual condition for productivity of \dol{} systems.
In Section~\ref{sec:prod} we show that productivity of \pdol{} systems in general is undecidable.
Productivity has been studied in the wider perspective of 
term rewriting systems in~\cite{sijt:1989,endr:grab:hend:2008,endr:grab:hend:isih:klop:2010,endr:hend:2011}.

\begin{remark}\label{rem:pdol:prod}
  Let $\tuple{\aalph,h,s}$  be a \dol{} system.
  We say that $h$ is \emph{prolongable} on $s$
  if $h(s) = sx$ for some $x\in\wrd{\aalph}$
  and $h^i(x) \neq \wrdemp$ for all $i \ge 0$.
  Then $h^i(s) \wrdlt h^{i+1}(s)$ for all $i \ge 0$,
  and hence the limit $h^\omega(s) = s\, x\, h(x)\, h^2(x) \cdots$ is infinite.
  The generalization of this condition 
  to \pdol{} systems $\mcl{H} = \tuple{\aalph,H,v_0}$ 
  is: 
  (*)~$H(v_0) = v_0 v_1$ for some $v_1 \in \wrd{\aalph}$
  such that $v_n \neq \wrdemp$ for all $n \in \nat$,
  where $v_n \in \wrd{\aalph}$ and $z_n\in\nalph{p}$ 
  are defined by $z_0 = 0$ and
  \begin{align*}
    v_n & = H_{z_{n-1}}(v_{n-1}) && (n \ge 2)
    \\ 
    z_n & \equiv z_{n-1} + \length{v_{n-1}} \pmod{p} && (n \ge 1)
  \end{align*} 
  Then $H^n(v_0) = H^{n-1}(v_0) v_n$ for all $n \ge 1$, and so (*)
  forms a necessary and sufficient condition for productivity of~$\mcl{H}$, that is,
  for the limit $H^\omega(v_0) = v_0 v_1 v_2 \cdots$ to be infinite.
\end{remark}

\begin{definition}\label{def:subword:complexity}
  The \emph{subword complexity} of an infinite word $\seq{x}\in\str{\aalph}$ 
  is the function~$\ssubwcompl{\seq{x}} \funin \nat \to \nat$ such that
  $\subwcompl{x}{n}$ is the number of factors (subwords) of $\seq{x}$ of length $n$.
\end{definition}

\begin{proposition}[\cite{ehre:lee:roze:1975}]\label{prop:dol:quad}
  The subword complexity of \dol{} words, and hence of \cdol{} words, 
  is at most quadratic. 
\end{proposition}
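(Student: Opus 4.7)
The plan is to reduce CD0L complexity to D0L complexity, and then bound the latter via a \emph{desubstitution} argument whose basic form gives a polynomial bound; the quadratic sharpening comes from a growth-rate analysis of letters. The reduction is immediate: if $\seq{u} = c(\seq{v})$ with $\seq{v}$ a D0L word and $c$ a coding, then every length-$n$ factor of $\seq{u}$ is the $c$-image of some length-$n$ factor of $\seq{v}$, so $\subwcompl{u}{n} \leq \subwcompl{v}{n}$. Hence it suffices to bound $\subwcompl{v}{n}$ for D0L words $\seq{v} = h^\omega(s)$ arising from a system $\tup{\aalph,h,s}$; I may further assume $h$ non-erasing, handling the erasing case by first discarding the mortal letters from~$\aalph$.

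Set $M = \max_{a\in\aalph}\length{h(a)}$ and $m = \min_{a\in\aalph}\length{h(a)} \geq 1$. The desubstitution technique associates to each occurrence of a factor $v$ of length $n$ in $\seq{v}$ a pair (\emph{ancestor}, \emph{offset}): the shortest factor $u$ of $\seq{v}$ together with an offset $r \in \enumset{0,\ldots,M-1}$ such that $v$ is the length-$n$ substring of $h(u)$ starting at position $r$. A direct length estimate gives $\length{u} \leq n/m + O(1)$, and the number of candidate pairs is at most $M \cdot \subwcompl{v}{\length{u}}$. When $m \geq 2$, iterating this inequality immediately yields a polynomial bound on $\subwcompl{v}$.

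The main obstacle is the sharp \emph{quadratic} bound when $m = 1$, where the ancestor length need not shrink under a single desubstitution step. The classical treatment, going back to Salomaa and Soittola, classifies each letter $a$ as \emph{bounded} (if $\length{h^k(a)}$ stays bounded in $k$) or \emph{growing}, with $\length{h^k(a)} = \Theta(k^d \alpha^k)$ in the latter case for some $d \in \nat$ and $\alpha \geq 1$. New length-$n$ factors of $\seq{v}$ can arise only near the boundary between the images of two growing letters in some prefix of $\seq{v}$; counting such boundary events, stratified by growth type, delivers the $O(n^2)$ bound. The polynomial-growth case $\alpha = 1$ is the most delicate and is responsible for the quadratic rather than linear leading term.
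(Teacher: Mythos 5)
First, note that the paper does not prove Proposition~\ref{prop:dol:quad} at all: it is imported from the literature with a citation to Ehrenfeucht, Lee and Rozenberg, so there is no in-paper argument to compare yours against; I can only assess your sketch on its own terms. Your reduction from \cdol{} to \dol{} via the coding is correct, and the desubstitution set-up (ancestor plus offset, $\length{u}\le n/m+O(1)$, and $\subwcompl{\seq{v}}{n}\le M\cdot\subwcompl{\seq{v}}{\lceil n/m\rceil+O(1)}$) is the right classical starting point.

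However, there is a genuine gap at the step that is supposed to produce the \emph{quadratic} bound, in both of your cases. For $m\ge 2$, iterating your inequality gives $\subwcompl{\seq{v}}{n}=O(n^{\log_m M})$; this is polynomial, but the exponent $\log_m M$ exceeds $2$ whenever $M>m^2$, so this case is not actually finished --- you still need the boundary/straddling argument (take the minimal $k$ with $v$ a factor of $h^k(ab)$ for a length-$2$ factor $ab$, so that $v$ is determined by $ab$, $k$ and a cut position, giving at most $|\aalph|^2\cdot n\cdot(\text{range of }k)$ factors) to get down to $O(n\log n)$ here. For $m=1$ you correctly identify that this boundary counting, stratified by the growth type $\Theta(k^d\alpha^k)$ of letters, is the heart of the matter, but you only assert that it ``delivers the $O(n^2)$ bound'' without carrying out the count; in particular the claim that new factors arise only near boundaries between images of \emph{growing} letters needs the bounded-letter analysis (blocks of bounded letters sandwiched between growing ones contribute a uniformly bounded amount), and the bound on the range of relevant $k$ --- which is $O(\log n)$ in the exponential case and $O(n)$ in the polynomial case, whence the quadratic term --- is exactly the content of the theorem, not a remark one can wave at. A smaller issue: for erasing morphisms, ``discarding the mortal letters'' changes the generated word (mortal letters do occur in $h^\omega(s)$), and relating the two complexities is itself non-trivial --- erasing a letter is precisely the kind of operation this paper shows need not preserve the class for \pdol{} words --- so that reduction also needs justification.
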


We first consider an example of an erasing \pdol{} system.
\begin{example}\label{ex:erasin:pdol}
  Let $\mcl{H} = \tup{\nalph{3},H,0}$ with $H = \tup{h_0,h_1,h_2}$
  defined for all $b\in\nalph{3}$ as follows, where addition runs modulo~$3$:
  \begin{align*}
    h_0(b) & = b (b+1) (b+2) &&& 
    h_1(b) & = \wrdemp &&&
    h_2(b) & = b + 2 
  \end{align*}
  Then $\mcl{H}$ is productive (by Proposition~\ref{prop:prod:loc:cat})
  and generates the word
  \begin{align*}
    H^\omega(0) = 0121120101221201120212010120201001210 \cdots 
  \end{align*}
\end{example}

\begin{definition}\label{def:pdol:uniform}
  Let $\mcl{H} = \tup{\aalph,\tup{h_0,\ldots,h_{p-1}},s}$
  be a \pdol{} system.
  We say $\mcl{H}$ is \emph{locally uniform} if 
  every morphism $h_i$ is uniform, i.e,
  if for all $i \in \nalph{p}$ there is $k_i\in\nat$ 
  such that $k_i = \length{h_i(b)}$ for all $b\in\aalph$.
  We say $\mcl{H}$ is \emph{(globally) uniform} 
  if, for some $k \in \nat$, each $h_i$ is $k$-uniform ($i\in\nalph{p}$).
\end{definition}
Obviously, a globally $k$-uniform \pdol{} system is productive if and only if $k \ge 2$.
For locally uniform systems the condition is formulated as follows,
and is easy to check.
\begin{proposition}\label{prop:prod:loc:cat}
  Let\ $\mcl{H} = \tup{\aalph,\tup{h_0,\ldots,h_{p-1}},w}$ be a locally uniform \pdol{} system,
  where $h_i$ is $k_i$\nb-uniform. 
  Let $s(n)$ be defined by
  $s(0) = 0$ and $s(n+1) = s(n) + k_i$ with 
  $i \equiv n \pmod{p}$.
  Then $\mcl{H}$ is productive if and only if
  $s(n) \gt n$ for all $n \ge \length{w}$.
\end{proposition}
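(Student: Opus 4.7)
The plan is to reduce productivity to a question about the length sequence $n_j \coloneqq \length{H^j(w)}$ and then exploit monotonicity of the auxiliary function~$s$.

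The first step is to observe that, because $\mcl{H}$ is locally uniform, $\length{H(u)}$ depends only on $\length{u}$ and on the morphism index at which $u$ starts. Concretely, starting from index~$0$, for any $u = a_0 a_1 \cdots a_{n-1} \in \wrd{\aalph}$ one has
\begin{align*}
  \length{H(u)} \;=\; \sum_{i=0}^{n-1} \length{h_{i\bmod p}(a_i)} \;=\; \sum_{i=0}^{n-1} k_{i\bmod p}.
\end{align*}
A straightforward induction on $n$ using the recurrence $s(n+1) = s(n) + k_{n\bmod p}$ shows that this sum equals $s(n)$. Hence, applied to the word $H^j(w)$ (which starts at index $0$ in the map~$H$), one obtains the key identity
\begin{align*}
  n_{j+1} \;=\; \length{H^{j+1}(w)} \;=\; s(\length{H^j(w)}) \;=\; s(n_j),
\end{align*}
with $n_0 = \length{w}$. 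Moreover, since $w \wrdle H(w)$ implies $H^j(w) \wrdle H^{j+1}(w)$ for every $j$ (by Lemma~\ref{lem:H:rec} applied to the suffix added at each stage), productivity of $\mcl{H}$ is equivalent to $n_j \to \infty$.

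Next I would record the only structural fact needed about $s$: since each $k_i \ge 0$, the function $s \funin \nat \to \nat$ is non-decreasing.

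With these ingredients both directions are short. For the ``if'' direction, assume $s(n) > n$ for all $n \ge \length{w}$. Then $n_0 = \length{w}$ and, inductively, if $n_j \ge \length{w}$ then $n_{j+1} = s(n_j) > n_j \ge \length{w}$, so $(n_j)_{j\in\nat}$ is a strictly increasing sequence of natural numbers and therefore tends to infinity. For the ``only if'' direction I argue contrapositively: suppose there exists some $n^* \ge \length{w}$ with $s(n^*) \le n^*$. By monotonicity of $s$, for every $n \le n^*$ we have $s(n) \le s(n^*) \le n^*$. Since $n_0 = \length{w} \le n^*$, a trivial induction on~$j$ gives $n_j \le n^*$ for all~$j$, so $(n_j)$ is bounded and $\mcl{H}$ is not productive.

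No step poses a real obstacle; the only subtlety worth being careful about is that the characterization is stated for \emph{all} $n \ge \length{w}$, not only for the values actually attained by $(n_j)$ — one might worry that a ``bad'' $n^*$ could be skipped by the trajectory, but monotonicity of $s$ precisely rules this out, because the trajectory cannot jump over $n^*$ without first violating $s(n) \le n^*$ for some $n \le n^*$.
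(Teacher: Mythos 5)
Your proof is correct, but it is organized differently from the paper's. You track the orbit of the length function under full iterations of $H$: setting $n_j = \length{H^j(w)}$, you derive $n_{j+1} = s(n_j)$ and then translate the universally quantified condition ``$s(n) \gt n$ for all $n \ge \length{w}$'' into unboundedness of this orbit, which costs you an extra (easy) monotonicity argument for $s$ to rule out the orbit skipping over a bad value $n^*$. The paper instead defines a position-indexed, letter-by-letter production sequence $w_{\length{w}}, w_{\length{w}+1}, \ldots$ in which $w_{n+1}$ appends $h_{n \bmod p}(w_n(n))$ to $w_n$; there $\length{w_n} = s(n)$ holds for every $n \ge \length{w}$, so the condition $s(n) \gt n$ is read off directly as ``position $n$ is produced before it is consumed'' and no monotonicity of $s$ is needed. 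What each approach buys: the paper's diagonal sequence matches the quantifier structure of the statement exactly, but it silently assumes that this lazy evaluation converges to $H^\omega(w)$; your version works directly with the definition $H^\omega(w) = \lim_{j} H^j(w)$ and only needs the prefix chain $H^j(w) \wrdle H^{j+1}(w)$, which you correctly justify via Lemma~\ref{lem:H:rec}. Both arguments are complete; yours is arguably closer to first principles, at the price of the small detour through monotonicity.
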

\begin{proof}
  The word $H^\omega(w)$ can be defined as the limit 
  of the 
  sequence 
  \[ w_{\length{w}}, w_{\length{w}+1}, w_{\length{w}+2},\ldots \] 
  of finite words
  defined for $n \ge \length{w}$ by
  \begin{align*}
    w_{\length{w}} &= H(w) \\
    w_{n+1} &= 
    \begin{cases}
      w_{n} &\text{if $n \ge \length{w_n}$}\\
      w_{n}\,h_i(w_{n}(n)) &\text{if $n < \length{w_n}$ and $n \equiv i \pmod{p}$}
    \end{cases}
  \end{align*}
  We have $\length{w_{\length{w}}} = s(\length{w})$
  and by induction we get
  $\length{w_n} = s(n)$ for every $n \ge \length{w}$.
  The limit $\lim_{n\to\infty} w_n$ is infinite if and only if
  we never get to the clause $n \ge \length{w_n}$, 
  which holds in turn if and only if $s(n) > n$ for all $n \ge \length{w}$.
\end{proof}

\begin{example}\label{ex:arshon}%
  Let $\aalph = \enumset{0,1,2}$, $H = \tuple{h_0,h_1}$ 
  with morphisms $h_0,h_1 \funin \wrd{\aalph} \to \wrd{\aalph}$ 
  for all $a\in\aalph$ defined by
  \begin{align*}
    h_0(a) & = a (a+1) (a+2)
    &&&
    h_1(a) & = (a+2) (a+1) a
  \end{align*}
  (with addition modulo~$3$).
  Then the \pdol{} system $\tuple{\aalph,H,0}$ generates the word
  \begin{align*}
        H^\omega(0) 
        = 0 1 2 0 
          2 1 2 0 
          1 2 1 0 
          2 0 1 0 
          2 1 2 0 
          1 2 1 0 
          1 2 0 1 
          0 2 1 2 
          0 \cdots
  \end{align*}
  This is the square-free Arshon word~\cite{arsh:1937} (of rank~$3)$,
  which Berstel proved to be an example of a \cdol{} word 
  that is not a \dol{} word~\cite{bers:1980};
  see S\'e\'ebold~\cite{seeb:2003} for a generalization.
  That $H^\omega(\msf{L})$ can indeed be defined
  as a \cdol{} word follows from Proposition~\ref{prop:pdol:uniform}.
\end{example}

It is not hard to see that, when a word~$\seq{u}$ is 
generated by a (globally) $k$\nb-uniform \pdol{} system,
it is $k$\nb-automatic~\cite{allo:shal:2003},
i.e., $\seq{u}$ 
is the image of a coding 
of the fixed point of a $k$\nb-uniform morphism. 
\begin{proposition}\label{prop:pdol:uniform}
  Let $k \ge 2$, and $\mcl{H} = \tup{\aalph,H,s}$ a $k$\nb-uniform \pdol{} system.
  Then $H^\omega(s)$ is $k$\nb-automatic. 
\end{proposition}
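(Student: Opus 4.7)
The plan is to fold the tuple $H = \tup{h_0,\ldots,h_{p-1}}$ into a single $k$-uniform morphism on an enlarged alphabet that tags each letter with its \emph{intended morphism index}. Concretely, set $\Sigma' = \aalph \times \nalph{p}$ and define $\tilde h \funin \wrd{\Sigma'} \to \wrd{\Sigma'}$ on generators by
\[
  \tilde h(\tup{a,i}) \;=\; \tup{b_0,\, ki \bmod p}\,\tup{b_1,\, (ki{+}1)\bmod p}\,\cdots\,\tup{b_{k-1},\, (ki{+}k{-}1)\bmod p},
\]
where $h_i(a) = b_0 b_1 \cdots b_{k-1}$. Take as seed the annotated word $s' = \tup{s_0,0}\,\tup{s_1,\,1 \bmod p}\,\cdots\,\tup{s_{m-1},\,(m{-}1) \bmod p}$ with $m = \length{s}$, and let $\tau \funin \Sigma' \to \aalph$ be the forgetful coding $\tup{a,i} \mapsto a$.

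The key invariant to establish is: in every iterate $\tilde h^n(s')$, the letter at position $j$ carries the tag $j \bmod p$. This reduces to the elementary observation $kj \equiv k(j \bmod p) \pmod{p}$, and is precisely what makes $\tilde h$ well-defined on the quotiented alphabet $\Sigma'$. A short induction on $n$ using this invariant then yields $\tau(\tilde h^n(s')) = H^n(s)$, and hence in the limit $\tau(\tilde h^\omega(s')) = H^\omega(s)$. The \pdol{} condition $s \wrdle H(s)$ forces $h_0(s_0)$ to start with $s_0$, so the single letter $a := \tup{s_0,0}$ satisfies $\tilde h(a) = a\,\cdots$; combined with $\length{\tilde h^n(a)} = k^n \to \infty$ (since $k \ge 2$), this makes $\tilde h$ prolongable on $a$, and one checks $\tilde h^\omega(a) = \tilde h^\omega(s')$ by noting that each is an infinite prefix of the other.

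Putting these together, $H^\omega(s) = \tau(\tilde h^\omega(a))$ is exhibited as the image under a coding of a fixed point of a $k$-uniform morphism starting from a single letter, which is exactly the characterization of $k$-automaticity recalled just before the proposition. The only delicate point is the well-definedness of $\tilde h$, which is guaranteed by the modular-arithmetic invariant above; the simulation induction and the reduction from a fixed point over a word to one over a single letter are then routine book-keeping.
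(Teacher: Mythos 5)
Your construction is exactly the paper's: the paper defines the same $k$\nb-uniform morphism $g$ on $\nalph{p}\times\aalph$ (pairs written in the other order), seeds it with the same index-annotated word, proves the same position-tag invariant by induction on $n$, and recovers $H^\omega(s)$ via the forgetful coding. Your additional remarks on well-definedness and on reducing the annotated seed word to a single prolongable letter are correct book-keeping that the paper leaves implicit, so the proposal is correct and essentially identical in approach.
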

\begin{proof}
  Let $H = \tup{h_0,\ldots,h_{p-1}}$, 
  where every~$h_i$ is $k$\nb-uniform.
  We define the ($k$\nb-uniform) 
  morphism $g \funin \nalph{p} \times \aalph \to \nalph{p} \times \aalph$\, by 
  \begin{align*}
    g(\pair{i}{a}) & = \pair{ki}{b_0} \pair{ki+1}{b_1} \cdots \pair{ki+k-1}{b_{k-1}}
  \end{align*}
  where addition in the first entries runs modulo~$p$, 
  and for $j \in \nalph{k}$, $b_j \in \aalph$ is such that 
  $h_i(a) = b_0 b_1 \cdots b_{k-1}$. 
  Let $s = s_0 s_1 \cdots s_{q-1}$, 
  $t = \pair{0}{s_0}\pair{1}{s_1}\cdots\pair{q-1}{s_{q-1}}$,
  and $\seq{u} = H^\omega(s)$.
  Then 
  \begin{align*}
    g^n(t) = \pair{0}{\seq{u}(0)} \pair{1}{\seq{u}(1)} \cdots \pair{qk^n-1}{\seq{u}(qk^n-1)}
  \end{align*}
  follows by induction on $n$. Hence $\tau(g^\omega(t)) = \seq{u}$
  with $\tau$ the coding defined by $\tau(\pair{i}{a}) = a$.
\end{proof}

One might wonder whether also locally uniform, productive \pdol{} systems
always generate morphic words. Examples~\ref{ex:lepisto} and \ref{ex:toeplitz}
show that this is not the case.

\begin{example}[\cite{lepi:1993}]\label{ex:lepisto}
  Define the word $F_p \in \str{\{0,1\}}$ for every $p \ge 2$
  by $F_p = H^\omega(0)$ where 
  $\tup{\{0,1\},H,0}$
  with $H = \tup{h_0,\ldots,h_{p-1}}$ is a \pdol{} system,
  and $h_i$ are morphisms defined by
  \begin{align*}
    h_0 : 
    \begin{cases}
      0 \to 01 \\
      1 \to 00
    \end{cases}
    &&
    h_i :
    \begin{cases}
      0 \to 1 \\
      1 \to 0
    \end{cases}
    & \text{for $i \in \{1,\ldots,p-1\}$}
  \end{align*}
  For example, the word $F_3$ starts like this: 
  \begin{align*}
      01 0 1 00 1 1 00 0 1 01 1 0 01 0 0 01 1 0 01 1 1 00 0 1 01 0 0 00 1 1 01 0 1 00 1 1 \cdots
  \end{align*}
  Lepist\"{o}~\cite{lepi:1993} proves that $F_p$ has more than quadratic subword complexity,
  for every $p \ge 2$.
  Hence, with Proposition~\ref{prop:dol:quad}, 
  these \pdol{} words~$F_p$ cannot be \cdol{} words.
  We note that, conversely, the existence of \cdol{} words 
  that are not \pdol{} words was shown in~\cite{culi:karh:1992}.
\end{example}

\begin{example}[\cite{cass:karh:1997}]\label{ex:toeplitz}
  A Toeplitz word~\cite{jaco:kean:1969} over an alphabet $\aalph$ 
  is generated by a seed word $u \in \aalph\wrd{(\aalph\cup\enumset{\gap})}$ 
  with $\gap\not\in\aalph$,
  as follows.
  Start with the periodic $u^\omega$ 
  and then replace its subsequence of~$\gap$'s by the sequence itself. 
  For example $u = 12\gap\gap\gap$ generates the infinite word 
  $T(u) = 12 121 12 211 12 221 \cdots$. 
  Cassaigne and Karhum\"{a}ki~\cite{cass:karh:1997} show that all 
  Toeplitz words are \pdol{} words; 
  e.g., $T(u) = H^\omega(1)$ 
  where $H = \tuple{h_0,h_1,h_2}$ and
  $h_0(a) = 12a$ and $h_1(a) = h_2(a) = a$ for all $a\in\enumset{1,2}$.
  Moreover, from~\cite[Theorem~5]{cass:karh:1997} 
  it follows that $\subwcompl{T(u)}{n} \in \Theta(n^r)$ 
  with $r = \frac{\log 5}{{\log 5} - {\log 3}} \simeq 3.15066$, 
  thus forming an alternative proof of what was established in~\cite{lepi:1993}:
  there are \pdol{} words that are not \cdol.
\end{example}

\section{Fractran for Computing Streams}\label{sec:fractran}
Fractran~\cite{conw:1972,conw:1987} is a universal 
programming language 
invented by John Horton Conway.
The simplicity of its execution algorithm,
based on the unique prime factorization of integers,
makes Fractran ideal for coding it into other formalisms.

A Fractran program~$\aprg$ is a finite list of fractions 
\begin{align}
  \aprg = \iafrac{1},\ldots,\iafrac{k} 
  \label{eq:aprg}
\end{align}
with $\ianum{i},\iaden{i}$ positive integers.
Let $\iafrc{i} = \iafrac{i}$.
The action of $\aprg$ on an input integer~$N \ge 1$ is to
multiply $N$ by the first `applicable' fraction $\iafrc{i}$, 
that is, the fraction $\iafrc{i}$ with $i$ the least index 
such that the product $N' = N \cdot \iafrc{i}$ is an integer again, 
and then to continue with $N'$.
The program halts if there is no applicable fraction for the current integer $N$.

For example, consider the program 
\begin{align*}
  \aprg = \frac{5}{2\cdot 3} ,\, \frac{1}{2} ,\, \frac{1}{3} 
\end{align*}
and the run of $\aprg$ on input $N = 2^3 3^5$\,:
\begin{align*}
  2^3 3^5 
  & \to 2^2   3^4   5^1 
    \to 2^1   3^3   5^2
    \to 2^0   3^2   5^3 
    \to 2^0   3^1   5^3
    \to 2^0   3^0   5^3
    \,.
\end{align*}
Note that each multiplication by $\frac{5}{6}$ 
decrements the exponents of $2$ and $3$ while
incrementing the exponent of $5$.
Once $\frac{5}{6}$ is no longer applicable,
i.e., when one of the exponents of $2$ and $3$ 
in the prime factorization of the current integer $N$ equals $0$,
the other is set to~$0$ as well.
Hence, executing $\aprg$ on $N = 2^a\, 3^b$ halts after $\max(a,b)$ steps with $5^{\min(a,b)}$.

Thus the prime numbers that occur as factors in the numerators and denominators of a Fractran program 
can be regarded as registers, and if the current working integer is $N = 2^a \, 3^b \, 5^c \, \ldots$
we can say that register~2 holds $a$, register~3 holds $b$, and so on.

The real power of Fractran, however, comes from the use of prime exponents as \emph{states}. 
To explain this, we temporarily let programs consist of multiple lines of the form
\begin{align} 
  \astate \col \iafrac{1} \to \iastate{1} ,\, \iafrac{2} \to \iastate{2} ,\, \ldots ,\, \iafrac{m} \to \iastate{m} 
  \label{eq:fractran:program:line}
\end{align}
forming the instructions for the program in state $\astate$:
multiply $N$ with the first applicable fraction $\iafrac{i}$
and proceed in state~$\iastate{i}$,
or terminate if no fraction is applicable. 
We call the states $\alpha_1,\ldots,\iastate{m}$ in~\eqref{eq:fractran:program:line}
the \emph{successors} of $\astate$, and we say a state is \emph{looping} if it is its own successor.

For example, the program~$P_{\msf{add}}$ given~by the lines
\begin{align*}
  \alpha & : \frac{2\cdot 5}{3} \to \alpha ,\, \frac{1}{1} \to \beta
  &
  \text{and}
  && 
  \beta & : \frac{3}{5} \to \beta
\end{align*}
realizes addition; running $P_{\msf{add}}$ in state $\astate$ on $N = 2^a 3^b$ 
terminates in state $\beta$ with $2^{a+b} 3^b$.

A program with $n$ lines is called a \emph{Fractran-$n$ program}.
A flat list of fractions $\iafrc{1}, \ldots , \iafrc{k}$ 
now is a shorthand for the Fractran\nb-{1} program~%
$\astate : \iafrc{1} \to \astate ,\, \iafrc{2} \to \astate ,\, \ldots ,\, \iafrc{k} \to \astate$. 
Conway~\cite{conw:1987} explains how every Fractran\nb-$n$ program ($n \ge 2$) 
can be compiled into a Fractran\nb-$1$ program,
using the following steps:
\begin{enumerate}
  \item\label{item:compile:1}
    For every looping state~$\astate$, 
    introduce a `mirror' state $\mirror{\astate}$, 
    substitute $\mirror{\astate}$ for all 
    occurrences of $\astate$ in the right-hand sides of its program line, 
    and add the line
    \vspace*{-1ex}
    \begin{align*}
      \quad \quad \mirror{\astate} & : \frac{1}{1} \to \astate
    \end{align*}

  \item\label{item:compile:2}
    Replace state identifiers $\astate$ by `fresh' prime numbers.

  \item\label{item:compile:3}
    For every line of the form~\eqref{eq:fractran:program:line}
    append the following fractions:
    \begin{align*}
      \frac{\ianum{1} \cdot \iastate{1}}{\iaden{1} \cdot \astate} \sep 
      \frac{\ianum{2} \cdot \iastate{2}}{\iaden{2} \cdot \astate} \sep 
      \ldots \sep
      \frac{\ianum{k} \cdot \iastate{m}}{\iaden{m} \cdot \astate} 
    \end{align*}
    (preserving the order) to the list of fractions constructed so far.
\end{enumerate}
Let us illustrate these steps on the adder~$P_{\msf{add}}$ given above.
Step~\ref{item:compile:1} 
of splitting loops, results in
\begin{align*}
  \alpha & : \frac{2\cdot 5}{3} \to \mirror{\alpha} ,\, \frac{1}{1} \to \beta &&&
  \beta  & : \frac{3}{5} \to \mirror{\beta} 
  \\
  \mirror{\alpha} & : \frac{1}{1} \to \alpha &&&
  \mirror{\beta} & : \frac{1}{1} \to \beta\,.
\end{align*}
In step~\ref{item:compile:2}, we introduce `fresh' primes to serve as state indicators,
for example, $\tuple{\alpha,\mirror{\alpha},\beta,\mirror{\beta}} = \tuple{7,11,13,17}$.
Finally, step~\ref{item:compile:3},
we replace lines by fractions, 
to obtain the Fractran\nb-1 program
\begin{align*}
  \iaprg{\msf{add}} =
  \frac{2\cdot 5\cdot \mirror{\alpha}}{3\cdot\alpha} \sep 
  \frac{\alpha}{\mirror{\alpha}} \sep
  \frac{\beta}{\alpha} \sep
  \frac{3\cdot\mirror{\beta}}{5\cdot\beta} \sep
  \frac{\beta}{\mirror{\beta}}
  \,.
\end{align*}
Then indeed the run of $\iaprg{\msf{add}}$ on $2^a 3^b \alpha$ ends in $2^{a+b} 3^b \beta$.

For `sensible' programs any state indicator 
has value $0$ (`off') or $1$ (`on'),
and the program is always in exactly one state at a time.
Hence, if a program~$\aprg$ uses primes~$r_1,\dots,r_p$ 
for storage, and primes $\iastate{1},\ldots,\iastate{q}$ for control,
at any instant the entire \emph{configuration} of~$\aprg$ 
(= register contents + state)
is uniquely represented by the current working integer~$N$
\begin{align*}
  N = r_1^{e_1} \, r_2^{e_2} \, \cdots \, r_p^{e_p} \, \iastate{j}
\end{align*}
for some integers $e_1,\ldots,e_p \ge 0$ and $1 \le j \le q$.

The reason to employ two state indicators $\astate$ and $\mirror{\astate}$
to break self\nb-loops in step~\ref{item:compile:1}, 
is that each state indicator is consumed whenever it is tested, 
and so we need a secondary indicator $\mirror{\astate}$ to say 
``continue in the current state''. 
This secondary indicator $\mirror{\astate}$
is swapped back to the primary indicator $\astate$ in the next instruction, 
and the loop continues.


We now introduce some further notation.
For partial functions $g \funin A \pto B$ we write $\defd{g(x)}$ to indicate that $g$ 
is defined on $x \in A$, and $\undefd{g(x)}$ otherwise.
\begin{definition}\label{def:fractran:function}
  Let $\aprg = \iafrc{1},\ldots,\iafrc{k}$ be a Fractran program with 
  $\iafrc{i} = \iafrac{i} \in \mbb{Q}_{\gt 0}$.
  We define the partial function
  $\applicable_\aprg : \nat \pto \nat$
  which, given an integer $N \ge 1$, 
  selects the index of the first fraction 
  applicable to $N$,
  and is undefined if no such fraction exists, i.e.,
  \begin{align*}
    \applicable_\aprg(N) = \min\, \displset{i}{1 \le i \le k, \; N \cdot \iafrc{i} \in \nat}\,,
  \end{align*}
  where we stipulate $\undefd{(\min \emptyset)}$.
  We write $\applicable(n)$ for short
  when $\aprg$ is clear from the context.
  
  We overload notation and use $\aprg \funin \nat \pto \nat$ to denote
  the \emph{one-step computation} of the program~$\aprg$,
  defined for all $N \ge 1$ by
  \begin{equation*}
    \funap{\aprg}{N} = N\cdot\iafrc{\applicable(N)}
  \end{equation*}
  where it is to be understood that $\undefd{\funap{\aprg}{N}}$
  whenever $\undefd{\applicable(N)}$.
  The \emph{run of $\aprg$ on $N$} is the finite or infinite sequence
  $N,\aprg(N),\aprg^2(N),\ldots$.
  We say that $\aprg$ \emph{halts} or \emph{terminates} on $N$ 
  if the run of $\aprg$ on $N$ is finite.
\end{definition}

The halting problem for Fractran programs is undecidable.
\newcommand{\citecade}{\cite[Theorem~2.2]{endr:grab:hend:2009}}
\begin{proposition}[\citecade]
  The uniform halting problem for Fractran programs, 
  that is, deciding whether a program halts for every starting integer $N \ge 0$, 
  is $\cpi{0}{2}$\nb-complete.
\end{proposition}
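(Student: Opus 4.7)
The plan is to establish $\cpi{0}{2}$\nb-membership and $\cpi{0}{2}$\nb-hardness separately.

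For the upper bound, I would observe that the predicate ``$\aprg$ halts on $N$ within $t$ steps'' is decidable: one simply simulates $\aprg$ for $t$ steps and checks whether $\applicable_\aprg$ has become undefined along the way. Hence ``$\aprg$ halts on $N$'' is of the form $\exists t\, R(\aprg, N, t)$ with $R$ decidable, so the uniform halting assertion ``$\aprg$ halts on every $N \ge 1$'' has the shape $\forall N\, \exists t\, R(\aprg, N, t)$, placing it in $\cpi{0}{2}$.

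For hardness, I would reduce from the classical $\cpi{0}{2}$\nb-complete totality problem $\mrm{TOT} = \displset{e}{\defd{\varphi_e(n)}\text{ for all }n\in\nat}$ of partial recursive functions. Given an index $e$, I construct a Fractran program $\aprg_e$ that on input $N$ proceeds in two phases. Phase~I is a cleanup: using a dedicated state with fractions $\frac{1}{p} \to (\text{same state})$ for every prime $p \neq 2$ occurring in phase~II, followed by an exit fraction $\frac{1}{1} \to (\text{start of phase II})$, the program divides out every prime factor of $N$ distinct from $2$, reducing the working integer to $2^n$ where $n = v_2(N)$. Phase~II is a Fractran simulation of the $e$\nb-th Turing machine on input $n$; its existence follows from the Turing-completeness of Fractran~\cite{conw:1987} combined with the programming techniques developed earlier in this section (registers as prime exponents, states as prime indicators, mirror states to break self-loops). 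Because every $n \in \nat$ is the $2$\nb-adic valuation of some $N$ (take $N = 2^n$), $\aprg_e$ halts on every $N \ge 1$ if and only if $\defd{\varphi_e(n)}$ for every $n \in \nat$, i.e.\ if and only if $e \in \mrm{TOT}$. The construction is effective in~$e$, yielding a many-one reduction from $\mrm{TOT}$ to uniform halting.

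The main obstacle is phase~II: guaranteeing that the Fractran simulation halts on $2^n$ precisely when $\defd{\varphi_e(n)}$. This is exactly the content of Conway's Turing-completeness argument; once that is available, the subsequent compilation of the multi-line program produced by phases~I and~II into a flat Fractran\nb-$1$ program is routine via the three-step procedure recalled earlier in this section. The cleanup itself requires only finitely many fractions, one per prime used in the simulation, and the surjectivity of $v_2$ onto $\nat$ immediately yields the required equivalence between uniform halting of $\aprg_e$ and totality of $\varphi_e$.
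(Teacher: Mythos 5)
The paper does not prove this proposition itself; it imports it from \cite[Theorem~2.2]{endr:grab:hend:2009}, so your argument can only be judged on its own terms. The upper bound is fine, and reducing from the totality problem $\mrm{TOT}$ is the right idea, but the hardness direction has a genuine gap in the implication ``$\varphi_e$ total $\Rightarrow \aprg_e$ halts on every $N$''. The uniform halting problem quantifies over \emph{all} starting integers, and after compilation a starting integer is nothing but an arbitrary assignment of exponents to the register and state primes. Your cleanup phase is only entered when $N$ is divisible by the cleanup-state prime $\alpha_0$ (every compiled fraction carries a state prime in its denominator), so it does nothing for inputs such as $N = q\cdot 3^{c}$ with $q$ a phase-II state prime and $3$ a register prime: such an $N$ drops the program straight into the middle of the simulation in a ``garbage'' configuration. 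Totality of $\varphi_e$ says nothing about whether the simulated machine halts from configurations that are unreachable from any legitimate initial configuration, so $\aprg_e$ may diverge on such $N$ even when $e\in\mrm{TOT}$, and the biconditional required for a many-one reduction fails. (A further instance of the same problem: $N=\alpha_0^{2}\cdot 2^{n}$ leaves a spare copy of $\alpha_0$ that can re-trigger cleanup fractions in the middle of phase~II and corrupt the simulation.)

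Closing this gap is the actual content of the cited theorem: one must arrange that divergence from \emph{any} configuration already witnesses non-totality, for instance by simulating a machine in which every configuration is reachable from some initial configuration (a reversibility/backward-determinism argument), or by instrumenting the simulation with a budget register that forces every off-trajectory configuration to halt after finitely many steps. As written, your construction yields only the easy implication (uniform halting of $\aprg_e$ implies totality of $\varphi_e$), which is half of what the reduction needs.
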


\newcommand{\citelics}{\cite[Theorem~68]{grab:endr:hend:klop:moss:2012}}
\begin{proposition}[\citelics]\label{prop:fractran2}
  The input-$2$ halting problem for Fractran programs, 
  that is, deciding whether a program halts for the starting integer $N = 2$, 
  is $\csig{0}{1}$\nb-complete.
\end{proposition}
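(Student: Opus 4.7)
The plan is to prove $\csig{0}{1}$\nb-completeness in two parts.

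For \emph{membership}, the halting set is trivially semidecidable: on input~$\aprg$, simulate $\aprg$ step by step starting from $2, \aprg(2), \aprg^2(2), \ldots$ and accept as soon as the current working integer has no applicable fraction. This is a standard $\csig{0}{1}$ procedure.

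For \emph{hardness}, I would reduce from the general halting problem of Fractran programs on an arbitrary initial configuration $N \cdot \astate_0$, which is $\csig{0}{1}$\nb-complete by the Turing completeness of Fractran (via, e.g., a Minsky-machine simulation). The core idea is to bootstrap the configuration $N \cdot \astate_0$ out of the fixed starting integer $2$ using a single prepended fraction, while ensuring that this prepended fraction fires exactly once. Concretely, given $\aprg$, $N$, and initial state prime $\astate_0$, first rename the prime $2$ throughout the fractions of $\aprg$, and in the prime factorization of $N$, to a fresh prime $p'$ not occurring in $\aprg$, $N$, or $\astate_0$. This yields a program $\aprg''$ and an integer $N''$ whose run from $N'' \cdot \astate_0$ is in bijective step-by-step correspondence with that of $\aprg$ from $N \cdot \astate_0$, with the crucial property that no working integer reachable during $\aprg''$'s execution is ever divisible by~$2$. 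Now let $\aprg'$ be the Fractran program obtained by prepending the single fraction $\frac{N'' \cdot \astate_0}{2}$ to the list of fractions of $\aprg''$. On input $2$ this prepended fraction is the first applicable fraction, producing $N'' \cdot \astate_0$ in one step; thereafter, since no subsequent working integer is divisible by~$2$, it can never fire again, so $\aprg'$ simply simulates $\aprg''$ starting from $N'' \cdot \astate_0$. Hence $\aprg'$ halts on $2$ iff $\aprg$ halts on $N \cdot \astate_0$, which gives the reduction.

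The main obstacle is exactly this re-firing issue: without the renaming step, the prepended fraction $\frac{N \cdot \astate_0}{2}$ would fire spuriously every time $\aprg$'s genuine computation happens to produce an integer divisible by~$2$ (for example whenever prime~$2$ is used as a register), corrupting the simulation. Renaming the prime~$2$ uniformly to a fresh prime is the cleanest way to rule this out without altering the computation, since relabelling a single prime coordinate is a bijection on configurations that commutes with every fraction of $\aprg$, and therefore preserves halting behaviour.
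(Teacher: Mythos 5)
Your argument is correct. Note first that the paper itself offers no proof of this proposition: it is imported as Theorem~68 of the cited reference \cite{grab:endr:hend:klop:moss:2012}, so there is no internal proof to compare against line by line. That said, your reduction is essentially the standard one, and it is the very same bootstrapping trick the authors use later in the proof of Theorem~\ref{thm:letter}: there they invoke Remark~\ref{rem:primes:interchangeable} to rename primes so that $2$, $3$, $5$ become fresh, and then prepend a fraction $\frac{3\cdot 7}{2}$ that fires exactly once on input~$2$ because no later working integer is divisible by~$2$. Your membership direction is the obvious simulation, and your hardness direction is sound: after renaming $2$ to a fresh prime $p'$, every numerator of $\aprg''$ and the integer $N''\cdot\astate_0$ are odd, so by induction every working integer after the first step is odd and the prepended fraction $\frac{N''\cdot\astate_0}{2}$ can never re-fire; the renaming itself preserves halting behaviour by the coordinatewise bijection of Remark~\ref{rem:primes:interchangeable}. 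Two small points you should make explicit: (i) the prime $\astate_0$ must also be renamed if it happens to equal $2$ (or simply stipulate that the compilation of the source machine never uses $2$ as a state prime), and (ii) the $\csig{0}{1}$\nb-hardness of your source problem --- halting of a given Fractran program on a given initial configuration --- deserves a one-line justification via the Turing-machine (or Minsky-machine) simulation of \cite{endr:grab:hend:2009}, which the paper already relies on in Proposition~\ref{prop:frac:comp}.
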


\begin{remark}\label{rem:primes:interchangeable}
  In some sense it does not matter which prime numbers are used in a Fractran program.
  Let us make this precise.
  Let $p$ be a prime number, and $n$ a positive integer. 
  Then let $\padic{p}{n}$ denote the \emph{$p$\nb-adic valuation of $n$}  
  i.e., $\padic{p}{n} = a$ with $a\in\nat$ maximal such that~$p^a$ divides~$n$.
  For $\vec{p} = \tuple{p_1,p_2,\ldots,p_t}$ 
  we write $\padic{\vec{p}}{n}$ to denote $\tuple{\padic{p_1}{n},\padic{p_2}{n},\ldots,\padic{p_t}{n}}$.
  Let $\aprg$ be a Fractran program with $t$ distinct primes $\vec{p} = p_1, p_2, \ldots, p_t$, 
  let $\vec{q} = q_1, q_2, \ldots, q_t$ be any vector of $t$ distinct primes,
  and let $\bprg$ be the program obtained from $\aprg$ by uniformly substituting 
  the $q_i$'s for the $p_i$'s. 
  Then clearly, for all integers $M,N \ge 0$ 
  such that $\padic{\vec{p}}{M} = \padic{\vec{q}}{N}$,
  we have $\padic{\vec{p}}{\aprg^i(M)} = \padic{\vec{q}}{\bprg^i(N)}$ for all $i \ge 0$.
\end{remark}

We employ Fractran programs to define finite or infinite words over the alphabet $\{0,1\}$
by giving the primes $3$ and $5$ a special meaning,
namely for indicating output $0$ and $1$, respectively.
The construction easily generalizes to arbitrary finite alphabets.
\begin{definition}\label{def:fractran:output}
  Let $\aprg$ be a Fractran program.
  The finite or infinite \emph{word $W_{\!\aprg}$ computed by $\aprg$} is 
  $W_{\!\aprg} = W(2)$ where $W(N) = \emp$ if the sequence $\aprg(N),\aprg^2(N),\ldots$ 
  does not contain values divisible by $3$ or $5$,
  (note that this includes $W(N) = \emp$ if $\undefd{\aprg(N)}$),
  and otherwise 
  \begin{align*}
    W(N) = 
    \begin{cases}
      0\;W(\aprg(N)) &\text{if $3 \divides \aprg(N)$,} \\
      1\;W(\aprg(N)) &\text{if $5 \divides \aprg(N)$ and $3 \ndivides \aprg(N)$,} \\
      W(\aprg(N)) &\text{otherwise.}
    \end{cases}
  \end{align*}
\end{definition}

So the word~$W_{\!\aprg}$ is infinite if and only if 
$\aprg$ does not terminate on input $2$ and
the run of $\aprg$ on $N$
contains infinitely many numbers that are divisible by $3$ or $5$.
The infinite word can be read off from the infinite run by
dropping all entries neither divisible by~$3$ nor~$5$,
and then mapping the remaining entries to $0$ or $1$,
if they are divisible by $3$ or $5$ (and not $3$), respectively.

\begin{example}
  The Fractran program $\frac{3}{2},\frac{5}{3},\frac{3}{5}$
  gives rise to the computation $3,5,3,5,3,5,\ldots$,
  and hence computes the infinite word $010101\ldots$ of alternating bits.
\end{example}

\begin{proposition}\label{prop:frac:comp}
  Every (finite or infinite) computable, binary word
  can be computed by a Fractran program.
\end{proposition}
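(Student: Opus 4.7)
The plan is to bootstrap from the Turing completeness of Fractran and then bolt on an output channel matching Definition~\ref{def:fractran:output}. Fix a Turing machine $M$ that, given $n$ in any standard encoding, writes $\seq{w}(n) \in \{0,1\}$ in a distinguished register and halts; for a finite $\seq{w}$, $M$ instead halts in a designated \emph{end-of-word} state once $n \ge \length{\seq{w}}$. By Conway's theorem, $M$ can be simulated by a Fractran program $\aprg_0$: starting from a configuration that encodes the pair (counter $n$, blank scratch) as a working integer, $\aprg_0$ reaches a configuration whose state indicator is $\iastate{\msf{out},0}$, $\iastate{\msf{out},1}$, or $\iastate{\msf{end}}$, according to the behaviour of $M$.

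Next I would insert $\aprg_0$ into an outer loop that emits the computed bit and advances $n$. Reserve the primes $3$ and $5$ exclusively for emission; by Remark~\ref{rem:primes:interchangeable} we may rename all primes inside $\aprg_0$ to primes $\ge 7$, so that $3$ and $5$ do not appear anywhere in $\aprg_0$. For each $b \in \{0,1\}$ add two fractions: one that, in state $\iastate{\msf{out},b}$, multiplies the current integer by the emission prime ($3$ if $b = 0$, else $5$) while moving to a cleanup state $\iastate{\msf{clean},b}$; and one that, in state $\iastate{\msf{clean},b}$, divides by that same prime while handing control to an incrementer state that bumps the counter register, reinitialises the scratch, and re-enters the simulation. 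The state $\iastate{\msf{end}}$ is given no outgoing fraction, so the program halts cleanly for finite~$\seq{w}$.

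Correctness then reduces to a straightforward induction on loop iterations. Since $3$ and $5$ occur in the whole program only in the four emission/cleanup fractions, and these fire in strictly alternating pairs separated by runs of $\aprg_0$-steps that are coprime to $15$, the sequence $\aprg(2),\aprg^2(2),\ldots$ is divisible by $3$ or $5$ exactly at the emission steps, in the right order and with the correct divisor. Reading the bit sequence off via Definition~\ref{def:fractran:output} therefore recovers $\seq{w}$, and the sequence is finite precisely when $M$ reaches $\msf{end}$, i.e.\ precisely when $\seq{w}$ is finite.

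The chief step requiring care is the first one: invoking Turing completeness while keeping the primes $3$ and $5$ untouched inside the simulator $\aprg_0$. This is exactly what Remark~\ref{rem:primes:interchangeable} makes free: any Conway-style encoding of $M$ as a Fractran program uses only finitely many primes, and they can be uniformly renamed away from $\{3,5\}$ without affecting the computed trajectory of register values. Once that is done, the output wrapper and the correctness argument are entirely mechanical.
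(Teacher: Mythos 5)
Your proposal is correct and follows essentially the same route as the paper: invoke the Turing completeness of Fractran, use Remark~\ref{rem:primes:interchangeable} to rename the simulator's primes away from $\{2,3,5\}$, and emit bits by multiplying by $3$ or $5$ at output steps with an immediate cleanup division afterwards. The only cosmetic difference is that the paper performs the cleanup by placing the fractions $\frac{1}{3}$ and $\frac{1}{5}$ at the front of the program rather than via dedicated cleanup states, and it does not spell out the outer per-bit loop, but the underlying construction and correctness argument are the same.
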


\begin{proof}
  In \cite{endr:grab:hend:2009} it is shown that Fractran programs
  can simulate any Turing machine computation.
  By Remark~\ref{rem:primes:interchangeable}
  we may assume that this translation does not employ the primes $\{2,3,5\}$.
  Then a straightforward adaptation of the proof in \cite{endr:grab:hend:2009}
  yields the claim:
  we multiply the fractions corresponding to the Turing machine
  generating an output $0$ or $1$ by the primes $3$ or $5$, respectively, 
  and make sure the thus introduced factor $3$ or $5$
  is removed in the next step by putting fractions $\frac{1}{3}$ and $\frac{1}{5}$ in front of the program.
\end{proof}

We define a Fractran\nb-$n$ program and compile it to a Fractran\nb-$1$ program~$\iaprg{\BIN}$ 
which computes an infinite word that has every finite binary word as one of its factors.
For this we use the bijective `$\msf{z}$\nb-representation' defined as follows.
\begin{definition}\label{def:BIN}
  Let $\aalph = \enumset{0,1}$. 
  For all $n \in \nat$ and $w\in\wrd{\aalph}$, we define $\zrep{n} \in \wrd{\aalph}$
  and $\zval{w} \in \nat$ by
  \begin{align*}
    \zrep{0} & = \wrdemp &&&
    \zval{\wrdemp} & = 0 \\
    \zrep{2n+1} & = 0\zrep{n} &&&
    \zval{0w} & = 2\zval{w}+1 \\
    \zrep{2n+2} & = 1 \zrep{n} &&&
    \zval{1w} & = 2\zval{w}+2
  \end{align*}
  and we let $\BIN$ denote the infinite word 
  \begin{align*}
    \BIN & = \zrep{0} \zrep{1} \zrep{2} \cdots = 0 \, 1 \, 00 \, 10 \, 01 \, 11 \, 000 \, 100 \, 010 \, \cdots
  \end{align*}
\end{definition}

We will now define a Fractran program that computes~$\BIN$;
it will be the compilation of the following Fractran\nb-7 program:
\begin{align*}
  \st{1}  & \col \frac{\reg{2}}{\reg{3}} \to \st{5} \sep \frac{\reg{1}}{1} \to \st{2} &
  \out{0} & \col \frac{1}{1} \to \st{1} \\
  \st{2}  & \col \frac{\reg{2}\reg{3}}{\reg{1}} \to \st{2} \sep \frac{1}{1} \to \st{3} &
  \out{1} & \col \frac{1}{1} \to \st{1} \\
  \st{3}  & \col \frac{\reg{1}}{\reg{3}} \to \st{3} \sep \frac{1}{1} \to \st{4} \\
  \st{4}  & \col \frac{\reg{3}}{\reg{2}^2} \to \st{4} \sep 
                 \frac{1}{\reg{2}} \to \out{0} \sep
                 \frac{1}{\reg{3}} \to \out{1} \\
  \st{5}  & \col \frac{\reg{2}}{\reg{3}} \to \st{5} \sep \frac{1}{1} \to \st{4}
\end{align*}%
\noindent%
We first explain its workings, and then compile it into a Fractran\nb-$1$ program.
Let $e_1,e_2,e_3$ be the register contents of the current integer~$N$, i.e., 
such that $N = \reg{1}^{e_1} \reg{2}^{e_2} \reg{3}^{e_3}$.
In the run (= sequence of states) of the above program starting in~$\st{1}$ with $e_1 = e_2 = e_3 = 0$,
the subsequence of `output' states $\out{0}$ and $\out{1}$
corresponds to the infinite word $\BIN$.
The idea is that $\reg{1}$ holds the current value~$n$ 
for producing the factor $\zrep{n}$ of $\BIN$.
State~$\st{1}$ with $e_3 = 0$ increments~$e_1$, and the program proceeds in state $\st{2}$.
States~$\st{2}$ and~$\st{3}$ copy $e_{1}$ to $e_{2}$ and we continue in state~$\st{4}$.
State~$\st{4}$ subtracts~$2$ from $e_{2}$ while incrementing $e_{3}$ as long as possible
(corresponding to division of $\reg{2}$ by~$2$ and storing the quotient in~$e_{3}$), 
and then goes to output state $\out{0}$ if the remainder $e_{2} \ne 0$, 
and to output state $\out{1}$ after decrementing $e_{3}$, otherwise 
(corresponding to the definition of $\zrep{\cdot}$ above).
After any of the two output states, the program returns to state $\st{1}$.
State~$\st{1}$ with a non-zero quotient~$\reg{3}$ copies $e_{3}$ to $e_{2}$
using state $\st{5}$, and then continues with state~$\st{4}$.

We compile the above program into a flat list of fractions 
using the steps (i)--(iii) given above.
For the looping states $\st{2}$, $\st{3}$, $\st{4}$, and $\st{5}$,
we introduce mirror states $\sh{2}$, $\sh{3}$, $\sh{4}$, and $\sh{5}$.
Second, we assign the following prime numbers to the identifiers:
\[
  \scalebox{.87}{$%
    \begin{array}{*{14}{c}}
      \st{1} & \st{2} & \sh{2} & \st{3} & \sh{3} & \st{4} & \sh{4} & \st{5} & \sh{5} & \out{0} & \out{1} & \reg{1} & \reg{2} & \reg{3} \\
      2 & 7 & 11 & 13 & 17 & 19 & 23 & 29 & 31 & 3 & 5 & 37 & 41 & 43
    \end{array}%
  $}
\]%
Finally, with \ref{item:compile:3}, we obtain the following Fractran\nb-$1$ program:
\begin{align*}
  \iaprg{\BIN}
  \mathrel{=}\mbox{}
  &\frac{\reg{2}\st{5}}{\reg{3}\st{1}}, \frac{\reg{1}\st{2}}{\st{1}},
  \frac{\reg{2}\reg{3}\sh{2}}{\reg{1}\st{2}}, \frac{\st{3}}{\st{2}}, \frac{\st{2}}{\sh{2}},
  \frac{\reg{1}\sh{3}}{\reg{3}\st{3}}, \frac{\st{4}}{\st{3}}, \frac{\st{3}}{\sh{3}}, \\
  &\frac{\reg{3}\sh{4}}{\reg{2}^2\st{4}}, \frac{\out{0}}{\reg{2}\st{4}}, \frac{\out{1}}{\reg{3}\st{4}}, \frac{\st{4}}{\sh{4}},
  \frac{\reg{2}\sh{5}}{\reg{3}\st{5}}, \frac{\st{4}}{\st{5}}, \frac{\st{5}}{\sh{5}},
  \frac{\st{1}}{\out{0}},
  \frac{\st{1}}{\out{1}}
\end{align*}%
which is run on input $N = \st{1} = 2$
to force the program to start in the initial state.
We note that the huge number mentioned at the end of the introduction 
is the least common denominator of~$\iaprg{\BIN}$.

\begin{proposition}\label{prop:BIN}
  The word computed by~$\iaprg{\BIN}$ is $\BIN$.
  \qed
\end{proposition}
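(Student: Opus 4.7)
The plan is to decompose the proof into two independent parts. Part~(i): verify that the compilation of the Fractran\nb-$7$ program into $\iaprg{\BIN}$ preserves the output behaviour. This follows from Conway's general construction described in Section~\ref{sec:fractran}, together with three easy observations about $\iaprg{\BIN}$: every fraction carries exactly one state prime in its denominator (so fractions for state $\sigma$ can only fire when the current integer is divisible by that prime), the within\nb-state priorities are preserved by the order of appending in step~(iii), and the primes $3,5$ appear only as the numerators of the two fractions entering $\out{0},\out{1}$, after which the immediately following step $\st{1}/\out{0}$ or $\st{1}/\out{1}$ restores $\st{1}$. Hence Definition~\ref{def:fractran:output} reads off exactly one bit per visit to $\out{0}$ or $\out{1}$, and this sequence coincides with what the un\nb-compiled Fractran\nb-$7$ program emits.

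Part~(ii) is to verify that the Fractran\nb-$7$ program outputs $\BIN$ when started from the initial configuration $N=\st{1}=2$. Writing $(\sigma,e_1,e_2,e_3)$ for a configuration in state $\sigma$ with $\reg{i}$ holding $e_i$, I would establish two invariants: (A) starting from $(\st{1},n,0,0)$ the program reaches $(\st{4},n+1,n+1,0)$ without emitting any bit, and (B) starting from $(\st{4},n,m,0)$ with $m\ge 1$ the program emits exactly $\zrep{m}$ and returns to $(\st{1},n,0,0)$. Invariant~(A) is a routine trace through the $\reg{1}$\nb-increment in $\st{1}$, the $\st{2}$\nb-loop that subtracts $\reg{1}$ while adding both $\reg{2}$ and $\reg{3}$, and the $\st{3}$\nb-loop that pumps $\reg{3}$ back into $\reg{1}$. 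Invariant~(B) I would prove by strong induction on~$m$: the $\st{4}$\nb-loop divides $e_2$ by $2$ into $e_3$, ending either at $(n,1,k)$ with $m=2k+1$ so that $1/\reg{2}\to\out{0}$ fires (output $0$), or at $(n,0,k+1)$ with $m=2k+2$ so that $1/\reg{3}\to\out{1}$ fires (output $1$). In both cases control passes via the output state back to $\st{1}$ with $e_2=0$ and $e_3=k$; if $k=0$ we are already at $(\st{1},n,0,0)$ and the emitted bit equals $\zrep{m}$, while if $k\ge 1$ the first fraction of $\st{1}$ triggers the $\st{5}$\nb-loop that copies $e_3$ into $e_2$ and returns us to $(\st{4},n,k,0)$, where the induction hypothesis supplies the tail~$\zrep{k}$. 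Composing (A) and (B) along $n=0,1,2,\ldots$, the run starting from $(\st{1},0,0,0)$ emits $\zrep{1}\zrep{2}\zrep{3}\cdots$, which is $\BIN$ because $\zrep{0}=\wrdemp$.

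The main obstacle is bookkeeping rather than insight: one must verify that the counter stored in $\reg{1}$ is never disturbed by the subroutines executing in $\st{2}$--$\st{5}$ (only $\st{2}$ touches $\reg{1}$, and $\st{3}$ restores it), and that $\st{4}$ is never entered with $e_2=e_3=0$, which would halt the program. The latter is guaranteed by Invariant~(A) at first entry (where $e_2=n+1\ge 1$) and by the inductive step of (B) at re\nb-entry from $\st{5}$ (where $e_2=k\ge 1$). With these checks in place, the equality $W_{\!\iaprg{\BIN}}=\BIN$ follows directly from Definitions~\ref{def:BIN} and~\ref{def:fractran:output}.
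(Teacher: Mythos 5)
Your proposal is correct and follows exactly the route the paper intends: the paper states Proposition~\ref{prop:BIN} without proof, but the prose preceding the compiled program (the role of $\reg{1}$ as the counter, the copy via $\st{2}$/$\st{3}$, the division-by-$2$ loop in $\st{4}$, and the re-entry via $\st{5}$) is precisely what your invariants (A) and (B) formalize, and your Part~(i) is the standard correctness argument for Conway's compilation already sketched in Section~\ref{sec:fractran}. I checked the traces: (A) yields $(\st{4},n{+}1,n{+}1,0)$, (B) correctly matches the recursion $\zrep{2k+1}=0\zrep{k}$, $\zrep{2k+2}=1\zrep{k}$ with strong induction on $m$, and the non-halting check for $\st{4}$ is the right one, so the argument is complete.
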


\section[Productivity for Erasing PD0L Systems]{Productivity for Erasing \pdol{} Systems}\label{sec:prod}
We show that the problem of deciding productivity of erasing \pdol{} systems
is undecidable.
The idea is to encode a given Fractran program~$\aprg$ as 
a \pdol{} system~$\mcl{H}_{\aprg} = \tuple{\aalph,H,s}$
such that $H^\omega(s)$ is infinite if and only if $\aprg$ does not terminate on input~$2$.

We consider Fractran programs of the form $\frac{\ianum{1}}{\aden},\ldots,\frac{\ianum{k}}{\aden}$;
every program 
can be brought into this form 
by taking $\aden$ the least common denominator of the fractions.


\begin{definition}\label{def:pdol:prod}
  Let $\aprg = \frac{\ianum{1}}{\aden},\ldots,\frac{\ianum{k}}{\aden}$ be a Fractran program.
  We define the \pdol{} system $\mcl{H}_{\aprg} = \tuple{\balph,H,\symbs}$ 
  where 
  \[ \balph = \enumset{\symbs, \spa, \symba, \symbA, \symbb, \symbB} \] 
  and $H = \tuple{h_0,\ldots,h_{\aden-1}}$  
  consisting of morphisms $h_i \funin \wrd{\balph} \to \wrd{\balph}$ 
  defined for all $i\in\nalph{\aden}$ as follows: 
  \begin{align}
    h_i(\symbs) &= \symbs \,\nspa{\aden-1} \symba \symba \symbb \nspa{\aden-1} \\
    h_i(\spa)   &= \emp\\
    h_i(\symba) &= 
      \begin{cases}
      \symbA \nspa{\aden-1} &\text{if $i = \aden-1$}\\
      \emp &\text{otherwise}
      \end{cases} \\
    h_i(\symbb) &= \symbB \nspa{\aden-1-i}\\
    h_i(\symbA) &= 
    \begin{cases}
    \nsymba{\ianum{\applicable(i)}} & \text{if $\applicable(i)$ is defined}\\
    \emp &\text{otherwise}
    \end{cases}\\
    h_i(\symbB) &= 
    \begin{cases}
    \nsymba{i \cdot \frac{\ianum{\applicable(i)}}{\aden}}\,\symbb \nspa{\aden-1} & \text{if $\applicable(i)$ is defined}\\
    \emp &\text{otherwise}
    \end{cases}
  \end{align}%
\end{definition}

Before we show that productivity of the \pdol{} system $\mcl{H}_{\aprg}$
coincides with $\aprg$ running forever on input~$2$,
we give some intuition and an example to illustrate the working of $\mcl{H}_{\aprg}$.

The following trivial fact is useful to state separately.
\begin{lemma}\label{lem:remainder}
  Let $N,d,q,r \in \nat$ such that $N = q d + r$,
  and $\aprg$ a Fractran program.
  Then $\applicable_{\aprg}(N) = \applicable_{\aprg}(r)$.
  If moreover $b \in \nat$ divides $d$, 
  then $b \divides N$ \ if and only if \ $b \divides r$. 
  \qed
\end{lemma}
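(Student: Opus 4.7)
The plan is to unfold the definition of $\applicable_\aprg$ and exploit the assumption introduced at the start of Section~\ref{sec:prod}: every Fractran program can be normalised to have a uniform denominator, so we write $\aprg = \frac{\ianum{1}}{\aden},\ldots,\frac{\ianum{k}}{\aden}$. Under this normal form, a fraction $\iafrc{i} = \frac{\ianum{i}}{\aden}$ is applicable to a positive integer $M$ precisely when $\aden \divides M \cdot \ianum{i}$; making this equivalence explicit is the first step of the proof.

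For the first claim, I would substitute $N = q \aden + r$ into the divisibility condition: expanding gives $N \cdot \ianum{i} = q \aden \ianum{i} + r \cdot \ianum{i}$, and since $\aden$ divides the summand $q \aden \ianum{i}$, the condition $\aden \divides N \cdot \ianum{i}$ is equivalent to $\aden \divides r \cdot \ianum{i}$. Hence the set of indices $i$ for which $\iafrc{i}$ is applicable to $N$ coincides with the set of indices for which it is applicable to $r$; taking minima (with the convention $\undefd{(\min \emptyset)}$, so that both sides are simultaneously undefined when no fraction is applicable) yields $\applicable_\aprg(N) = \applicable_\aprg(r)$.

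For the second claim, from $b \divides \aden$ we get $b \divides q \aden$, and then $b \divides N$ if and only if $b \divides (N - q \aden) = r$. No real obstacle arises; the only point worth emphasising is that both parts rely crucially on $\aden$ being a common denominator of the whole program---without this normalisation, different fractions could carry different denominators and the uniform reduction modulo $\aden$ would no longer suffice to control applicability.
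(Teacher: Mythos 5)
Your proof is correct; the paper states this lemma as a ``trivial fact'' and omits the proof entirely (note the \verb|\qed| in the statement), and your argument---reducing applicability of $\frac{\ianum{i}}{\aden}$ to the condition $\aden \divides M\cdot\ianum{i}$ and cancelling the multiple $q\aden\ianum{i}$---is exactly the verification the authors had in mind. You are also right to flag that the statement implicitly relies on the normalisation to a common denominator introduced just before the lemma; without it the claim would be false.
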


Let $\aprg$ be a Fractran program with common denominator~$\aden$,
and (finite or infinite) run $N_0,N_1,N_2,\ldots$.
Let $q_i\in\nat$ and $r_i\in\nalph{\aden}$ such that $N_i = q_i d + r_i$, for all $i \ge 0$.
We let $x_n$ be the `contribution' of the iteration $H^{n+1}$, i.e.,  
$x_n$ is such that $H^{n+1}(\symbs) = H^{n}(\symbs) x_n$.
Then $H^\omega(\symbs) = \symbs x_0 x_1 x_2 \cdots$.
We will display $H^\omega(\symbs)$ in separate lines each corresponding to an~$x_n$.
The computation of the word $H^\omega(\symbs)$ proceeds in two alternating phases:
the transition from even to odd lines corresponds to division by $d$,
and the transition from odd to even lines corresponds to multiplication 
by the currently applicable fraction $\frac{\ianum{\applicable(N_i)}}{\aden}$. 
These phases are indicated by the use of lower- and uppercase letters,
that is,
$x_{2n} \in \{ \spa , \symba , \symbb \}$
and $x_{2n+1} \in \{ \spa , \symbA , \symbB \}$,
as can be seen from the definition of the morphisms.
Now the intuition behind the alphabet symbols 
(in view of the defining rules of the morphisms)
can be described as follows.
We use $\symbs$ as the starting symbol, and the symbol~$\spa$ is used 
to shift the morphism index of subsequent letters.

In every even line~$x_{2i}$ 
\begin{enumerate}
  \item 
    there is precisely one block of $\symba$'s;
    this block is positioned at morphism index~$0$ and is of length $N_i$,
    representing the current value $N_i$ in the run of $\aprg$;
  \item 
    $\symbb$ is a special marker for the end of a block of $\symba$'s, 
    so positioned at morphism index $r_i$, the remainder of dividing $N_i$ by $\aden$.
\end{enumerate}%
In every odd line~$x_{2i+1}$
\begin{enumerate}%
  \setcounter{enumi}{2}%
  \item 
    the number of $\symbA$'s corresponds to the quotient $q_i$, 
    and every occurrence of $\symbA$ is positioned at morphism index $r_i$\,;
  \item 
    $\symbB$ (also at morphism index $r_i$) 
    takes care of the multiplication of the remainder~$r_i$ 
    with $\frac{\ianum{\applicable(N_i)}}{\aden}$.
    Then $\applicable(N_i) = \applicable(r_i)$ ensures, via Lemma~\ref{lem:remainder}, 
    that the morphism can select the right fraction to multiply with.
\end{enumerate}

We illustrate the encoding by means of an example.
\begin{example}
  Consider the Fractran program $\frac{9}{2},\frac{5}{3}$, 
  or equivalently
  \begin{align*}
    \aprg = \frac{27}{6},\frac{10}{6}
  \end{align*}
  and its finite run $2,9,15,25$.
  Following Definition~\ref{def:pdol:prod} 
  we construct 
  the \pdol{} system $\mcl{H}_{\aprg} = \tuple{\balph,H,\symbs}$
  with $H = \tuple{h_0,\ldots,h_5}$ and
  \begin{align*}
    h_i(\symbs) &= \symbs \,\nspa{5} \symba \symba \symbb \nspa{5} \\
    h_0(\symba) &= \ldots = h_4(\symba) = \emp \\
    h_5(\symba) &= \symbA \nspa{5} \\
    h_i(\symbb) &= \symbB \nspa{5-k}\\
    h_0(\symbA) &= h_2(\symbA) = h_4(\symbA)= \nsymba{27}\\
    h_3(\symbA) &= \nsymba{10}\\
    h_1(\symbA) &= h_5(\symbA)= \emp\\
    h_0(\symbB) &= \symbb \nspa{5}\\
    h_2(\symbB) &= \nsymba{9} \symbb \nspa{5}\\
    h_4(\symbB) &= \nsymba{18} \symbb \nspa{5}\\
    h_3(\symbB) &= \nsymba{5} \symbb \nspa{5}\\
    h_1(\symbB) &= h_5(\symbB)= \emp
  \end{align*}%
  for $i \in \nalph{6}$.
  Then $H^\omega(\symbs)$ is finite and the stepwise 
  computation of this fixed point can be displayed as follows.
  To ease reading, we write below each letter its morphism index.
  Let $z_n$ denote the morphism index of $x_n$.
  Moreover, the word $H^\omega(\symbs) = \symbs x_0 x_1 \cdots$ 
  is broken into lines in such a way that every line $x_{n+1}$ is the image of the previous line~$x_n$ under $H_{z_n}$
  (except for the line~$x_0$, which is the tail of the image of~$\symbs$ under $H = H_0$).
  \begin{align*}
          &\xline{ \sat{\symbs}{0}  }\\
    x_0 = \mbox{} &\xline{ \sat{\nspa{5}}{1}\sat{\symba}{0}\sat{\symba}{1}\sat{\symbb}{2} \sat{\nspa{5}}{3} }\\
    x_1 = \mbox{} &\xline{ \sat{\symbB}{2} \sat{\nspa{3}}{3} }\\
    x_2 = \mbox{} &\xline{ \sat{\symba}{0}\sat{\symba}{1}\sat{\symba}{2}\sat{\symba}{3}\sat{\symba}{4}\sat{\symba}{5}\sat{\symba}{0}\sat{\symba}{1}\sat{\symba}{2}\sat{\symbb}{3} \sat{\nspa{5}}{4} }\\
    x_3 = \mbox{} &\xline{ \sat{\symbA}{3}\sat{\nspa{5}}{4}\sat{\symbB}{3} \sat{\nspa{2}}{4} }\\
    x_4 = \mbox{} &\xline{ \sat{\symba}{0}\sat{\symba}{1}\sat{\symba}{2}\sat{\symba}{3}\sat{\symba}{4}\sat{\symba}{5}\sat{\symba}{0}\sat{\symba}{1}\sat{\symba}{2}\sat{\symba}{3}\;\sat{\symba}{4}\sat{\symba}{5}\sat{\symba}{0}\sat{\symba}{1}\sat{\symba}{2}\sat{\symbb}{3} \sat{\nspa{5}}{4} }\\
    x_5 = \mbox{} &\xline{ \sat{\symbA}{3}\sat{\nspa{5}}{4}\sat{\symbA}{3}\sat{\nspa{5}}{4}\sat{\symbB}{3} \sat{\nspa{2}}{4} }\\
    x_6 = \mbox{} &\xline{ \sat{\symba}{0}\sat{\symba}{1}\sat{\symba}{2}\sat{\symba}{3}\sat{\symba}{4}\sat{\symba}{5}\sat{\symba}{0}\sat{\symba}{1}\sat{\symba}{2}\sat{\symba}{3}\;\sat{\symba}{4}\sat{\symba}{5}\sat{\symba}{0}\sat{\symba}{1}\sat{\symba}{2}\sat{\symba}{3}\sat{\symba}{4}\sat{\symba}{5}\sat{\symba}{0}\sat{\symba}{1}\;\sat{\symba}{2}\sat{\symba}{3}\sat{\symba}{4}\sat{\symba}{5}\sat{\symba}{0}\sat{\symbb}{1} \sat{\nspa{5}}{2} }\\
    x_7 = \mbox{} &\xline{ \sat{\symbA}{1}\sat{\nspa{5}}{2}\sat{\symbA}{1}\sat{\nspa{5}}{2}\sat{\symbA}{1}\sat{\nspa{5}}{2}\sat{\symbA}{1}\sat{\nspa{5}}{2}\sat{\symbB}{1} }\\
    x_8 = \mbox{} & \wrdemp
  \end{align*}%
\end{example}

Now we characterize the contribution of every iteration of $H$
in the construction of the word $H^\omega(2)$.
We employ the notations given in Lemma~\ref{lem:H:rec}.
\begin{lemma}\label{lem:prod:growth}
  Let $\aprg = \frac{\ianum{1}}{d},\ldots,\frac{\ianum{k}}{d}$,
  and $N \ge 1$. 
  Let $q\in\nat$ and $r\in\nalph{d}$ be such that 
  $N = qd + r$. Let $\Xspa = \nspa{d-1}$.
  Then we have
  \begin{align}
    H(\nsymba{N} \symbb \Xspa) = (\symbA \Xspa)^q \symbB \nspa{d-1-r} 
    \label{eq:lem:prod:growth:1}
  \end{align}
  of length $d(q+1) - r$.
  If, moreover, $\aprg(N)$ is defined, 
  then
  \begin{align}
    H_r((\symbA \Xspa)^q \symbB \nspa{d-1-r})
    & = \nsymba{\aprg(N)} \symbb \Xspa
    \label{eq:lem:prod:growth:2}
  \end{align}
  of length $\aprg(N) + d$.
\end{lemma}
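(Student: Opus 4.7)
The plan is to prove both identities by a direct letter-by-letter computation, keeping careful track of the morphism indices via the recursive formulation in Lemma~\ref{lem:H:rec}.

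For~\eqref{eq:lem:prod:growth:1}, I read the input word $\symba^{N}\,\symbb\,\nspa{d-1}$ with its $j$-th $\symba$ at absolute position $j$, hence at morphism index $j \bmod d$, for $j = 0, \ldots, N-1$. Writing $N = qd + r$ with $0 \le r < d$, the positions $d-1, 2d-1, \ldots, qd-1$---exactly $q$ of them---are those at morphism index $d-1$, and by the definition of the $h_i$ each such $\symba$ contributes $\symbA\,\Xspa$, while the other $\symba$'s and all $\spa$'s vanish. The terminal $\symbb$ sits at position $N$, i.e.\ at morphism index $r$, contributing $h_r(\symbb) = \symbB\,\nspa{d-1-r}$. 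Concatenating these contributions in order yields $(\symbA\,\Xspa)^{q}\,\symbB\,\nspa{d-1-r}$, of length $qd + 1 + (d-1-r) = d(q+1) - r$.

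For~\eqref{eq:lem:prod:growth:2}, I apply $H_r$ letter by letter to $(\symbA\,\Xspa)^{q}\,\symbB\,\nspa{d-1-r}$. Each of the $q$ occurrences of $\symbA$, as well as the single $\symbB$, sits at an absolute position that is a multiple of $d$, so under $H_r$ each is processed by $h_{r}$, while the interspersed $\spa$'s all vanish. Since $\aprg(N)$ is defined, $\applicable(N)$ exists, and Lemma~\ref{lem:remainder} gives $\applicable(r) = \applicable(N)$; writing $j$ for this common index, I obtain $h_r(\symbA) = \nsymba{\ianum{j}}$ and $h_r(\symbB) = \nsymba{r\,\ianum{j}/d}\,\symbb\,\Xspa$, so concatenation produces $\nsymba{q\,\ianum{j} + r\,\ianum{j}/d}\,\symbb\,\Xspa$.

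The exponent then collapses via
\begin{align*}
  q\,\ianum{j} + \frac{r\,\ianum{j}}{d}
  \;=\; \frac{(qd + r)\,\ianum{j}}{d}
  \;=\; \frac{N\,\ianum{j}}{d}
  \;=\; \aprg(N),
\end{align*}
yielding $\nsymba{\aprg(N)}\,\symbb\,\Xspa$ of length $\aprg(N) + 1 + (d-1) = \aprg(N) + d$. The only subtlety is to check that $r\,\ianum{j}/d \in \nat$, so that $h_r(\symbB)$ is well-defined; this follows because applicability of $\ianum{j}/d$ at $N$ forces $d \divides N\,\ianum{j}$, which combined with $N = qd + r$ gives $d \divides r\,\ianum{j}$. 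Overall the proof is mechanical once the index arithmetic is in place; the main (mild) obstacle is verifying in~\eqref{eq:lem:prod:growth:2} that every non-$\spa$ letter of the input indeed lands at morphism index $r$, after which the definitions of $h_r$ on $\symbA$ and $\symbB$ do the rest.
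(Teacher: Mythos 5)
Your proof is correct and follows essentially the same route as the paper's: the paper dispatches \eqref{eq:lem:prod:growth:1} by a one-line induction on $q$ and proves \eqref{eq:lem:prod:growth:2} by exactly the computation you give, applying Lemma~\ref{lem:remainder} to get $\applicable(r)=\applicable(N)$ and concluding via $\aprg(N)=q\,\ianum{\applicable(N)}+r\,\ianum{\applicable(N)}/d$. Your explicit position-by-position index bookkeeping and the well-definedness check for $h_r(\symbB)$ (which in fact is immediate from the definition of $\applicable(r)$) only add detail, not a different argument.
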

\begin{proof}
  Equation~\eqref{eq:lem:prod:growth:1} follows immediately by induction on $q$.
  To see that~\eqref{eq:lem:prod:growth:2} holds
  for $\defd{\aprg(N)}$,
  we note that $\applicable(N)$ is defined, 
  and so is $\applicable(r) = \applicable(N)$, 
  by Lemma~\ref{lem:remainder}.
  Hence we obtain
  \begin{align*}
    H_r((\symbA \Xspa)^q \symbB \nspa{d-1-r})
    & = (\nsymba{\ianum{\applicable(r)}})^q  
        H_r(\symbB \nspa{d-1-r}) \\
    & = (\nsymba{\ianum{\applicable(r)}})^q  
        \nsymba{r\cdot\frac{\ianum{\applicable(r)}}{d}} \symbb \Xspa  
  \end{align*}
  and we conclude by 
  $F(N) = N \cdot \frac{\ianum{\applicable(N)}}{d} 
        = q \cdot \ianum{\applicable(N)} + r \cdot \frac{\ianum{\applicable(N)}}{d}$.
\end{proof}

\begin{lemma}\label{lem:prod}
  For all Fractran programs~$\aprg$, 
  the \pdol{} system~$\mcl{H}_{\aprg}$ 
  is productive if and only if $\aprg$ does not terminate on input~$2$.
\end{lemma}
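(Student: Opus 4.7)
The plan is to simulate the Fractran run of $\aprg$ on input $2$ inside the iteration $s_n = H^n(\symbs)$, so that productivity of $\mcl{H}_\aprg$ reduces to nontermination of this run. I would write $N_0 = 2, N_1, N_2, \ldots$ for the (finite or infinite) run, decompose $N_k = q_k d + r_k$ with $r_k \in \nalph{d}$, and let $w_n$ denote the unique word with $s_{n+1} = s_n w_n$ (which exists since $s_n \wrdle s_{n+1}$). By Remark~\ref{rem:pdol:prod}, $\mcl{H}_\aprg$ is productive if and only if $w_n \neq \emp$ for every $n$, so it suffices to characterise the $w_n$.

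The core of the argument is a simultaneous induction on $k$ establishing, whenever $N_k$ is defined, that $w_{2k} = \nsymba{N_k}\symbb\nspa{d-1}$ (with the minor variant $w_0 = \nspa{d-1}\nsymba{N_0}\symbb\nspa{d-1}$ at $k = 0$) and $|s_{2k+1}| \equiv r_k \pmod{d}$; and whenever $\applicable(N_k)$ is additionally defined, $w_{2k+1} = (\symbA\nspa{d-1})^{q_k}\symbB\nspa{d-1-r_k}$ with $|s_{2k+2}| \equiv 0 \pmod{d}$. The inductive step is essentially direct invocation of Lemma~\ref{lem:prod:growth}: the first equation of that lemma supplies the transition $w_{2k}\to w_{2k+1}$ (applied with $H = H_0$, thanks to $|s_{2k}| \equiv 0 \pmod{d}$), and the second equation supplies $w_{2k+1}\to w_{2k+2}$ (applied with $H_{r_k}$, thanks to $|s_{2k+1}| \equiv r_k \pmod{d}$). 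The length identities $|w_{2k}| = N_k + d$ and $|w_{2k+1}| = d(q_k + 1) - r_k$ propagate the index congruences. For the base case $k = 0$, a direct expansion of $H(\symbs)$ gives $w_0$, and the leading $\nspa{d-1}$ in $w_0$ is erased on the next step, realigning the remaining $\nsymba{N_0}\symbb\nspa{d-1}$ to morphism index $0$ so that the same argument applies.

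Once the invariant is in place, the conclusion will be immediate. If $\aprg$ does not halt on $2$, the run is infinite and every $w_n$ is a non-empty word of one of the two stated forms, so $H^\omega(\symbs)$ is infinite and $\mcl{H}_\aprg$ is productive. Conversely, if $\aprg$ halts at some $N_k$, i.e., $\applicable(N_k)$ is undefined, then $w_{2k+1}$ is still produced (the first equation of Lemma~\ref{lem:prod:growth} makes no use of $\applicable$), but $\applicable(r_k) = \applicable(N_k)$ is undefined by Lemma~\ref{lem:remainder}, so every $\symbA$ and $\symbB$ in $w_{2k+1}$---all sitting at morphism index $r_k$---is mapped to $\emp$; hence $w_{2k+2} = \emp$ and a short further induction yields $w_n = \emp$ for all $n \ge 2k + 2$. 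Therefore $H^\omega(\symbs) = \symbs w_0 w_1 \cdots w_{2k+1}$ is finite.

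The main obstacle I anticipate is purely bookkeeping: verifying that the ``lowercase'' blocks $\nsymba{N_k}\symbb\nspa{d-1}$ always start at morphism index $0$ and the ``uppercase'' blocks $(\symbA\nspa{d-1})^{q_k}\symbB\nspa{d-1-r_k}$ start at morphism index $r_k$. Without this modular alignment Lemma~\ref{lem:prod:growth} cannot be invoked; once it is established the two halves of the lemma do all the work, and the equivalence drops out of the invariant.
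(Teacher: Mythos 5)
Your proposal is correct and follows essentially the same route as the paper's proof: the same invariant (lowercase blocks $\nsymba{N_k}\symbb\nspa{d-1}$ at morphism index $0$, uppercase blocks $(\symbA\nspa{d-1})^{q_k}\symbB\nspa{d-1-r_k}$ at index $r_k$), proved by the same induction with both halves of Lemma~\ref{lem:prod:growth} driving the two transitions, and the same treatment of the halting case via $\applicable(r_k)=\applicable(N_k)$ being undefined. The only cosmetic difference is that the paper absorbs the leading $\nspa{d-1}$ of $w_0$ into the prefix $\symbs\nspa{d-1}$ rather than letting it be erased in the next step, which is equivalent bookkeeping.
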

\begin{proof}
  Let $\aprg$ and $\mcl{H}_{\aprg}$ be as in Definition~\ref{def:pdol:prod}.
  Let $N_0,N_1,N_2,\ldots$ 
  be the finite or infinite run of $\aprg$ on $2$,
  i.e., $N_i = F^i(2)$, and let $t \in \nat \cup \enumset{\infty}$ 
  denote its length. 
  For all $i$ with $0 \le i \lt t$,
  let $q_i \in \nat$ 
  and $r_i \in \nalph{d}$ 
  be such that $N_i = q_i d + r_i$.
  
  We define 
  $x_n \in \wrd{\aalph}$
  and 
  $z_n\in \nalph{d}$ for all $n \ge 0$, as follows.
  Let $\Xspa = \nspa{d-1}$, $x_0 = \symba\symba\symbb \Xspa$, $z_0 = 0$,
  and, for $n \ge 1$, let $x_n$ and $z_n$ be such that 
  $H^{n+1}(\symbs) = H^{n}(\symbs) x_n$ and
  $z_n \equiv \length{H^{n}(\symbs)} \pmod{d}$.
  Then $H^\omega(\symbs) = \symbs \Xspa x_0 x_1 x_2 \cdots$, 
  and the factor $x_n$ is 
  at morphism index~$z_n$. 
  With Lemma~\ref{lem:H:rec} we then have
  \begin{align}
    x_{n} = H_{z_{n-1}}(x_{n-1}) &&& z_{n} \equiv z_{n-1} + \length{x_{n-1}} \pmod{d}
    \label{eq:lem:prod:H:rec}
  \end{align}
  for all $n \ge 1$.
  Now we prove by induction on $n \ge 0$ that
  \begin{align*}
    x_{n}   &= \nsymba{N_i} \symbb \Xspa 
    & z_{n} &= 0 && \text{if $n = 2i \lt 2t$,} \\
    x_{n}   &= (\symbA \Xspa)^{q_{i}} \symbB \nspa{d-1-r_{i}}
    & z_{n} &= r_{i} && \text{if $n=2i+1 \lt 2t$,} \\
    x_{n}   & = \wrdemp
    & z_{n} &= 0 && \text{if $n \ge 2t$.}
  \end{align*}
  The base case is immediate. 
  Let $n \gt 0$.
  If $n = 2i \lt 2t$ for some $i \lt t$, 
  then $N_i = F(N_{i-1})$ is defined, and 
  \(
    x_n = H_{z_{n-1}}(x_{n-1}) 
        = H_{r_{i-1}}((\symbA \Xspa)^{q_{i-1}} \symbB \nspa{d-1-r_{i-1}}) 
        = \nsymba{N_i} \symbb \Xspa
  \), 
  and 
  \(
    z_n \equiv z_{n-1} + \length{x_{n-1}} 
        \equiv r_{i-1} + d(q_{i-1}+1) - r_{i-1} 
        \equiv 0 \pmod{p}
  \),
  both by \eqref{eq:lem:prod:H:rec}, the induction hypothesis 
  and Lemma~\ref{lem:prod:growth}.
  If $n = 2i+1 \lt 2t$ for some $i \lt t$, then 
  \(
    x_{n} = H_{z_{n-1}}(x_{n-1}) 
          = H_{0}(\nsymba{N_i} \symbb \Xspa)
          = (\symbA \Xspa)^{q_i} \symbB \nspa{d-1-r_i}
  \), 
  and
  \(
    z_n \equiv z_{n-1} + \length{x_{n-1}}
        \equiv 0 + N_i + d 
        \equiv r_i \pmod{p}
  \),
  again by \eqref{eq:lem:prod:H:rec}, the induction hypothesis 
  and Lemma~\ref{lem:prod:growth}.
  Finally, if $n = 2t$, then 
  \(
    x_n = H_{z_{n-1}}(x_{n-1})
        = H_{r_{t-1}}((\symbA \Xspa)^{q_{t-1}} \symbB \nspa{d-1-r_{t-1}})
        = \wrdemp
  \),
  since $\aprg$ terminates on $N_{t-1}$ 
  (and so $\applicable(N_{t-1})$ and $\applicable(r_{t-1})$ are undefined),
  and $z_n \equiv z_{n-1} + \length{x_{n-1}} \equiv r_{t-1} + d(q_{t-1} + 1) - r_{t-1} \equiv 0$.
  Clearly, then also $x_n = \wrdemp$ and $z_n = 0$ for all $n \gt 2t$.
\end{proof}

Hence, by Lemma~\ref{lem:prod} and Proposition~\ref{prop:fractran2}, deciding productivity of \pdol{} systems is undecidable.
\begin{theorem}\label{thm:prod}
  The problem of deciding on the input of a \pdol{} system $\mcl{H}$
  whether $\mcl{H}$ is productive,
  is $\cpi{0}{1}$\vphantom{\raisebox{1pt}{$\Pi^0_2$}}\nb-complete.
  \qed
\end{theorem}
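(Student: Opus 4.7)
The plan is to establish $\cpi{0}{1}$\nb-membership and $\cpi{0}{1}$\nb-hardness separately, both by leveraging machinery already in place.

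For the upper bound, I would invoke Remark~\ref{rem:pdol:prod}, which gives a necessary and sufficient condition for productivity of a \pdol{} system $\mcl{H} = \tuple{\aalph,H,v_0}$, namely that the computable sequence $(v_n)_{n\ge 1}$ of finite words satisfies $v_n \neq \wrdemp$ for every $n$. Hence \emph{non}-productivity is semi-decidable: enumerate $n = 1, 2, \ldots$, compute $v_n$ (each $v_n$ is obtained by applying a known morphism $H_{z_{n-1}}$ to $v_{n-1}$, which is an effective finite word operation), and halt whenever $v_n = \wrdemp$. Consequently, productivity itself is in $\cpi{0}{1}$.

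For the lower bound, I would use the reduction $\aprg \mapsto \mcl{H}_\aprg$ from Definition~\ref{def:pdol:prod}. This is a total computable many-one reduction, since every Fractran program can first be brought into common-denominator form by a purely syntactic rescaling of numerators and denominators, and the construction of $\mcl{H}_\aprg$ from the resulting program is then immediate from the clauses of Definition~\ref{def:pdol:prod}. By Lemma~\ref{lem:prod}, $\mcl{H}_\aprg$ is productive if and only if $\aprg$ does not halt on input~$2$. Proposition~\ref{prop:fractran2} states that halting on input~$2$ for Fractran programs is $\csig{0}{1}$\nb-complete, so its complement is $\cpi{0}{1}$\nb-hard, and this hardness transfers via the reduction to productivity of \pdol{} systems.

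The bulk of the work is already behind us: the non-trivial content sits in Lemma~\ref{lem:prod} and Proposition~\ref{prop:fractran2}. There is no real obstacle; the only point deserving attention is checking that the normalization of $\aprg$ to a common denominator and the subsequent construction of $\mcl{H}_\aprg$ are genuinely effective operations on the syntactic representation, which is clear from the definitions.
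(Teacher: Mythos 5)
Your proposal is correct and follows essentially the same route as the paper: $\cpi{0}{1}$-hardness via the effective reduction $\aprg \mapsto \mcl{H}_\aprg$ of Definition~\ref{def:pdol:prod} combined with Lemma~\ref{lem:prod} and Proposition~\ref{prop:fractran2}, with membership following from the productivity criterion of Remark~\ref{rem:pdol:prod}. The paper leaves the upper bound implicit, so your explicit semi-decidability argument for non-productivity is a welcome (if routine) addition rather than a departure.
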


\section[Turing Completeness of Non-Erasing PD0L Systems]{Turing Completeness \\ of Non-Erasing \pdol~Systems}\label{sec:comp}

In this section we extend the encoding of Fractran from the previous section
to show that every computable infinite word can be embedded in the following two ways.

\begin{definition}\label{def:embed:prefix}
  Let $\aalph$ and $\balph\supset\aalph$ be alphabets with
  letters $\symbl,\symbr \in \balph \setminus \aalph$,
  and let $\seq{w} \in \str{\aalph}$ and $\seq{u} \in \str{\balph}$ be infinite words.
  We say $\seq{w}$ is \emph{prefix embedded} in~$\seq{u}$ if the following three conditions are satisfied:
  \begin{enumerate}
    \item for every finite prefix $v$ of $\seq{w}$ there is an occurrence $\symbl\,v\,\symbr$ in $\seq{u}$,
    \item for every occurrence of a word $\symbl\,v\,\symbr$ in $\seq{u}$ with $v \in \wrd{(\balph \setminus \{\symbr\})}$
          we have that $v$ is a prefix of $\seq{w}$, and
    \item letters from $\aalph$ occur in $\seq{u}$ only in factors (subwords) 
          of the form $\symbl\,v\,\symbr$ with $v \in \wrd{(\balph \setminus \{\symbr\})}$.
  \end{enumerate} 
\end{definition}

\begin{definition}\label{def:embed:sparse}
  Let $\aalph$ and $\balph\supset\aalph$ be alphabets,
  and let $\seq{w} \in \str{\aalph}$ and $\seq{u} \in \str{\balph}$ be infinite words.
  We say $\seq{w}$ is \emph{sparsely embedded} in~$\seq{u}$
  if $\seq{w}$ is obtained from 
  $\seq{u}$ by erasing all letters in $\balph\setminus\aalph$.
\end{definition}

The crucial difference with the encoding of Section~\ref{sec:prod}
is that we now use the knowledge about the
remainder not only to select the correct fraction to multiply with,
but also to recognize when the current value is divisible
by $3$ or $5$, and correspondingly produce an output bit $\szero$ or~$\sone$,
cf.~Definition~\ref{def:fractran:output}.
The process again proceeds in two phases, for division and multiplication, 
and we employ lower- and uppercase letters accordingly. 
We introduce letters $\symbl$ (and $\symbL$) and $\symbr$ (and $\symbR$)
marking the beginning and the end of the prefix of the infinite word computed
by the Fractran program.
Furthermore, the symbol $\symbR$ produces the output bits depending on the current remainder~$r_i$.
In order to prevent that the output of $\symbR$
changes the morphism index of $\symbR$, we introduce $\symbi$ (and $\symbI$)
which compensate the production of $\symbR$ with an inverse length.
The letter $\symbe$ (and $\symbE$) marks the end of the line, 
and additionally $\symbe$ takes care of realignment after multiplication,
such that the first $\symba$ in each run stands on morphism index $0$.

\begin{definition}\label{def:dol:comp:prefix}
  Let $\aprg = \frac{\ianum{1}}{\aden},\ldots,\frac{\ianum{k}}{\aden}$ 
  be a Fractran program such that (without loss of generality) 
  the common denominator~$\aden$ is divisible by $3$ and $5$.
  Define the \pdol{} system~$\mcl{H}_{\aprg} = \tup{\balph,H,\symbs}$~with
  \begin{align*}
    \balph = \enumset{ \symbs, \spa, \smin, \symba, \symbA, \symbb, \symbB, \symbi, \symbI, \symbl, \symbL, \szero, \sone , \symbr, \symbR, \symbe, \symbE, \symbQ }
  \end{align*}
  and $H = \tuple{h_0,\ldots,h_{\aden-1}}$  
  consisting of (non-erasing) morphisms 
  $h_i : \wrd{\balph} \to \wrd{\balph}$ defined for every $i\in\nalph{\aden}$ as follows:
  \begin{align*}
    h_i(\symbs) &= \symbs \nspa{\aden-1} \symba \symba \symbb \nspa{\aden-1} \symbi \nspa{\aden-2} \symbl \symbr \nspa{\aden-1} \symbe\\
    h_i(\spa) &= \spa\\
    h_i(\symba) &= 
      \begin{cases}
      \symbA \nspa{\aden-1} &\text{if $i = \aden-1$\,,}\\
      \nspa{\aden} &\text{otherwise.}
      \end{cases} \\
    h_i(\symbb) &= \symbB\\
    h_i(\symbi) &= \symbI\\
    h_i(\symbl) &= \symbL\\
    h_i(\symbr) &= \symbR\\
    h_i(\symbe) &= \symbE \nsmin{\aden-i}\\
    h_i(\smin) &= \nspa{\aden}\\
    h_i(\symbA) &= 
    \begin{cases}
    \nsymba{\ianum{\applicable(i)}} \spa & \text{if $\applicable(i)$ is defined,}\\
    \symbQ &\text{otherwise.}
    \end{cases}\\
    h_i(\symbB) &= 
    \begin{cases}
    \nsymba{i \cdot \frac{\ianum{\applicable(i)}}{\aden}}\symbb & \text{if $\applicable(i)$ is defined,}\\
    \symbQ &\text{otherwise.}
    \end{cases}\\
    h_i(\symbI) &= 
    \begin{cases}
    \symbi \nspa{\aden-1} &\text{if $3 \divides i$ or $5 \divides i$\,,}\\
    \symbi &\text{otherwise.}
    \end{cases}\\
    h_i(\symbL) &= \symbl\\
    h_i(\szero) &= \szero\\
    h_i(\sone) &= \sone\\
    h_i(\symbR) &= 
    \begin{cases}
    \szero\symbr &\text{if $3 \divides i$\,,}\\
    \sone\symbr &\text{if $5 \divides i$\,,}\\
    \symbr &\text{otherwise.}
    \end{cases}\\
    h_i(\symbE) &= \symbe\\
    h_i(\symbQ) &= \symbQ
  \end{align*}
\end{definition}

\begin{remark}\label{rem:3:5:div:lcd}
  In Definition~\ref{def:dol:comp:prefix} we require that $3$ and $5$ divide the common denominator 
  in order for Lemma~\ref{lem:remainder} to apply.
  Informally speaking, via the remainder we can only observe factors that also divide the common denominator.
\end{remark}

\begin{remark}
  Let $\aprg$ and $H$ be as in Definition~\ref{def:dol:comp:prefix}.
  It can be shown that the symbol~$\symbQ$ occurs in the word $H^\omega(\symbs)$ 
  if and only if the Fractran program~$\aprg$ halts on input~$2$.
  This fact can be used to show that it is undecidable whether $\symbQ$ occurs in $H^\omega(\symbs)$.
  However, we prove this differently, namely by applying Theorem~\ref{thm:comp:prefix} and using the fact
  that for non-terminating Fractran programs 
  it is undecidable whether digit~$\sone$ occurs in the sequence computed by the program.
  See Theorem~\ref{thm:letter}. 
\end{remark}

Time for an example.
\begin{example}
  Consider the following Fractran program:
  \begin{align*}
    F = \frac{2}{7},\frac{3\cdot 7}{2\cdot 5},\frac{3}{2},\frac{5}{3},\frac{2}{1} 
  \end{align*}
  which has the infinite run $2$, $3$, $5$, $10$, $21$, $6$, $9$, $15$, $25$, $50$, $105$,
  $30$, $63$, $18$, $27$, $45$, $75$, $125$, $250$, $525$, $150$, $\ldots$
  and computes the word 
  \begin{align*}
    011000011000000011000000000011000000000000011\cdots,
  \end{align*}
  that is, $0^1 11 \; 0^{3+1}11\; 0^{6+1}11\; 0^{9+1}11 \; 0^{12+1}\ldots = \prod_{i=0}^\infty 0^{3\cdot i+1} 11$.
  Writing the program with the common denominator $210$ yields:
  \begin{align*}
    \aprg & = \frac{60}{210},\frac{441}{210},\frac{315}{210},\frac{350}{210},\frac{420}{210}
  \end{align*}
  {%
  \renewcommand{\sat}[2]{
    \node (n) [at=(n.south east),anchor=south west] {$#1$};
    \node (a) [at=(n.south),anchor=north,yshift=-1mm] {\rotatebox{90}{\tiny #2}};
  }%
  \renewcommand{\scalebox}[2]{
    \begin{scope}[nodes={scale=#1}]
      #2
    \end{scope}
  }%
  Let $\mcl{H}_{\aprg}$ be the \pdol{} encoding of $\aprg$, 
  as given in Definition~\ref{def:dol:comp:prefix}.
  We consider the first steps of the iteration of the morphisms;
  for easier reading, we drop blocks of consecutive symbols $\nspa{210}$ 
  (they do not change the morphism index of other letters),
  and let $\Xspa = \nspa{209}$.
  \begin{align*}
    &\xline{ \sat{\symbs}{0} }\\
    x_0 = \mbox{} &\xline{ \sat{\Xspa}{1}\sat{\nsymba{2}}{0}\sat{\symbb}{2}\sat{\Xspa}{3}\sat{\symbi}{2}\sat{\nspa{208}}{3}\sat{\symbl}{1}\sat{\symbr}{2}\sat{\Xspa}{3}\sat{\symbe}{2} }\\
    x_1 = \mbox{} &\xline{ \sat{\Xspa}{3}\sat{\symbB}{2}\sat{\Xspa}{3}\sat{\symbI}{2}\sat{\nspa{208}}{3}\sat{\symbL}{1}\sat{\symbR}{2}\sat{\Xspa}{3}\sat{\symbE}{2} \sat{\nsmin{208}}{3} }\\
    x_2 = \mbox{} &\xline{ \sat{\Xspa}{1}\sat{\nsymba{3}}{0}\sat{\symbb}{3}\sat{\Xspa}{4}\sat{\symbi}{3}\sat{\nspa{208}}{4}\sat{\symbl}{2}\sat{\symbr}{3}\sat{\Xspa}{4}\sat{\symbe}{3} }\\
    x_3 = \mbox{} &\xline{ \sat{\Xspa}{4}\sat{\symbB}{3}\sat{\Xspa}{4}\sat{\symbI}{3}\sat{\nspa{208}}{4}\sat{\symbL}{2}\sat{\symbR}{3}\sat{\Xspa}{4}\sat{\symbE}{3} \sat{\nsmin{207}}{4} }\\
    x_4 = \mbox{} &\xline{ \sat{\Xspa}{1}\sat{\nsymba{5}}{0}\sat{\symbb}{5}\sat{\Xspa}{6}\sat{\symbi}{5}\sat{\nspa{207}}{6}\sat{\symbl}{3}\sat{\szero}{4}\sat{\symbr}{5}\sat{\Xspa}{6}\sat{\symbe}{5} }\\
    x_5 = \mbox{} &\xline{ \sat{\Xspa}{6}\sat{\symbB}{5}\sat{\Xspa}{6}\sat{\symbI}{5}\sat{\nspa{207}}{6}\sat{\symbL}{3}\sat{\szero}{4}\sat{\symbR}{5}\sat{\Xspa}{6}\sat{\symbE}{5} \sat{\nsmin{205}}{6} }\\
    x_6 = \mbox{} &\xline{ \sat{\Xspa}{1}\sat{\nsymba{10}}{0}\sat{\symbb}{10}\sat{\Xspa}{11}\sat{\symbi}{10}\sat{\nspa{206}}{11}\sat{\symbl}{7}\sat{\szero}{8}\sat{\sone}{9}\sat{\symbr}{10}\sat{\Xspa}{11}\sat{\symbe}{10} }\\
    x_7 = \mbox{} &\xline{ \sat{\Xspa}{11}\sat{\symbB}{10}\sat{\Xspa}{11}\sat{\symbI}{10}\sat{\nspa{206}}{11}\sat{\symbL}{7}\sat{\szero}{8}\sat{\sone}{9}\sat{\symbR}{10}\sat{\Xspa}{11}\sat{\symbE}{10} \sat{\nsmin{200}}{11} }\\
    x_8 = \mbox{} &\xline{ \sat{\Xspa}{1}\sat{\nsymba{21}}{0}\sat{\symbb}{21}\sat{\Xspa}{22}\sat{\symbi}{21}\sat{\nspa{205}}{22}\sat{\symbl}{17}\sat{\szero}{18}\sat{\sone}{19}\sat{\sone}{20}\sat{\symbr}{21}\sat{\Xspa}{22}\sat{\symbe}{21} }\\
    x_9 = \mbox{} &\xline{ \sat{\Xspa}{22}\sat{\symbB}{21}\sat{\Xspa}{22}\sat{\symbI}{21}\sat{\nspa{205}}{22}\sat{\symbL}{17}\sat{\szero}{18}\sat{\sone}{19}\sat{\sone}{20}\sat{\symbR}{21}\sat{\Xspa}{22}\sat{\symbE}{21} \sat{\nsmin{189}}{22} }
  \end{align*}%
  Note that the number of $\symba$'s in rows $x_{2i}$ precisely
  models the run of the Fractran program $2$, $3$, $5$, $10$, $21$, \ldots.
  By construction, the morphism index of $\symbR$ equals the remainder $r_i$,
  and consequently $\symbR$ produces the prefix $011$ of the word computed by $\aprg$.
  
  Things become more interesting if the number of $\symba$'s exceeds the denominator $210$.
  We look at a few steps further down the sequence:
  \begin{align*} 
    x_{36} = \mbox{} &\xline{ \sat{\Xspa}{1}\sat{\nsymba{250}}{0}\sat{\symbb}{40}\sat{\Xspa}{41}\sat{\symbi}{40}\sat{\nspa{191}}{41}\sat{\symbl}{22} \scalebox{.41}{\sat{\szero}{23}\sat{\sone}{24}\sat{\sone}{25}\sat{\szero}{26}\sat{\szero}{27}\sat{\szero}{28}\sat{\szero}{29}\sat{\sone}{30}\sat{\sone}{31}\sat{\szero}{32}\sat{\szero}{33}\sat{\szero}{34}\sat{\szero}{35}\sat{\szero}{36}\sat{\szero}{37}\sat{\szero}{38}\sat{\sone}{39}} \sat{\symbr}{40}\sat{\Xspa} {41}\sat{\symbe}{40} }\\
    x_{37} = \mbox{} &\xline{ \sat{\Xspa}{41}\sat{\symbA}{40}\sat{\Xspa}{41}\sat{\symbB}{40}\sat{\Xspa}{41}\sat{\symbI}{40}\sat{\nspa{191}}{41}\sat{\symbL}{22}  \scalebox{.41}{\sat{\szero}{23}\sat{\sone}{24}\sat{\sone}{25}\sat{\szero}{26}\sat{\szero}{27}\sat{\szero}{28}\sat{\szero}{29}\sat{\sone}{30}\sat{\sone}{31}\sat{\szero}{32}\sat{\szero}{33}\sat{\szero}{34}\sat{\szero}{35}\sat{\szero}{36}\sat{\szero}{37}\sat{\szero}{38}\sat{\sone}{39}} \sat{\symbR}{40}\sat{\Xspa}{41}\sat{\symbE}{40} \sat{\nsmin{170}}{41} }\\
    x_{38} = \mbox{} &\xline{ \sat{\Xspa}{1}\sat{\nsymba{525}}{0}\sat{\symbb}{105}\sat{\Xspa}{106}\sat{\symbi}{105}\sat{\nspa{190}}{106}\sat{\symbl}{86} \scalebox{.41}{\sat{\szero}{87}\sat{\sone}{88}\sat{\sone}{89}\sat{\szero}{90}\sat{\szero}{91}\sat{\szero}{92}\sat{\szero}{93}\sat{\sone}{94}\sat{\sone}{95}\sat{\szero}{96}\sat{\szero}{97}\sat{\szero}{98}\sat{\szero}{99}\sat{\szero}{100}\sat{\szero}{101}\sat{\szero}{102}\sat{\sone}{103}} \sat{\sone}{104}\sat{\symbr}{105}\sat{\Xspa}{106}\sat{\symbe}{105} }\\
    x_{39} = \mbox{} &\xline{ \sat{\Xspa}{106}\sat{(}{}\sat{\symbA}{105}\sat{\Xspa}{106}\sat{)^2}{}\sat{\symbB}{105}\sat{\Xspa}{106}\sat{\symbI}{105}\sat{\nspa{190}}{106}\sat{\symbL}{86} \scalebox{.41}{\sat{\szero}{87}\sat{\sone}{88}\sat{\sone}{89}\sat{\szero}{90}\sat{\szero}{91}\sat{\szero}{92}\sat{\szero}{93}\sat{\sone}{94}\sat{\sone}{95}\sat{\szero}{96}\sat{\szero}{97}\sat{\szero}{98}\sat{\szero}{99}\sat{\szero}{100}\sat{\szero}{101}\sat{\szero}{102}\sat{\sone}{103}} \sat{\sone}{104}\sat{\symbR}{105}\sat{\Xspa}{106}\sat{\symbE}{105} \sat{\nsmin{105}}{106} }\\
    x_{40} = \mbox{} &\xline{ \sat{\Xspa}{1}\sat{\nsymba{150}}{0}\sat{\symbb}{150}\sat{\Xspa}{151}\sat{\symbi}{150}\sat{\nspa{189}}{151}\sat{\symbl}{130} \scalebox{.41}{\sat{\szero}{131}\sat{\sone}{132}\sat{\sone}{133}\sat{\szero}{134}\sat{\szero}{135}\sat{\szero}{136}\sat{\szero}{137}\sat{\sone}{138}\sat{\sone}{139}\sat{\szero}{140}\sat{\szero}{141}\sat{\szero}{142}\sat{\szero}{143}\sat{\szero}{144}\sat{\szero}{145}\sat{\szero}{146}\sat{\sone}{147}}\sat{\sone}{148}\sat{\szero}{149}\sat{\symbr}{150}\sat{\Xspa}{151}\sat{\symbe}{150} }
  \end{align*}
  In line~$x_{36}$ we have $250$ $\symba$'s giving rise to only one $\symbA$ in the subsequent line $x_{37}$.
  Again, $\symbA$, $\symbB$, $\symbI$ and $\symbR$ stand on index $40 \equiv 250 \pmod{210}$,
  and consequently both deduce that the first applicable fraction is $\frac{441}{210} = \frac{3\cdot 7}{2\cdot 5}$.
  The letter $\symbA$ represents the quotient from the division by $210$, and hence produces $441$ $\symba$'s.
  The letter $\symbB$ is responsible for the multiplication of the remainder,
  and thus produces $40 \cdot \frac{3\cdot 7}{2\cdot 5} = 84$ $\symba$'s. 
  Thus we get $441 + 84  = 525$ $\symba$'s in line $x_{38}$.
  Moreover, $\symbR$ produces a $\sone$ since $250$ is dividable by $5$ but not $3$,
  and $\symbI$ produces an $\Xspa$-block ($\Xspa = \nspa{209}$)
  to keep $\symbR$ and the remaining symbols on the correct index.

  Now the division of $525$ by $210$ has quotient $2$ and remainder $105$,
  and so we have two $\symbA$'s in line $x_{39}$, 
  and all $\symbA$'s, $\symbB$, $\symbI$ and $\symbR$ standing on index $105$.
  The first applicable fraction for $105$ is $\frac{60}{210} = \frac{2}{7}$,
  and correspondingly the two $\symbA$'s produce $60$ $\symba$'s each,
  and $\symbB$ produces $30 = 105\cdot \frac{2}{7}$ $\symba$'s, in total giving rise to $150$ $\symba$'s in line $x_{40}$.
  Now $525$ is divisible by $3$ and so $\symbR$ produces a~$\szero$.
  }
\end{example}

We now start working towards a proof of Theorem~\ref{thm:comp:prefix}.
Let $\aprg = \frac{\ianum{1}}{d},\ldots,\frac{\ianum{k}}{d}$ be a Fractran program.
We again employ the notation $H_i$ as given in Lemma~\ref{lem:H:rec}.
Furthermore, we define relations ${\simpstep},{\simp} \subseteq \aalph^* \times \aalph^*$ by
\begin{align*}
  {\simpstep} &= \{\pair{u \nspa{d} v}{uv} \where u,v \in \wrd{\aalph}\} &&&
  {\simp} &= ({\reflectbox{$\simpstep$}} \cup {\simpstep})^*
\end{align*}
Then clearly we have $H_i(u) \simp H_i(v)$\, for all $i\in\nalph{d}$, 
$u,v \in \aalph^*$ \mbox{with~$u \simp v$}.
This allows us to prove properties of $H^\omega(\symbs)$ reasoning modulo~${\simp}$.
Below, we write $\nspa{n}$ with $n < 0$ to denote the block $\nspa{m}$ with $m \in \nalph{d}$ and $n \equiv m \pmod{d}$.
For $N \ge 1$ we define 
\begin{align*}
  \obit{N} &= \szero &&\text{if $3 \divides N$}\\
  \obit{N} &= \sone &&\text{if $3 \ndivides N$ and $5 \divides N$}\\
  \obit{N} &= \wrdemp &&\text{otherwise} 
\end{align*}

\begin{lemma}\label{lem:comp:growth}
  Let $\aprg = \frac{\ianum{1}}{d},\ldots,\frac{\ianum{k}}{d}$,
  and $N \ge 1$. 
  Let $\iobit{} = \obit{N}$ and $\Xspa = \nspa{d-1}$.
  Let $q\in\nat$ and $r\in\nalph{d}$ be such that $N = qd + r$,
  and let $v \in \wrd{\enumset{\szero,\sone}}$.
  Then we have
  \begin{gather}
    \begin{aligned}
    & H_1( \Xspa\nsymba{N} \symbb \Xspa\symbi\nspa{d-2-|v|}\symbl  v \symbr \Xspa\symbe ) \\
    & 
      = \Xspa(\symbA \Xspa)^q \symbB \Xspa\symbI\nspa{d-2-\length{v}}\symbL v \symbR \Xspa\symbE \nsmin{d-r}
    \end{aligned}
    \label{eq:lem:comp:growth:1}
  \end{gather}
  of length equivalent to~$-r$ modulo~$d$.

  Moreover, if in addition $\aprg(N)$ is defined, 
  then we have
  \begin{gather}
    \begin{aligned}
    & H_{r+1}( \Xspa(\symbA \Xspa)^q \symbB \Xspa\symbI\nspa{d-2-\length{v}}\symbL v\symbR \Xspa\symbE \nsmin{d-r} )\\
    & \simp \Xspa\nsymba{\aprg(N)} \symbb \Xspa\symbi\nspa{d-2-|v\iobit{}|}\symbl  v\iobit{} \symbr \Xspa\symbe
    \end{aligned}
    \label{eq:lem:comp:growth:2}
  \end{gather}
  of length equivalent to~$\aprg(N)$ modulo~$d$.
\end{lemma}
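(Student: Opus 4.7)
My plan is to prove both equations by direct computation modulo $\simp$, processing each left-hand side segment by segment and tracking how the morphism index advances by the length of the consumed input modulo~$d$ (Lemma~\ref{lem:H:rec}); the $\simp$-congruence will absorb spacer blocks whose length is a multiple of~$d$.

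For \eqref{eq:lem:comp:growth:1}, I would apply $H_1$ left to right. The leading $\Xspa$ is fixed and shifts the index from~$1$ to~$0$. Writing $N = qd + r$ and using that $h_i(\symba) = \symbA\,\Xspa$ when $i = d{-}1$ and $h_i(\symba) = \nspa{d}$ otherwise, a straightforward telescoping gives
\[
  H_0(\nsymba{N}) \;=\; (\nspa{d(d-1)}\,\symbA\,\Xspa)^q\,\nspa{dr}
  \;\simp\; (\symbA\,\Xspa)^q,
\]
because the spacer lengths $d(d-1)$ and $dr$ are divisible by~$d$. The remaining letters are then handled one by one: after $\nsymba{N}$ the index is~$r$, so $\symbb \mapsto \symbB$; the following $\Xspa$ returns to~$r$, so $\symbi \mapsto \symbI$; the block $\nspa{d-2-|v|}$ brings the index to $r-1-|v|$, so $\symbl \mapsto \symbL$; reading~$v$ (each of whose letters is fixed by every $h_i$) returns the index to~$r$, so $\symbr \mapsto \symbR$; another $\Xspa$ keeps us at~$r$; and finally $\symbe \mapsto \symbE\,\nsmin{d-r}$. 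Summing lengths yields $Nd + 5d - r \equiv -r \pmod{d}$.

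For \eqref{eq:lem:comp:growth:2} the key input is Lemma~\ref{lem:remainder}, giving $\applicable(r) = \applicable(N)$. After the initial $\Xspa$ moves the index from $r+1$ to~$r$, each $\symbA\,\Xspa$ has input length exactly~$d$ and thus pins the index at~$r$, so every $\symbA$ produces $\nsymba{\ianum{\applicable(r)}}\,\spa$, which combined with the next $\Xspa$ becomes $\nsymba{\ianum{\applicable(r)}}\,\nspa{d} \simp \nsymba{\ianum{\applicable(r)}}$; then $\symbB$ at index~$r$ contributes $\nsymba{r \cdot \ianum{\applicable(r)}/d}\,\symbb$. Adding exponents of~$\symba$ gives $q\,\ianum{\applicable(r)} + r\cdot\ianum{\applicable(r)}/d = (qd+r) \cdot \ianum{\applicable(N)}/d = \aprg(N)$. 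Next, $\symbI$ at index~$r$ emits $\symbi$ plus an extra $\nspa{d-1}$ exactly when $3 \divides r$ or $5 \divides r$; since $3,5 \divides d$ (Remark~\ref{rem:3:5:div:lcd}), this is equivalent to $3 \divides N$ or $5 \divides N$, i.e.\ to $\iobit{} \ne \wrdemp$, in which case the extra $\nspa{d-1}$ merges with the following $\nspa{d-2-|v|}$ into $\nspa{2d-3-|v|} \simp \nspa{d-2-|v\iobit{}|}$; otherwise the gap length already matches. Symmetrically, $\symbR$ at index~$r$ emits $\iobit{}\,\symbr$, extending~$v$ to~$v\iobit{}$. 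Finally $\symbE \mapsto \symbe$ and $\nsmin{d-r}$ expands to $\nspa{d(d-r)} \simp \wrdemp$. The total length is $4d + \aprg(N) \equiv \aprg(N) \pmod{d}$.

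I expect the main obstacle to be the bookkeeping rather than any deep argument: I must verify that each of the ``active'' letters $\symbA, \symbB, \symbI, \symbR, \symbE$ in the LHS of \eqref{eq:lem:comp:growth:2} sits at morphism index~$r$, and that $\symbL$ sits at $r-1-|v|$ so that traversing~$v$ returns the cursor to~$r$ at~$\symbR$. These alignments are precisely what the design of the encoding---the gap $\nspa{d-2-|v|}$ before $\symbl$, the conditional $\nspa{d-1}$ in the rule for $\symbI$, and the trailing $\nsmin{d-r}$---is engineered to enforce.
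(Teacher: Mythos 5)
Your proof is correct and follows essentially the same route as the paper's: a direct left-to-right computation tracking morphism indices modulo~$d$, with Lemma~\ref{lem:remainder} supplying $\applicable(r)=\applicable(N)$ and the transfer of divisibility by $3$ and $5$ from $N$ to $r$, and the identity $q\,\ianum{\applicable(N)}+r\cdot\ianum{\applicable(N)}/d=\aprg(N)$ closing the count of the $\symba$'s. If anything, your bookkeeping is slightly more careful than the paper's, since you correctly observe that $H_0(\nsymba{N})$ equals $(\symbA\Xspa)^q$ only modulo~$\simp$ (the spacer blocks $\nspa{d(d-1)}$ and $\nspa{dr}$ must be absorbed), whereas the paper writes this intermediate step as an exact equality.
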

\begin{proof}
  \eqref{eq:lem:comp:growth:1} follows immediately:
  Let $x,y\in\wrd{\balph}$ be arbitrary.
  Then
  $H_1 (\Xspa x) = \Xspa H_0(x)$, 
  and $H_0(\nsymba{N} y) = \Xspa (\symbA \Xspa)^q H_0(\nsymba{r} y) = \Xspa (\symbA \Xspa)^q H_r(y)$.
  Furthermore, 
  \(
    H_r(\symbb \Xspa\symbi\nspa{d-2-|v|}\symbl  v \symbr \Xspa\symbe) 
    = \symbB \Xspa\symbI\nspa{d-2-\length{v}}\symbL v \symbR H_{r}(\symbe)
  \)
  and $H_r(\symbe) = \nsmin{d-r}$.  
  
  To show~\eqref{eq:lem:comp:growth:2}, let $\aprg(N)$ be defined.
  Then $\applicable(N) = \applicable(r)$ is also defined (Lemma~\ref{lem:remainder}).
  Hence we get, for $w\in\wrd{\balph}$ arbitrary:
  \begin{align*}
    & H_{r+1}( \Xspa(\symbA \Xspa)^q \symbB \Xspa w) \\
    & \ = \Xspa H_r( (\symbA \Xspa)^q \symbB \Xspa w) \\
    & \ = \Xspa (\nsymba{\ianum{\applicable(r)}} \spa \Xspa)^q H_r( \symbB \Xspa w ) \\
    & \ = \Xspa (\nsymba{\ianum{\applicable(r)}} \spa \Xspa)^q \nsymba{r \cdot \frac{\ianum{\applicable(r)}}{\aden}}\symbb \Xspa H_r(w) \\ 
    & \ \simp \Xspa \nsymba{\aprg(N)} \symbb \Xspa H_r(w) 
  \end{align*}
  Finally, to compute $H_r(w)$ for $w = \symbI\nspa{d-2-\length{v}}\symbL v\symbR \Xspa\symbE \nsmin{d-r}$,
  distinguish the following cases: 
%
  If $3$ or $5$ divides $r$, 
  then $3$ or $5$ divides also $N$, respectively, by Lemma~\ref{lem:remainder}.
  Hence we have $\iobit{} = \obit{N} = \obit{r}$
  and $\length{\iobit{}} = 1$,
  and 
  \begin{align*}
    H_{r}(w) 
    & = \symbi \Xspa \nspa{d-2-\length{v}} \symbl v H_{r}( \symbR \Xspa\symbE \nsmin{d-r}) \\
    & = \symbi \Xspa \nspa{d-2-\length{v}} \symbl v \iobit{} \symbr \Xspa \symbe \nspa{d(d-r)} \\
    & \simp \symbi \nspa{d-3-\length{v}} \symbl v \iobit{} \symbr \Xspa \symbe
  \end{align*}
  as required.
  And , if $3 \ndivides r$ and $5 \ndivides r$, then $\iobit{} = \obit{N} = \obit{r} = \wrdemp$, 
  and $H_r(w) \simp \symbi \nspa{d-2-\length{v}} \symbl v \symbr \Xspa \symbe$,
  as required.
\end{proof}

\begin{lemma}\label{lem:comp:prefix}
  Let $\aprg$ be a Fractran program computing an infinite word~$\seq{w}$.
  Then the \pdol{} system~$\mcl{H}_\aprg$ from Definition~\ref{def:dol:comp:prefix} 
  generates a word that prefix embeds $\seq{w}$ (see Definition~\ref{def:embed:prefix}).
\end{lemma}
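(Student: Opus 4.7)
The plan is to exhibit a structural invariant for the pair of lines $x_{2i}$, $x_{2i+1}$, tied step-by-step to the Fractran run on~$2$ and the bits already emitted, and then read off the three conditions of Definition~\ref{def:embed:prefix} from it. To set up, let $N_0 = 2, N_1, N_2, \ldots$ be the (necessarily infinite) run of $\aprg$, write $N_i = q_i d + r_i$ with $r_i \in \nalph{d}$, and set $u_0 = \wrdemp$ and $u_{i+1} = u_i\,\obit{N_i}$. Each $u_i$ is then a prefix of $\seq{w}$, and because $\seq{w}$ is infinite, $\length{u_i}$ grows unboundedly, so every finite prefix of $\seq{w}$ equals some $u_i$.

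The central step is to prove by induction on $i \ge 0$ that, writing $\Xspa = \nspa{d-1}$,
\[
  x_{2i} \simp \Xspa\,\nsymba{N_i}\,\symbb\,\Xspa\,\symbi\,\nspa{d-2-\length{u_i}}\,\symbl\,u_i\,\symbr\,\Xspa\,\symbe
\]
at morphism index $z_{2i} = 1$, and
\[
  x_{2i+1} \simp \Xspa\,(\symbA\,\Xspa)^{q_i}\,\symbB\,\Xspa\,\symbI\,\nspa{d-2-\length{u_i}}\,\symbL\,u_i\,\symbR\,\Xspa\,\symbE\,\nsmin{d-r_i}
\]
at morphism index $z_{2i+1} = r_i + 1$. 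The base case $i = 0$ is a direct reading of $h_0(\symbs)$. For the inductive step, the two halves of Lemma~\ref{lem:comp:growth} are tailor-made: equation~\eqref{eq:lem:comp:growth:1} applied at morphism index~$1$ sends $x_{2i}$ to $x_{2i+1}$, and equation~\eqref{eq:lem:comp:growth:2} applied at morphism index~$r_i + 1$ sends $x_{2i+1}$ to $x_{2i+2}$ while extending the output block from $u_i$ to $u_i\,\obit{N_i} = u_{i+1}$. Preservation of $\simp$ by every $H_j$ (an inserted $\nspa{d}$ is itself simplified away) lets the induction go through modulo~$\simp$. The morphism-index tracking is self-consistent: the simplified $x_{2i}$ has length $\equiv N_i \equiv r_i \pmod{d}$, forcing $z_{2i+1} \equiv r_i + 1$; and the simplified $x_{2i+1}$ has length $\equiv -r_i \pmod{d}$, returning $z_{2i+2}$ to~$1$.

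Once the invariant is in hand, the three conditions of Definition~\ref{def:embed:prefix} fall out by inspection of $H^\omega(\symbs) = \symbs\,x_0\,x_1\,x_2\cdots$. For~(i), given a prefix $v$ of $\seq{w}$, pick $i$ with $u_i = v$; then $\symbl\,v\,\symbr$ appears verbatim inside $x_{2i}$. For~(ii), the letter $\symbl$ lives only in the even lines (the odd lines carry the distinct uppercase $\symbL$, $\symbR$ in its place), and each $x_{2i}$ contributes exactly one $\symbl$ and one $\symbr$ with only $u_i$ between them, so any $\symbl\,v\,\symbr$ factor of $H^\omega(\symbs)$ free of interior $\symbr$ lies inside a single $x_{2i}$ and satisfies $v = u_i$, a prefix of $\seq{w}$. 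For~(iii), every occurrence of $\szero$ or $\sone$ is a character of some $u_i$, which by the invariant is bracketed by the $\symbl$ and $\symbr$ of $x_{2i}$, placing each such occurrence inside the required marker pair.

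The main obstacle is the modular bookkeeping in the inductive step: one must keep the length contributions straight modulo~$d$ so that the morphism indices consistently match the $H_1$ and $H_{r+1}$ applications of Lemma~\ref{lem:comp:growth}, and one must handle the convention $\nspa{d-2-\length{u_i}}$ properly once $\length{u_i}$ exceeds $d-2$. Once this bookkeeping is settled, the invariant essentially forces the conclusion, and the proof reduces to a structured verification of already-proven growth facts.
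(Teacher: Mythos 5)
Your proposal is correct and follows essentially the same route as the paper's proof: the same two-line invariant (with your $u_i$ playing the role of the paper's $v_i$), established by induction via the two halves of Lemma~\ref{lem:comp:growth} applied at morphism indices $1$ and $r_i+1$, the same modular bookkeeping for the $z_n$, and the same readoff of conditions (i)--(iii). The only point worth making explicit is that for condition (ii) the factor between $\symbl$ and $\symbr$ is exactly $u_i$ and not merely $\simp$-equivalent to it, which holds because none of the morphisms insert $\spa$ between these markers.
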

\begin{proof}
  Let $\aprg = \frac{\ianum{1}}{\aden},\ldots,\frac{\ianum{k}}{\aden}$ be a Fractran program,
  computing $\seq{w} \in \str{\{\szero,\sone\}}$,
  i.e., $\seq{w} = W_{\aprg}$ with $W_{\aprg}$ as defined in Definition~\ref{def:fractran:output}.
  Let $N_0,N_1,N_2,\ldots$ be the infinite run of $\aprg$ starting on $N_0 = 2$
  (so with infinitely many $N_i$ divisible by~$3$~or~$5$).
  Let $q_i\in\nat$ and $r_i\in\nalph{d}$ be such that $N_i = q_i d + r_i$.
  Let $\mcl{H}_{\aprg} = \tup{\balph,H,\symbs}$ be the \pdol{} system defined in Definition~\ref{def:dol:comp:prefix}.

  We show that $\seq{u} = H^\omega(\symbs)$ satisfies the conditions~(i), (ii), and~(iii) of Definition~\ref{def:embed:prefix}, 
  by characterizing the contribution of every iteration of $H$.
  For every $i\in\nat$ we let $v_i \in \wrd{\aalph}$
  be defined by $v_0 = \wrdemp$ 
  and $v_{i+1} = v_i \obit{N_i}$ so that $\seq{w} = \lim_{i\to\infty} v_i$.
  For all $n \ge 0$
  let $x_n \in \wrd{\balph}$ and $z_n \in \nalph{d}$ be such that 
  $H^{n+1}(\symbs) = H^n(\symbs) x_n$ and $z_n \equiv \length{H^n(\symbs)} \pmod{d}$.
  Then $H^\omega(\symbs) = \symbs x_0 x_1 \cdots$,
  and 
  \begin{align}
    x_n & = H_{z_{n-1}}(x_{n-1}) &&& z_n & \equiv z_{n-1} + \length{x_{n-1}} \pmod{d} 
    \label{eq:lem:comp:H:rec}
  \end{align}
  Let us abbreviate $Y_i = \nspa{d-2-|v_{i}|}$.
  We prove that $x_n$ and $z_n$ satisfy
  \begin{align*}
    x_{2i} & \simp \Xspa \nsymba{N_i} \symbb \Xspa \symbi Y_i \symbl  v_i \symbr \Xspa \symbe \\ 
    x_{2i+1} & \simp \Xspa (\symbA \Xspa)^{q_i} \symbB \Xspa \symbI Y_i \symbL v_i \symbR \Xspa \symbE \nsmin{d-r_i} \\
    z_{2i} & \equiv 1 \pmod{d}\\
    z_{2i+1} & \equiv r_i + 1 \pmod{d} 
  \end{align*}
  by induction on $n$.

  For the base case, we see $z_0 = \length{\symbs} = 1$
  and $x_0 = \Xspa \symba \symba \symbb \Xspa \symbi \nspa{\aden-2} \symbl \symbr \Xspa \symbe$,
  as required.
  So let $n \ge 1$. 
  If $n=2i$ for some $i \ge 1$, 
  it follows from \eqref{eq:lem:comp:H:rec}, Lemma~\ref{lem:comp:growth}
  and the induction hypothesis that
  \begin{align*}
    x_n 
    & = H_{z_{n-1}}(x_{n-1}) \\
    & \simp H_{r_{i-1}+1}(\Xspa (\symbA \Xspa)^{q_{i-1}} \symbB \Xspa \symbI Y_{i-1} \symbL v_{i-1} \symbR \Xspa \symbE \nsmin{d-r_{i-1}})\\
    & \simp \Xspa\nsymba{N_i} \symbb \Xspa \symbi Y_i \symbl  v_i \symbr \Xspa \symbe
  \end{align*}
  and $z_n \equiv z_{n-1} + \length{x_{n-1}} \equiv r_{i-1} + 1 - r_{i-1} \equiv 1 \pmod{d}$.

  Similarly, if $n = 2i+1$ for some $i \ge 0$,
  we obtain
  \begin{align*}
    x_n 
    & = H_{z_{n-1}}(x_{n-1}) \\
    & \simp H_{1}(\Xspa \nsymba{N_i} \symbb \Xspa \symbi Y_i \symbl  v_i \symbr \Xspa \symbe) \\
    & \simp \Xspa \nsymba{N_{i+1}} \symbb \Xspa \symbi Y_{i+1} \symbl  v_{i+1} \symbr \Xspa \symbe
  \end{align*}
  and $z_n \equiv z_{n-1} + \length{x_{n-1}} \equiv 1 + N_{i+1} \equiv r_i + 1 \pmod{d}$.
  
  Knowing the exact shape (modulo~$\simp$) of $\seq{u} = H^\omega(\symbs)$,
  it is now easy to verify that 
  $\seq{u}$
  it satisfies (i), (ii), and (iii),
  taking into account that~$\spa$ 
  does not occur in any factor $\symbl\,v\,\symbr$ of $\seq{u}$ with $v \in \wrd{(\balph \setminus \{\symbr\})}$, 
  by the definition of the morphisms.
\end{proof}

We are ready to collect our main results.
\begin{theorem}\label{thm:comp:prefix}
  Every computable infinite word can be prefix embedded in a \pdol{} word (see Definition~\ref{def:embed:prefix}).
\end{theorem}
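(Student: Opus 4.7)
The plan is to reduce the theorem to the combination of Proposition~\ref{prop:frac:comp} and Lemma~\ref{lem:comp:prefix}, modulo a routine generalization from the binary alphabet to arbitrary finite alphabets. Specifically, given a computable $\seq{w} \in \str{\aalph}$, the strategy is: (a)~compute $\seq{w}$ by a Fractran program $\aprg$, and (b)~feed $\aprg$ to the \pdol{} encoding $\mcl{H}_\aprg$ to obtain a \pdol{} word that prefix embeds $\seq{w}$.

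For step~(a), I would extend Definition~\ref{def:fractran:output} by reserving, for each letter $a \in \aalph$, a distinct prime $p_a$ whose divisibility of the current working integer signals the output of $a$ (the primes $3$ and $5$ in the binary case). Proposition~\ref{prop:frac:comp} then generalizes by the same simulation of Turing machines, multiplying the output-producing fractions by $p_a$ rather than by $3$ or $5$, and prefixing fractions $\frac{1}{p_a}$ to clear the indicator in the following step. Using Remark~\ref{rem:primes:interchangeable} I can further arrange that all output primes $p_a$ divide the common denominator $\aden$, which is the analogue of the hypothesis in Definition~\ref{def:dol:comp:prefix}.

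For step~(b), I would extend Definition~\ref{def:dol:comp:prefix} by enlarging the alphabet $\balph$ to include $\aalph$ in place of $\enumset{\szero,\sone}$, and adapting the clause for $h_i(\symbR)$ to emit a single letter $a \in \aalph$ whenever $p_a \divides i$ (and to emit $\symbr$ alone otherwise). By Lemma~\ref{lem:remainder}, divisibility of the index $i$ by $p_a$ correctly mirrors divisibility of the current working integer $N_i$, so the output letters of $\mcl{H}_\aprg$ agree with those produced by $\aprg$. Lemma~\ref{lem:comp:prefix} then lifts: its inductive argument on $x_n$ uses the particular output letters only through the abstract function $\obit{\cdot}$, and every structural invariant is about length bookkeeping between the markers $\symbl,\symbr,\symbi,\symbe$ and their uppercase variants.

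The main point to verify is that the delicate index alignment of $\symbR$ with the remainder $r_i$ survives the generalization. This alignment was engineered by the compensating letter $\symbi/\symbI$, whose expansion in $h_i$ adds exactly one extra space precisely when $\symbR$ produces an output letter, so that the total length of the output region modulo~$\aden$ is invariant. Since enlarging the output alphabet still yields exactly one output letter per step at the indices $i$ with $p_a \divides i$, the same compensation argument applies unchanged. Hence $H^\omega(\symbs)$ satisfies conditions~(i)--(iii) of Definition~\ref{def:embed:prefix} with respect to $\seq{w}$, completing the proof.
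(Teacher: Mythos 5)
Your proposal is correct and follows essentially the same route as the paper: the paper's proof of Theorem~\ref{thm:comp:prefix} is exactly the two-step reduction via Proposition~\ref{prop:frac:comp} and Lemma~\ref{lem:comp:prefix}, stated for binary words and relying on the earlier remark that the construction generalizes to arbitrary finite alphabets. Your additional care in spelling out that generalization (one reserved prime per output letter dividing the common denominator, and the corresponding adaptation of $h_i(\symbR)$ and the $\symbi/\symbI$ compensation) is sound and merely makes explicit what the paper leaves implicit.
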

\begin{proof}
  Let $\seq{w} \in \str{\{\szero,\sone\}}$ be
  an infinite \mbox{computable} word. 
  Then, by Proposition~\ref{prop:frac:comp}, 
  $\seq{w}$ is computed by some Fractran program.
  By Lemma~\ref{lem:comp:prefix} the claim follows.
\end{proof}

\begin{definition}\label{def:dol:comp:sparse}
  Let $\aprg$ be a Fractran program,
  and $\mcl{H}_{\aprg}$ the \pdol{} system 
  given in Definition~\ref{def:dol:comp:prefix}
  We define the \pdol{} system $\mcl{H}'_{\aprg}$ 
  as the result of replacing in $\mcl{H}_{\aprg}$ 
  the rules $h_i(\szero) = \szero$ and $h_i(\sone) = \sone$ 
  by
  $h_i(\szero) = \spa$ and $h_i(\sone) = \spa$,
  for all $i \in \nalph{\aden}$.
\end{definition}

\begin{lemma}\label{lem:comp:sparse}
  Let $\aprg$ be a Fractran program computing an infinite word~$\seq{w}$,
  and let $\seq{u}\in\str{\balph}$ be the \pdol{} word generated 
  by the $\mcl{H}'_\aprg$ defined in Definition~\ref{def:dol:comp:sparse}.
  Then $\seq{w}$ is sparsely embedded in $\seq{u}$.
\end{lemma}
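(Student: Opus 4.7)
The plan is to mirror the proof of Lemma~\ref{lem:comp:prefix}, adapting its invariants to the modified system~$\mcl{H}'_{\aprg}$. The decisive observation is that the morphisms of $\mcl{H}'_{\aprg}$ and $\mcl{H}_{\aprg}$ differ only on the images of $\szero$ and $\sone$, and the new images ($\spa$) have the same length (one) as the originals. Hence $|(H')^n(\symbs)| = |H^n(\symbs)|$ for every~$n$, and the morphism indexes~$z_n$ as well as the lengths of the contributions~$x'_n$ of the new system coincide with the corresponding quantities already computed in the proof of Lemma~\ref{lem:comp:prefix}; only the letters at positions occupied by carried-over bits can differ.

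First I would restate Lemma~\ref{lem:comp:growth} for $\mcl{H}'_{\aprg}$. Equation~\eqref{eq:lem:comp:growth:1} is unaffected, as its left-hand side contains no bits. In the analog of equation~\eqref{eq:lem:comp:growth:2} the old bits in the field~$v$ are erased to $\nspa{|v|}$ (because $h'_i(\szero) = h'_i(\sone) = \spa$), while $\symbR$ still contributes a fresh bit $\obit{N}$; the right-hand side therefore becomes
\begin{align*}
  \Xspa\,\nsymba{\aprg(N)}\,\symbb\,\Xspa\,\symbi\,\nspa{d-2-|v|-|\obit{N}|}\,\symbl\,\nspa{|v|}\,\obit{N}\,\symbr\,\Xspa\,\symbe,
\end{align*}
which has the same total length as in the original; only the bit field changes from $v\,\obit{N}$ to $\nspa{|v|}\,\obit{N}$.

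Next I would re-run the induction from the proof of Lemma~\ref{lem:comp:prefix}, with $v_i,\,Y_i,\,q_i,\,r_i,\,N_i$ as defined there, now under the adapted invariants
\begin{align*}
  x'_{2i}   & \simp \Xspa\,\nsymba{N_i}\,\symbb\,\Xspa\,\symbi\,Y_i\,\symbl\,\nspa{|v_{i-1}|}\,\obit{N_{i-1}}\,\symbr\,\Xspa\,\symbe,\\
  x'_{2i+1} & \simp \Xspa\,(\symbA\,\Xspa)^{q_i}\,\symbB\,\Xspa\,\symbI\,Y_i\,\symbL\,\nspa{|v_i|}\,\symbR\,\Xspa\,\symbE\,\nsmin{d-r_i}
\end{align*}
for $i \ge 1$ (with the obvious simplification for $i = 0$, where the bit field is empty). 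Since $|\nspa{|v_{i-1}|}\,\obit{N_{i-1}}| = |v_i|$, all surrounding lengths and residues agree with those already established for $\mcl{H}_{\aprg}$; only the content of the bit field evolves differently.

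Finally, these invariants show that the only positions of $\seq{u}' = (H')^\omega(\symbs)$ carrying a letter from $\aalph = \{\szero,\sone\}$ lie inside the contributions $x'_{2i}$ with $i \ge 1$, each contributing the single bit $\obit{N_{i-1}}$ (possibly empty). Concatenating these in order yields $\obit{N_0}\,\obit{N_1}\,\obit{N_2}\cdots$, which equals $W_{\!\aprg} = \seq{w}$ by Definition~\ref{def:fractran:output}, so erasing all letters of $\balph\setminus\aalph$ from $\seq{u}'$ returns~$\seq{w}$. The one non-trivial step is the adapted growth lemma; I expect its verification to be a routine exercise in length- and index-tracking thanks to the length-preserving character of the modification.
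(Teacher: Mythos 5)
Your proof is correct and is precisely the ``easy adaptation'' of Lemma~\ref{lem:comp:prefix} that the paper itself gives for this lemma in a single sentence (every output $\szero$ and $\sone$ is produced exactly once and replaced by $\spa$ in the next iteration); your adapted invariants simply make that adaptation explicit. One bookkeeping remark: since the fresh bit $\obit{N_{i-1}}$ first appears in the even line $x'_{2i}$ and must already be erased in $x'_{2i+1}$ (otherwise it would occur twice in the limit word and the sparse embedding would fail), it is the analogue of \eqref{eq:lem:comp:growth:1} --- not only that of \eqref{eq:lem:comp:growth:2} --- that has to map a bit in the field to $\spa$; your displayed invariants ($x'_{2i+1}$ carrying the all-blank field $\nspa{|v_i|}$) already reflect this correctly, so only the sentence claiming \eqref{eq:lem:comp:growth:1} is ``unaffected'' needs adjusting.
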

\begin{proof}
  By an easy adaptation of the proof of Lemma~\ref{lem:comp:prefix},
  noting that every output $\szero$ and $\sone$ is produced precisely
  once and in the next iteration replaced by $\spa$.
\end{proof}

\begin{theorem}\label{thm:comp:sparse}\label{thm:erasing:letters}
  Every computable infinite word can be sparsely embedded in a \pdol{} word (see Definition~\ref{def:embed:sparse}).
\end{theorem}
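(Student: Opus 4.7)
The plan is to mirror the proof of Theorem~\ref{thm:comp:prefix} essentially verbatim, since all the technical work has been delegated to the lemmas that precede the statement. Given a computable infinite word $\seq{w}$, I would first appeal to Proposition~\ref{prop:frac:comp} (together with the remark following Definition~\ref{def:fractran:output} that the construction generalizes from binary to arbitrary finite alphabets by assigning additional output primes) to obtain a Fractran program $\aprg$ that computes $\seq{w}$. I then take the \pdol{} system $\mcl{H}'_\aprg$ of Definition~\ref{def:dol:comp:sparse} and invoke Lemma~\ref{lem:comp:sparse}, which asserts precisely that $\seq{w}$ is sparsely embedded in the \pdol{} word it generates.

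There is no real obstacle at the level of the theorem itself; the only non-trivial content sits inside Lemma~\ref{lem:comp:sparse}, whose proof the authors dispatch by an ``easy adaptation'' of Lemma~\ref{lem:comp:prefix}. If I had to fill in that adaptation, I would re-run the induction from Lemma~\ref{lem:comp:prefix} on the factors $x_n$ with $H'^{n+1}(\symbs) = H'^n(\symbs)\,x_n$: only the rules on $\szero$ and $\sone$ have changed, and they do not occur in the source word $\symbs$, so the inductive shape of $x_{2i}$ and $x_{2i+1}$ (modulo $\simp$) is identical to that in Lemma~\ref{lem:comp:prefix} except that the growing prefix $v_i$ is no longer recorded; instead each bit produced by $\symbR$ persists for exactly one further iteration and is then rewritten to $\spa$ by the modified rules. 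Hence the subsequence of $H'^\omega(\symbs)$ consisting of letters from $\aalph = \{\szero,\sone\}$ reads off, in temporal order, exactly the bits emitted by $\aprg$, which by Definition~\ref{def:fractran:output} is $\seq{w}$.

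The hard part, such as it is, would be only the bookkeeping in the modified induction: verifying that no $\szero$ or $\sone$ from an earlier iteration ``survives'' to contaminate the sparse readout of a later iteration, and that the morphism-index accounting in equation~\eqref{eq:lem:comp:growth:2} of Lemma~\ref{lem:comp:growth} still goes through when the length of $v$ is effectively zero. Both are immediate from the fact that $h_i(\szero) = h_i(\sone) = \spa$ has length $1$ for every $i$, so all length and index congruences modulo~$d$ are preserved. Consequently the theorem follows.
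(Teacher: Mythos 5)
Your proposal matches the paper's proof exactly: the paper derives the theorem from Proposition~\ref{prop:frac:comp} together with Lemma~\ref{lem:comp:sparse} applied to the system $\mcl{H}'_{\aprg}$ of Definition~\ref{def:dol:comp:sparse}, and it justifies that lemma by precisely the adaptation of Lemma~\ref{lem:comp:prefix} you sketch, observing that every output $\szero$ and $\sone$ is produced exactly once and replaced by $\spa$ in the next iteration, so the length and index bookkeeping modulo~$d$ is unchanged. No gaps.
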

\begin{proof}
  Analogous to the proof of Theorem~\ref{thm:comp:prefix},
  replacing Lemma~\ref{lem:comp:prefix} by Lemma \ref{lem:comp:sparse}.
\end{proof}

It is known that the set of morphic words is closed under finite state transductions~\cite[Theorem~7.9.1]{allo:shal:2003}.
In particular, if we erase all occurrences of a certain letter from a morphic word, 
the result is a morphic (or finite) word.
From Theorem~\ref{thm:erasing:letters} it follows that this is not the case for \pdol{} words,
establishing a negative answer to Problem~29~(1) and~(2) of \cite{much:prit:seme:2009}.

\begin{corollary}\label{cor:fst}
  The set of \pdol{} words is not closed under finite state transductions.
\end{corollary}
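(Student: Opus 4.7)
The plan is to obtain the corollary as an immediate consequence of Theorem~\ref{thm:comp:sparse}, once we note two small auxiliary facts. First, the sparse embedding of Definition~\ref{def:embed:sparse} is itself a finite state transduction: the map $\str{\balph}\to\str{\aalph}$ that drops every letter in $\balph\setminus\aalph$ is realized by a single-state transducer which outputs each letter of $\aalph$ unchanged and erases every letter of $\balph\setminus\aalph$. Consequently, Theorem~\ref{thm:comp:sparse} can be rephrased as: \emph{every computable word lies in the image of some \pdol{} word under a finite state transduction}.

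Second, I would exhibit a computable infinite word that is not itself a \pdol{} word. There is a canonical computable enumeration $\mcl{H}_0, \mcl{H}_1, \ldots$ of all \pdol{} systems (they are finite syntactic objects), and for each productive $\mcl{H}_i$ one can compute arbitrarily long prefixes of the generated word~$\seq{u}_i$. A standard diagonalization then produces a computable binary word $\seq{w}$ that differs from every binary \pdol{} word, taking care to skip systems whose generated word is not an infinite binary sequence (or, equivalently, padding such cases by a default choice).

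Assembling the argument, suppose for contradiction that the class of \pdol{} words were closed under finite state transductions. Taking the diagonal word~$\seq{w}$ from the previous step, Theorem~\ref{thm:comp:sparse} supplies a \pdol{} word $\seq{u}$ on an extended alphabet $\balph \supset \{\szero,\sone\}$ from which $\seq{w}$ is obtained by the erasure transducer described above. Closure under finite state transductions would then force $\seq{w}$ to be a \pdol{} word, contradicting its choice. Hence the set of \pdol{} words is not closed under finite state transductions.

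The only mildly delicate point is the diagonalization step: one must argue that productivity does not obstruct uniform computation of prefixes of \emph{some} cofinal subenumeration of \pdol{} words over $\{\szero,\sone\}$. This is routine — either restrict the enumeration to globally $k$-uniform systems (which are trivially productive for $k\ge 2$ and whose words are all \pdol), or observe that the diagonal only needs to produce a single index at each stage where the $i$-th candidate has been determined up to position~$i$, which can be arranged by dovetailing the computations of the $H^n$-prefixes of all systems in parallel.
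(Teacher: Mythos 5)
Your argument has the same skeleton as the paper's proof: combine Theorem~\ref{thm:comp:sparse} with the observation that erasing all letters of $\balph\setminus\aalph$ is a finite state transduction, and with the existence of a computable word that is not a \pdol{} word. The only divergence is in the last ingredient: the paper simply cites \cite{culi:karh:1992} for the existence of computable streams that are not \pdol{} words, whereas you construct one by diagonalizing over an effective enumeration of \pdol{} systems. Your diagonalization can be made to work, but as written it has one genuine weak spot: your first fallback, restricting the enumeration to globally $k$-uniform systems, does \emph{not} suffice, since the diagonal word would then only be guaranteed to differ from the strictly smaller class of uniform \pdol{} words rather than from all of them. Your second fallback is the right one, but the point that makes it go through deserves to be explicit: although productivity is undecidable in general (Theorem~\ref{thm:prod}), for each fixed system and each fixed position $i$ one \emph{can} decide whether the generated word reaches position $i$ --- iterate $H$ until either $\length{H^n(s)}>i$ or $H^{n+1}(s)=H^n(s)$, and one of these must occur because $H^n(s)\wrdle H^{n+1}(s)$ for all $n$. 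With that observation the naive diagonal $\seq{w}(i)=1-\seq{u}_i(i)$ (defaulting to $0$ when position $i$ is unreached or the letter there is not in $\{0,1\}$) is total computable and differs from every infinite binary \pdol{} word, so no priority or dovetailing bookkeeping is actually needed. What your route buys is a self-contained argument that does not lean on \cite{culi:karh:1992}; what the paper's route buys is brevity.
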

  
\begin{proof}
  There are computable streams that are not \pdol{} words~\cite{culi:karh:1992};
  hence the class of \pdol{} words is not closed under finite state transductions,
  by Theorem~\ref{thm:erasing:letters} 
  (erasing letters is a finite state transduction).
\end{proof}

Finite state transducers play a central role in computer science.
The transducibility relation via finite state transducers (FST) gives
rise to a hierarchy of degrees of infinite words~\cite{endr:hend:klop:2011},
analogous to the recursion theoretic hierarchy.
But, in contrast to the latter, the FST-hierarchy does not identify all computable streams.
An open problem in this area is the lack of methods
for discriminating infinite words $\seq{u}, \seq{v}$,
that is, to show that there exists no finite state transducer
that transduces $\seq{u}$ to $\seq{v}$.
Discriminating morphic words seems to require heavier machinery
than arguments based on the pumping lemma.

We will now collect several immediate consequences of Theorem~\ref{thm:comp:prefix}.
First of all, we have solved the open problem~\cite{lepi:1993} 
on the existence of \pdol{} words that have exponential subword complexity. 
\begin{theorem}\label{thm:exp}
  There is a \pdol{} word~$\seq{u}$ 
  such that $\subwcompl{\seq{u}}{n} \ge 2^n$. 
\end{theorem}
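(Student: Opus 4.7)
The plan is to combine Theorem~\ref{thm:comp:prefix} with the properties of the word~$\BIN$ introduced in Definition~\ref{def:BIN}. First I would observe that $\BIN$ is a computable infinite binary word; indeed, Proposition~\ref{prop:BIN} tells us that the Fractran program $\iaprg{\BIN}$ computes~$\BIN$, so in particular $\BIN$ is computable in the sense required by Proposition~\ref{prop:frac:comp}.

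Next I would establish the key combinatorial property of $\BIN$: every finite binary word occurs as a factor of $\BIN$. This is immediate from the definition, because $\zrep{\cdot} \funin \nat \to \wrd{\{0,1\}}$ is a bijection, so every $w \in \wrd{\{0,1\}}$ equals $\zrep{k}$ for some $k \in \nat$, and $\zrep{k}$ appears as a block in the concatenation $\BIN = \zrep{0}\zrep{1}\zrep{2}\cdots$. In particular, all $2^n$ distinct binary words of length~$n$ occur as factors of $\BIN$, so $\subwcompl{\BIN}{n} \ge 2^n$.

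Now by Theorem~\ref{thm:comp:prefix}, there is a \pdol{} word $\seq{u}$ in which $\BIN$ is prefix embedded in the sense of Definition~\ref{def:embed:prefix}. Fix an arbitrary binary word $w$ of length $n$. Then $w$ occurs as a factor of some finite prefix $v$ of $\BIN$, and by clause~(i) of Definition~\ref{def:embed:prefix}, the word $\symbl\,v\,\symbr$ occurs as a factor of $\seq{u}$. Hence $w$ itself occurs as a factor of $\seq{u}$. Since this holds for all $2^n$ distinct binary words of length~$n$, we conclude $\subwcompl{\seq{u}}{n} \ge 2^n$, as required.

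There is essentially no obstacle: the heavy lifting has been done in the proof of Theorem~\ref{thm:comp:prefix} and in the construction of $\iaprg{\BIN}$. The only thing to keep straight is that prefix embedding preserves factors (any factor of a prefix of $\BIN$ is a factor of $\seq{u}$), and that $\BIN$ was specifically designed so that its prefixes already exhibit all $2^n$ binary patterns.
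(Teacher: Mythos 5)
Your proposal is correct and follows essentially the same route as the paper, which also takes the \pdol{} system obtained from $\aprg_{\BIN}$ via Definition~\ref{def:dol:comp:prefix} and invokes the prefix-embedding property (Lemma~\ref{lem:comp:prefix}); you merely spell out the counting step (bijectivity of $\zrep{\cdot}$, hence all $2^n$ binary words of length~$n$ occur in $\BIN$ and thus in $\seq{u}$) that the paper leaves implicit.
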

\begin{proof}
  Let $\aprg = \aprg_{\BIN}$ be the Fractran program defined in Section~\ref{sec:fractran},
  computing the word $W_{\!\aprg} = \BIN$ (Proposition~\ref{prop:BIN}).
  Furthermore, let $\mcl{H}_{\aprg} = \tuple{\balph,H,\symbs}$ be 
  the \pdol{} system of Definition~\ref{def:dol:comp:prefix}.
  Then, by Lemma~\ref{lem:comp:prefix}, 
  $\seq{u} = H^{\omega}(\symbs)$ is the word we are looking for.
\end{proof}

Lemma~\ref{lem:comp:prefix} also allows us to 
give a negative answer to \cite[Problem~29 (3)]{much:prit:seme:2009}.
%
\begin{theorem}\label{thm:letter}
  The following problems are undecidable: 
  \problem{%
    \pdol{} system $\mcl{H} = \tuple{\balph,H,\symbs}$, letter $b \in \balph$
  }{%
      \begin{enumerate}
        \item 
          Does $b$ occur in $H^\omega(\symbs)$?
        \item 
          Does $b$ occur infinitely many times in $H^\omega(\symbs)$?
      \end{enumerate}%
  }
  \noindent
\end{theorem}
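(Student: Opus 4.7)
The plan is to reduce from the input\nb-$2$ halting problem for Fractran, which is undecidable by Proposition~\ref{prop:fractran2}. Given an arbitrary Fractran program $\aprg$, I will construct a non\nb-terminating Fractran program $\bprg$ such that the computed word $W_{\!\bprg} \in \str{\{\szero,\sone\}}$ is infinite and satisfies: $\sone$ occurs in $W_{\!\bprg}$ iff $\sone$ occurs infinitely often in $W_{\!\bprg}$ iff $\aprg$ halts on input~$2$.

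To build $\bprg$, I would use the compilation scheme of Section~\ref{sec:fractran}: step\nb-by\nb-step simulate $\aprg$ (using primes disjoint from $\{2,3,5\}$, which is harmless by Remark~\ref{rem:primes:interchangeable}), and interleave between successive simulation steps a brief ``emit $\szero$'' state that multiplies by~$3$ and then clears it. If the simulation of $\aprg$ ever reaches a configuration on which no fraction is applicable, redirect control to a small loop that forever emits $\sone$'s via prime~$5$. Thus $W_{\!\bprg} = \szero^\omega$ if $\aprg$ runs forever on $2$, and $W_{\!\bprg} = \szero^k \sone^\omega$ (for some $k \ge 0$) if $\aprg$ halts on $2$; in either case $W_{\!\bprg}$ is infinite, so that the prefix\nb-embedding construction applies.

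Next, I feed $\bprg$ into the encoding of Definition~\ref{def:dol:comp:prefix} and invoke Lemma~\ref{lem:comp:prefix} to obtain a \pdol{} system $\mcl{H}_{\bprg} = \tup{\balph,H,\symbs}$ whose generated word $\seq{u}$ prefix\nb-embeds $W_{\!\bprg}$. By condition~(iii) of Definition~\ref{def:embed:prefix}, the letter $\sone$ appears in $\seq{u}$ only inside factors $\symbl\,v\,\symbr$ with $v \in \wrd{(\balph\setminus\{\symbr\})}$; by conditions~(i) and~(ii), these $v$'s are precisely the finite prefixes of $W_{\!\bprg}$, with arbitrarily long prefixes occurring. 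Hence $\sone$ occurs in $\seq{u}$ iff $\sone$ occurs in some prefix of $W_{\!\bprg}$ iff $\sone$ occurs in $W_{\!\bprg}$; moreover, since each longer prefix containing a $\sone$ yields a fresh marked occurrence in $\seq{u}$, the analogous equivalence for ``infinitely often'' also holds. This reduces both problems (1) and (2), on input $(\mcl{H}_{\bprg},\sone)$, to Fractran halting on $2$, establishing undecidability.

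The only real obstacle is the concrete Fractran programming of $\bprg$ from $\aprg$, but this is routine within the framework of Section~\ref{sec:fractran}: the compilation of Fractran\nb-$n$ programs to Fractran\nb-$1$ programs lets us freely design $\bprg$ using named states and transitions, and detecting ``no fraction applicable'' amounts to transitioning, through an otherwise unreachable sink state of $\aprg$'s simulation, into the $\sone$\nb-producing loop. No subtle issue arises beyond careful bookkeeping of register contents and state indicators.
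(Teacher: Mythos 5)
Your proposal is correct and follows essentially the same route as the paper: reduce from the input\nb-$2$ halting problem (Proposition~\ref{prop:fractran2}) by building a non\nb-terminating Fractran program whose infinite output word contains a $\sone$ iff the given program halts, then pass it through Lemma~\ref{lem:comp:prefix} and use the prefix\nb-embedding to transfer occurrence (and infinite occurrence) of $\sone$ to the \pdol{} word. The only difference is the gadget: the paper avoids your interleaved ``emit $\szero$'' states by simply prepending $\frac{3\cdot 7}{2}$ (so a persistent factor~$3$ makes every running step output $\szero$) and appending $\frac{5}{3},\frac{1}{1}$ to switch to $\sone^\omega$ upon termination, which is a more economical realization of the same idea.
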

\begin{proof}
  We show that the following problem is undecidable: 
  given a Fractran program $\aprg$ computing an infinite word $\seq{w}$ over the alphabet~$\{0,1\}$, 
  does the letter~$1$ occur in $\seq{w}$?
  This suffices since by Lemma~\ref{lem:comp:prefix},
  if $\seq{u}$ is the infinite word generated by  $\mcl{H}_\aprg$,
  then the letter $1$ occurs in~$\seq{u}$
  if and only if $1$ 
  occurs infinitely often in $\seq{u}$
  if and only if $1$ 
  occurs in $\seq{w}$.

  We use the input~$2$ halting problem for Fractran programs which
  is $\csig{0}{1}$-complete by Proposition~\ref{prop:fractran2}.
  Let $\aprg$ be an arbitrary Fractran program.
  By Remark~\ref{rem:primes:interchangeable}
  we can replace the primes in $\aprg$ to obtain a program $\aprg'$ that
  does not contain the primes $\{2,3,5\}$ such that 
  $\aprg'$ halts on $7$ if and only if $\aprg$ halts on $2$.
  We now extend $\aprg'$ to $\aprg''$ by adding in front
  the fraction $\frac{3\cdot 7}{2}$ and at the end the fractions $\frac{5}{3}$~and~$\frac{1}{1}$.
  Then the first fraction of $\aprg''$ starts $\aprg'$ on input $7$ and 
  ensures that the output is $0$ for every step that $\aprg'$ is running, and only
  when $\aprg'$ terminates, the last two fractions of $\aprg''$ 
  switch the output to $1$ and keep running forever.
\end{proof}

From Theorem~\ref{thm:letter} it follows immediately that 
the first-order (and monadic second-order) theory of \pdol{} words is undecidable,
answering~\cite[Problem~28]{much:prit:seme:2009};
see~\cite{much:prit:seme:2009} also for the definition of the first-order and monadic theory of a sequence.
This again stands in contrast to the case for morphic sequences,
which are known to have a decidable monadic second-order theory~\cite{cart:thom:2002}.
\begin{corollary}\label{cor:logic:undecidable}
  The first-order theory of \pdol{} words 
  is undecidable.
\end{corollary}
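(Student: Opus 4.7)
The plan is to carry out an immediate reduction from the letter occurrence problem established in Theorem~\ref{thm:letter}. Following \cite{much:prit:seme:2009}, the first-order theory of a \pdol{} word $\seq{u}\in\str{\balph}$ is the theory of the relational structure with domain $\nat$ equipped with the order $<$ and unary predicates $P_a$ (one for each $a\in\balph$) with the intended meaning $P_a(n) \iff \seq{u}(n) = a$. Deciding this theory means: given a \pdol{} system $\mcl{H} = \tup{\balph,H,s}$ and a first-order sentence $\varphi$ over this signature, determine whether $\varphi$ holds in the structure induced by $H^\omega(s)$.

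The key observation is that, for each letter $b\in\balph$, the property ``$b$ occurs in $H^\omega(s)$'' is expressible by the very simple existential sentence
\[
  \varphi_b \,=\, \exists n.\, P_b(n),
\]
while the property ``$b$ occurs infinitely often in $H^\omega(s)$'' is expressible as $\forall n.\, \exists m.\, (n < m \wedge P_b(m))$. Hence any decision procedure for the first-order theory of \pdol{} words would, as a special case, decide on input $\mcl{H}$ and $b$ whether $b$ occurs (respectively occurs infinitely often) in the \pdol{} word generated by $\mcl{H}$.

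Since these occurrence problems were shown to be undecidable in Theorem~\ref{thm:letter}, no such decision procedure can exist, and the corollary follows. There is essentially no obstacle here: both the translation from the letter occurrence problem to a first-order sentence and the reduction itself are trivial, and the entire argument fits on a single line once Theorem~\ref{thm:letter} is in hand. The only mild care needed is to fix the signature of the structure, but the sentences $\varphi_b$ use only the unary predicates $P_b$ and a single existential quantifier, so the reduction is robust against any reasonable choice of signature extending this minimal fragment.
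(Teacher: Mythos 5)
Your argument is correct and is exactly the route the paper takes: the corollary is stated as an immediate consequence of Theorem~\ref{thm:letter}, since letter occurrence is expressible by the one-quantifier sentence $\exists n.\,P_b(n)$ in the first-order language of the word. You have merely made explicit the reduction that the paper leaves implicit, so there is nothing to add.
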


Also immediate from Theorem~\ref{thm:letter} is the undecidability of equivalence of \pdol{} systems
(equality of the limit words they generate).
We note that equivalence of \dol{} systems is decidable~\cite{culi:harj:1984},
whereas that of \cdol{} words is an open problem.
\begin{corollary}\label{cor:equality:undecidable}
  Equality of \pdol{} words (given by their \pdol{} systems) is undecidable.
\end{corollary}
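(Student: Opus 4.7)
The plan is to reduce from Theorem~\ref{thm:letter}: given a \pdol{} system $\mcl{H} = \tup{\aalph,H,s}$ with $H = \tup{h_0,\ldots,h_{p-1}}$ and a distinguished letter $b\in\aalph$, I will construct effectively a second \pdol{} system $\mcl{H}'$ such that $\mcl{H}$ and $\mcl{H}'$ generate the same infinite word if and only if $b$ does not occur in $H^\omega(s)$. Since the letter-occurrence problem is undecidable by Theorem~\ref{thm:letter}, undecidability of \pdol{} equality will follow.

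The construction uses that \pdol{} words are closed under renaming a single letter. Pick a fresh symbol $b'\notin\aalph$ and let $\balph = \aalph\cup\{b'\}$; let $\pi\funin\aalph\to\balph$ be the coding sending $b$ to $b'$ and fixing the other letters. Put $s' = \pi(s)$, and for each $i\in\nalph{p}$ define $h'_i\funin\wrd{\balph}\to\wrd{\balph}$ by
\begin{align*}
  h'_i(a) & = \pi(h_i(a)) & \text{for $a\in\aalph\setminus\{b\}$,}\\
  h'_i(b') & = \pi(h_i(b)),\\
  h'_i(b) & = b & \text{(a dummy, since $b$ never appears in any image).}
\end{align*}
Let $\mcl{H}' = \tup{\balph,H',s'}$ with $H' = \tup{h'_0,\ldots,h'_{p-1}}$, and extend $\mcl{H}$ trivially to $\balph$ by adding $h_i(b') = b'$ so that the two systems live on the same alphabet.

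A straightforward induction on $n$, using only that $h'_i(\pi(a)) = \pi(h_i(a))$ for every $a\in\aalph$ and every $i$, yields $(H')^n(s') = \pi(H^n(s))$ for all $n\ge 0$. Applied at $n=1$ this also checks $s'\wrdle H'(s')$, so $\mcl{H}'$ is indeed a \pdol{} system. Passing to the limit in the metric space $\pair{\balph^\infty}{\smetric}$ gives $(H')^\omega(s') = \pi(H^\omega(s))$. Since $\pi$ restricts to the identity on words over $\balph\setminus\{b\}$, the two limits $H^\omega(s)$ and $(H')^\omega(s')$ coincide if and only if $b$ does not occur in $H^\omega(s)$. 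An equality algorithm for \pdol{} words would therefore decide the question of Theorem~\ref{thm:letter}, a contradiction.

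The proof is essentially bookkeeping; there is no real obstacle beyond being careful that (a) the renamed system is still a well-formed \pdol{} system (its seed is a prefix of its first iterate), and (b) both systems are placed over the common alphabet $\balph$ so that equality of generated words is literally a comparison of functions $\nat\to\balph$. Both points are handled by the trivial extension to $b'$ above.
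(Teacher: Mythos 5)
Your proposal is correct and is essentially the paper's own proof: both reduce from Theorem~\ref{thm:letter} by renaming the distinguished letter $b$ to a fresh letter $b'$ throughout the system, so that the renamed system generates $\pi(H^\omega(s))$ and the two words coincide exactly when $b$ does not occur. You merely spell out the commutation $(H')^n(s') = \pi(H^n(s))$ and the alphabet bookkeeping that the paper leaves implicit.
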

\begin{proof}
  We reduce problem~(i) stated in Theorem~\ref{thm:letter} 
  to equivalence of \pdol{} systems, as follows.
  Let $\mcl{H} = \tup{\aalph,H,s}$ be a \pdol{} system and $b \in \aalph$, 
  and let $\mcl{H}' = \tup{\aalph\cup\{b'\},H',s'}$ where $b' \not\in \aalph$ and
  $H'$ and $s'$ are obtained from $H$ and $s$
  by replacing all occurrences of $b$ by $b'$, and letting $H'(b)=b$.
  Then $b$ does not occur in the word generated by $\mcl{H}$ if and only if $\mcl{H}$ and $\mcl{H'}$ 
  generate the same word. By Theorem~\ref{thm:letter} this is undecidable.
\end{proof}

\section[A Concrete PD0L Word with Exponential Subword Complexity]{A Concrete \pdol{} Word \\ with Exponential Subword Complexity}\label{sec:exp}
\newcommand{\act}{a}
\newcommand{\nrestart}{r}
\newcommand{\carry}{c}
\newcommand{\one}{o}
\renewcommand{\symbb}{\cbox{corange}{\symbfont{b}}}
\renewcommand{\symbB}{\cbox{corange}{\symbfont{B}}}
\renewcommand{\symbR}[1]{\cbox{cgreen}{\symbfont{R}_{#1}}}
\newcommand{\symbRns}{\cbox{cgreen}{\symbfont{R}}}
\renewcommand{\symbI}[1]{\cbox{cdarkgreen}{\symbfont{Z}_{#1}}}
\newcommand{\symbIns}{\cbox{cdarkgreen}{\symbfont{Z}}}

In this section we give a concrete example of a \pdol{} system 
which generates an infinite word with exponential subword complexity.
The word embeds all prefixes of the word $\BIN = \zrep{0} \zrep{1} \zrep{2} \cdots$ 
given in Definition~\ref{def:BIN}.
We refrain from proving that it indeed does have this property;
the existence of such a \pdol{} word is already proved in the previous section, 
see Theorem~\ref{thm:comp:prefix}.

We define a \pdol{} system $H = \tuple{h_0,h_1,\ldots,h_{15}}$
consisting of $16$ morphisms.
We express morphism indices $i \in \nalph{16}$ by linear combinations
\begin{align*}
  i = \act(i) \cdot 2^3 + \nrestart(i) \cdot 2^2 + \carry(i) \cdot 2^1 + \one(i) \cdot 2^0.
\end{align*}
with $\act(i),\nrestart(i),\carry(i),\one(i) \in \{0,1\}$ which we call \emph{flags}.
We use these flags to transmit information between symbols:
\begin{itemize}
  \item $\act(i) = 1$ stands for \emph{active},
  \item $\nrestart(i) = 1$ stands for \emph{running},
  \item $\carry(i) = 1$ stands for \emph{carry flag}, and
  \item $\one(i) = 1$ stands for \emph{output one}.
\end{itemize}
The idea is to simulate a binary counter, 
using the representation of Definition~\ref{def:BIN}.
The counter repeatedly increments ($+1$) the current value,
and thereby brings $\zrep{n}$ to $\zrep{n+1}$.
During an increment process we need to shift the activity from bit to bit.
To this end, the activity flag $\act(i)$ indicates 
whether a symbol at morphism index $i$ is active.

We explain the increment process using the following example word.
Here $\cbox{chighlight}{\ldots}$ is the already produced prefix of $\BIN$,
and we assume for the moment that the symbols $\symba$,  $\symbb$ and $\symbd$ 
each stand for a word of length $16$, and $\symbc$ for a word of length $8$.
\begin{align}
  \xline{
    \sat{\symba}{\act} \sat{\symbc}{\act} \sat{\symbb}{} \sat{\symbc}{} \sat{\symbc}{\act} \sat{\symbb}{} \sat{\symbc}{} \sat{\symbc}{\act} \sat{\symbc}{} \sat{\symbc}{\act} \sat{\symba}{} \sat{\symbL}{} \sat{\cbox{chighlight}{\ldots}}{} \sat{\symbRns}{}
  }
  \label{eq:0110:0}
\end{align}
Here $\symba$ and $\symbb$ represent the bits $\szero$ and $\sone$, respectively,
and we shall continue to call them bits.
Ignoring the $\symbc$'s in between, \eqref{eq:0110:0} represents the word $\cbox{chighlight}{0110}$ 
(in turn representing the integer $21$).
Apart from incrementing this initial word~$\symba\symbb\symbb\symba$,
it is at the same time `copied' bit by bit to the word $\cbox{chighlight}{0110}$ 
between symbols $\symbL$ and~$\symbRns$.
The least significant bit is left,
and consequently the process of incrementing will proceed from left to right.
The symbol $\symbc$ (being of length $8$) swaps the value of $\act(i)$ 
for the morphism index~$i$ of all subsequent letters.
Note that between the $n$\nb-th and $(n+1)$\nb-th occurrence of bits ($\symba$ or $\symbb$), 
there are $2^{n-1}$ $\symbc$'s.
Hence, if the first bit is active, 
then this is the only active bit.

We now describe the transition from~\eqref{eq:0110:0} to its \pdol{} image~\eqref{eq:0110:1}.
Note that starting from the first occurrence of $\symbc$, 
every second occurrence in \eqref{eq:0110:0} has the activity flag set.
When the symbol $\symbc$ is active, it will be eliminated, that is
replaced by the symbol~$\symbd$ (of assumed length~$16$),
thus activating the next bit for the next iteration~\eqref{eq:0110:1}. 
\begin{align}
  \xline{
    \sat{\symbB}{\act} \sat{\symbd}{\act} \sat{\symbb}{\act} \sat{\symbc}{\act}\sat{\symbd}{} \sat{\symbb}{} \sat{\symbc}{} \sat{\symbd}{\act}\sat{\symbc}{\act}\sat{\symbd}{} \sat{\symba}{} \sat{\symbL}{} \sat{\cbox{chighlight}{\ldots}}{} \sat{\szero}{} \sat{\symbRns}{}
  }
  \label{eq:0110:1}
\end{align}
Note that $\symba$ is replaced by $\symbB$; 
uppercase letters are used for indicating 
the already processed bits during the increment loop.
When the increment loop is finished, uppercase will be turned to lowercase,
and the process restarts.
The switch from $\symba$ to $\symbB$ corresponds to incrementing.
This is controlled by the carry flag indicating whether a bit has to be flipped.
The carry flag is always set at the start of an increment loop.
To keep this example simple we do not display this flag.
At the end of this section we give the first iterations of the \pdol{} system
displaying all flags.

Notice that in \eqref{eq:0110:1} the second bit $\symbb$ is the only active bit (ignoring $\symbB$ which we have already dealt with).
Again, eliminating the active $\symbc$'s will shift the activity to the following bit:
\begin{align}
  \xline{
    \sat{\symbB}{\act} \sat{\symbd}{\act} \sat{\symbB}{\act} \sat{\symbd}{\act}\sat{\symbd}{\act} \sat{\symbb}{\act} \sat{\symbc}{\act} \sat{\symbd}{}\sat{\symbd}{}\sat{\symbd}{} \sat{\symba}{} \sat{\symbL}{} \sat{\cbox{chighlight}{\ldots}}{} \sat{\szero}{} \sat{\sone}{} \sat{\symbRns}{}
  }
\end{align}
After one more step we obtain:
\begin{align}
  \xline{
    \sat{\symbB}{\act} \sat{\symbd}{\act} \sat{\symbB}{\act} \sat{\symbd}{\act}\sat{\symbd}{\act} \sat{\symbB}{\act} \sat{\symbd}{\act}\sat{\symbd}{\act}\sat{\symbd}{\act}\sat{\symbd}{\act} \sat{\symba}{\act} \sat{\symbL}{\act} \sat{\cbox{chighlight}{\ldots}}{} \sat{\szero}{} \sat{\sone}{} \sat{\sone}{} \sat{\symbRns}{\act}
  }
\end{align}
and finally:
\begin{align}
  \xline{
    \sat{\symbB}{\act} \sat{\symbd}{\act} \sat{\symbB}{\act} \sat{\symbd}{\act}\sat{\symbd}{\act} \sat{\symbB}{\act} \sat{\symbd}{\act}\sat{\symbd}{\act}\sat{\symbd}{\act}\sat{\symbd}{\act} \sat{\symbA}{\act} \sat{\symbL}{\act} \sat{\cbox{chighlight}{\ldots}}{} \sat{\szero}{} \sat{\sone}{} \sat{\sone}{} \sat{\szero}{} \sat{\symbRns}{\act}
  }
\end{align}
As soon as the most significant bit $\symba$ is active, $\symbRns$ becomes active as well.
This can be used to recognize when the addition is finished,
and then $\symbRns$ unsets the bit $\nrestart(i)$ to restart the addition procedure.

The active bit makes use of the flag $\one(i)$ to `communicate' with the symbol~$\symbRns$
whether to output a $\szero$ or $\sone$.
This actually means that $\symbRns$ can produce the $\szero$ or $\sone$ only
two iterations later; for simplicity we have in this intuitive explanation abstracted
from this technicality and produce the $\szero$'s and $\sone$'s in the immediately
following iteration (after a bit has become active).

There are more symbols and technical subtleties to be explained, 
but we leave this to the imagination of the reader. Enjoy!

The morphisms $h_i$ are defined for all $i \in \nalph{16}$ as follows:
\begin{align*}
  h_i(\symbs) &= 
  \symbs\nspa{13}\splu\symba\nspa{15}\symbc\nspa{7}\symbX\nspa{15}\symbO\nspa{15}\symbI{1}\nspa{14}\symbL\symbR{1}\nspa{8}\\
  h_i(\nspa{}) &= 
  \nspa{}\\
  h_i(\splu) &= 
  \nspa{2}\\
  h_i(\nsmin{}) &= 
  \nspa{16}\\
  h_i(\symbp^{}) &= 
  \nspa{2}\nsmin{}\\
  h_i(\szero) &= 
  \szero\\
  h_i(\sone) &= 
  \sone\\
  h_i(\symba) &= 
  \begin{cases}
  \symba&\text{if }\neg \act(i)\\
  \nspa{14}\symbp^{}\symbA&\text{if }\act(i)  \wedge \neg \carry(i)\\
  \nspa{14}\symbp^{}\nspa{12}\nsplu{2}\symbB&\text{if }\act(i) \wedge \carry(i)
  \end{cases}\\
  h_i(\symbA) &= 
  \begin{cases}
  \symba&\text{if }\neg \nrestart(i)\\
  \symbA&\text{if }\nrestart(i)
  \end{cases}\\
  h_i(\symbb) &= 
  \begin{cases}
  \symbb&\text{if }\neg \act(i)\\
  \symbB&\text{if }\act(i)  \wedge \neg \carry(i)\\
  \symbA&\text{if }\act(i) \wedge \carry(i)
  \end{cases}\\
  h_i(\symbB) &= 
  \begin{cases}
  \symbb&\text{if }\neg \nrestart(i)\\
  \symbB&\text{if }\nrestart(i)
  \end{cases}\\
  h_i(\symbc) &= 
  \begin{cases}
  \symbc&\text{if }\neg \act(i)\\
  \symbd\nspa{8}&\text{if }\act(i)
  \end{cases}\\
  h_i(\symbd) &= 
  \begin{cases}
  \symbc\nspa{8}&\text{if }\neg \nrestart(i)\\
  \symbd&\text{if }\nrestart(i)
  \end{cases}\\
  h_i(\symbX) &= 
  \begin{cases}
  \symbX&\text{if }\neg \carry(i)\\
  \symbX&\text{if }\carry(i)  \wedge \neg \act(i)\\
  \symba&\text{if }\carry(i) \wedge \act(i)
  \end{cases}\\
  h_i(\symbo) &= 
  \begin{cases}
  \symbo&\text{if }\neg \carry(i)\\
  \symbo&\text{if }\carry(i)  \wedge \neg \act(i)\\
  \symbd&\text{if }\carry(i) \wedge \act(i)
  \end{cases}\\
  h_i(\symbO) &= 
  \begin{cases}
  \symbO&\text{if }\neg \act(i)\\
  \symbo\nspa{15}\symbO&\text{if }\act(i)  \wedge \neg \carry(i)\\
  \symbd\nspa{15}\symbX\nspa{15}\symbO&\text{if }\act(i) \wedge \carry(i)
  \end{cases}\\
  h_i(\symbIns) &= 
  \begin{cases}
  \symbIns&\text{if }\neg \nrestart(i)\\
  \symbIns\nspa{15}&\text{if }\nrestart(i)  \wedge \neg \act(i)\\
  \symbI{3}\nspa{15}&\text{if }\nrestart(i) \wedge \act(i)
  \end{cases}\\
  h_i(\symbI{1}) &= 
  \symbIns\\
  h_i(\symbI{2}) &= 
  \symbI{1}\\
  h_i(\symbI{3}) &= 
  \symbI{2}\\
  h_i(\symbL) &= 
  \symbL\\
  h_i(\symbRns) &= 
  \begin{cases}
  \symbRns&\text{if }\neg \nrestart(i)\\
  \szero\symbRns&\text{if }\nrestart(i)  \wedge \neg \one(i)  \wedge \neg \act(i)\\
  \szero\symbR{3}\nspa{8}\symbp^{4}\nsmin{10}&\text{if }\nrestart(i)  \wedge \neg \one(i) \wedge \act(i)  \wedge \neg \carry(i)\\
  \szero\symbR{3}\nspa{8}\symbp^{4}\nsmin{8}&\text{if }\nrestart(i)  \wedge \neg \one(i) \wedge \act(i) \wedge \carry(i)\\
  \sone\symbRns&\text{if }\nrestart(i) \wedge \one(i)  \wedge \neg \act(i)\\
  \sone\symbR{3}\nspa{8}\symbp^{4}\nsmin{10}&\text{if }\nrestart(i) \wedge \one(i) \wedge \act(i)  \wedge \neg \carry(i)\\
  \sone\symbR{3}\nspa{8}\symbp^{4}\nsmin{8}&\text{if }\nrestart(i) \wedge \one(i) \wedge \act(i) \wedge \carry(i)
  \end{cases}\\
  h_i(\symbR{1}) &= 
  \symbRns\\
  h_i(\symbR{2}) &= 
  \symbR{1}\\
  h_i(\symbR{3}) &= 
  \symbR{2}
\end{align*}
The \pdol{} word $H^\omega(\symbs)$ starts as follows:%
\renewcommand{\sat}[6]{
  \node (n) [at=(n.south east),anchor=south west] {$#1$};
  \node (a) [at=(n.south),anchor=north,yshift=-1mm] {{\tiny #2}};
  \node (a) [at=(n.south),anchor=north,yshift=-3.5mm] {{\tiny #3}};
  \node (a) [at=(n.south),anchor=north,yshift=-5mm] {{\tiny #4}};
  \node (a) [at=(n.south),anchor=north,yshift=-6.5mm] {{\tiny #5}};
  \node (a) [at=(n.south),anchor=north,yshift=-8mm] {{\tiny #6}};
}
\begin{align*}
  &\xline{ \sat{\symbs}{0}{{\color{white}\act}}{{\color{white}\nrestart}}{{\color{white}\carry}}{{\color{white}\one}}\sat{\nspa{13}}{1}{{\color{white}\act}}{{\color{white}\nrestart}}{{\color{white}\carry}}{\one}\sat{\nsplu{}}{14}{\act}{\nrestart}{\carry}{{\color{white}\one}}\sat{\symba}{15}{\act}{\nrestart}{\carry}{\one}\sat{\nspa{15}}{0}{{\color{white}\act}}{{\color{white}\nrestart}}{{\color{white}\carry}}{{\color{white}\one}}\sat{\symbc}{15}{\act}{\nrestart}{\carry}{\one}\sat{\nspa{7}}{0}{{\color{white}\act}}{{\color{white}\nrestart}}{{\color{white}\carry}}{{\color{white}\one}}\sat{\symbX}{7}{{\color{white}\act}}{\nrestart}{\carry}{\one}\sat{\nspa{15}}{8}{\act}{{\color{white}\nrestart}}{{\color{white}\carry}}{{\color{white}\one}}\sat{\symbO}{7}{{\color{white}\act}}{\nrestart}{\carry}{\one}\sat{\nspa{15}}{8}{\act}{{\color{white}\nrestart}}{{\color{white}\carry}}{{\color{white}\one}}\sat{\symbI{1}}{7}{{\color{white}\act}}{\nrestart}{\carry}{\one}\sat{\nspa{14}}{8}{\act}{{\color{white}\nrestart}}{{\color{white}\carry}}{{\color{white}\one}}\sat{\symbL}{6}{{\color{white}\act}}{\nrestart}{\carry}{{\color{white}\one}}\sat{\symbR{1}{}}{7}{{\color{white}\act}}{\nrestart}{\carry}{\one} \sat{\nspa{8}}{8}{\act}{{\color{white}\nrestart}}{{\color{white}\carry}}{{\color{white}\one}} }\\
  &\xline{ \sat{\nspa{29}}{0}{{\color{white}\act}}{{\color{white}\nrestart}}{{\color{white}\carry}}{{\color{white}\one}}\sat{\nsymbpp{}}{13}{\act}{\nrestart}{{\color{white}\carry}}{\one}\sat{\nspa{12}}{14}{\act}{\nrestart}{\carry}{{\color{white}\one}}\sat{\nsplu{2}}{10}{\act}{{\color{white}\nrestart}}{\carry}{{\color{white}\one}}\sat{\symbB}{12}{\act}{\nrestart}{{\color{white}\carry}}{{\color{white}\one}}\sat{\nspa{15}}{13}{\act}{\nrestart}{{\color{white}\carry}}{\one}\sat{\symbd}{12}{\act}{\nrestart}{{\color{white}\carry}}{{\color{white}\one}}\sat{\nspa{15}}{13}{\act}{\nrestart}{{\color{white}\carry}}{\one}\sat{\symbX}{12}{\act}{\nrestart}{{\color{white}\carry}}{{\color{white}\one}}\sat{\nspa{15}}{13}{\act}{\nrestart}{{\color{white}\carry}}{\one}\sat{\symbO}{12}{\act}{\nrestart}{{\color{white}\carry}}{{\color{white}\one}}\sat{\nspa{15}}{13}{\act}{\nrestart}{{\color{white}\carry}}{\one}\sat{\symbIns}{12}{\act}{\nrestart}{{\color{white}\carry}}{{\color{white}\one}}\sat{\nspa{14}}{13}{\act}{\nrestart}{{\color{white}\carry}}{\one}\sat{\symbL}{11}{\act}{{\color{white}\nrestart}}{\carry}{\one}\sat{\symbRns{}}{12}{\act}{\nrestart}{{\color{white}\carry}}{{\color{white}\one}} \sat{\nspa{8}}{13}{\act}{\nrestart}{{\color{white}\carry}}{\one} }\\
  &\xline{ \sat{\nspa{29}}{5}{{\color{white}\act}}{\nrestart}{{\color{white}\carry}}{\one}\sat{(\nspa{2}\smin)^{}}{2}{{\color{white}\act}}{{\color{white}\nrestart}}{\carry}{{\color{white}\one}}\sat{\nspa{16}}{5}{{\color{white}\act}}{\nrestart}{{\color{white}\carry}}{\one}\sat{\symbB}{5}{{\color{white}\act}}{\nrestart}{{\color{white}\carry}}{\one}\sat{\nspa{15}}{6}{{\color{white}\act}}{\nrestart}{\carry}{{\color{white}\one}}\sat{\symbd}{5}{{\color{white}\act}}{\nrestart}{{\color{white}\carry}}{\one}\sat{\nspa{15}}{6}{{\color{white}\act}}{\nrestart}{\carry}{{\color{white}\one}}\sat{\symbX}{5}{{\color{white}\act}}{\nrestart}{{\color{white}\carry}}{\one}\sat{\nspa{15}}{6}{{\color{white}\act}}{\nrestart}{\carry}{{\color{white}\one}}\sat{\symbo}{5}{{\color{white}\act}}{\nrestart}{{\color{white}\carry}}{\one}\sat{\nspa{15}}{6}{{\color{white}\act}}{\nrestart}{\carry}{{\color{white}\one}}\sat{\symbO}{5}{{\color{white}\act}}{\nrestart}{{\color{white}\carry}}{\one}\sat{\nspa{15}}{6}{{\color{white}\act}}{\nrestart}{\carry}{{\color{white}\one}}\sat{\symbI{3}}{5}{{\color{white}\act}}{\nrestart}{{\color{white}\carry}}{\one}\sat{\nspa{29}}{6}{{\color{white}\act}}{\nrestart}{\carry}{{\color{white}\one}}\sat{\symbL}{3}{{\color{white}\act}}{{\color{white}\nrestart}}{\carry}{\one}\sat{\szero}{4}{{\color{white}\act}}{\nrestart}{{\color{white}\carry}}{{\color{white}\one}}\sat{\symbR{2}{}}{5}{{\color{white}\act}}{\nrestart}{{\color{white}\carry}}{\one} \sat{\nspa{8}}{6}{{\color{white}\act}}{\nrestart}{\carry}{{\color{white}\one}} }\\
  &\xline{ \sat{\nsymbpp{4}}{14}{\act}{\nrestart}{\carry}{{\color{white}\one}}\sat{\nsmin{10}}{2}{{\color{white}\act}}{{\color{white}\nrestart}}{\carry}{{\color{white}\one}}\sat{\nspa{71}}{12}{\act}{\nrestart}{{\color{white}\carry}}{{\color{white}\one}}\sat{\symbB}{3}{{\color{white}\act}}{{\color{white}\nrestart}}{\carry}{\one}\sat{\nspa{15}}{4}{{\color{white}\act}}{\nrestart}{{\color{white}\carry}}{{\color{white}\one}}\sat{\symbd}{3}{{\color{white}\act}}{{\color{white}\nrestart}}{\carry}{\one}\sat{\nspa{15}}{4}{{\color{white}\act}}{\nrestart}{{\color{white}\carry}}{{\color{white}\one}}\sat{\symbX}{3}{{\color{white}\act}}{{\color{white}\nrestart}}{\carry}{\one}\sat{\nspa{15}}{4}{{\color{white}\act}}{\nrestart}{{\color{white}\carry}}{{\color{white}\one}}\sat{\symbo}{3}{{\color{white}\act}}{{\color{white}\nrestart}}{\carry}{\one}\sat{\nspa{15}}{4}{{\color{white}\act}}{\nrestart}{{\color{white}\carry}}{{\color{white}\one}}\sat{\symbO}{3}{{\color{white}\act}}{{\color{white}\nrestart}}{\carry}{\one}\sat{\nspa{15}}{4}{{\color{white}\act}}{\nrestart}{{\color{white}\carry}}{{\color{white}\one}}\sat{\symbI{2}}{3}{{\color{white}\act}}{{\color{white}\nrestart}}{\carry}{\one}\sat{\nspa{29}}{4}{{\color{white}\act}}{\nrestart}{{\color{white}\carry}}{{\color{white}\one}}\sat{\symbL}{1}{{\color{white}\act}}{{\color{white}\nrestart}}{{\color{white}\carry}}{\one}\sat{\szero}{2}{{\color{white}\act}}{{\color{white}\nrestart}}{\carry}{{\color{white}\one}}\sat{\symbR{2}{}}{3}{{\color{white}\act}}{{\color{white}\nrestart}}{\carry}{\one} \sat{\nspa{8}}{4}{{\color{white}\act}}{\nrestart}{{\color{white}\carry}}{{\color{white}\one}} }\\
  &\xline{ \sat{(\nspa{2}\smin)^{4}}{12}{\act}{\nrestart}{{\color{white}\carry}}{{\color{white}\one}}\sat{\nspa{231}}{8}{\act}{{\color{white}\nrestart}}{{\color{white}\carry}}{{\color{white}\one}}\sat{\symbb}{15}{\act}{\nrestart}{\carry}{\one}\sat{\nspa{15}}{0}{{\color{white}\act}}{{\color{white}\nrestart}}{{\color{white}\carry}}{{\color{white}\one}}\sat{\symbc}{15}{\act}{\nrestart}{\carry}{\one}\sat{\nspa{23}}{0}{{\color{white}\act}}{{\color{white}\nrestart}}{{\color{white}\carry}}{{\color{white}\one}}\sat{\symbX}{7}{{\color{white}\act}}{\nrestart}{\carry}{\one}\sat{\nspa{15}}{8}{\act}{{\color{white}\nrestart}}{{\color{white}\carry}}{{\color{white}\one}}\sat{\symbo}{7}{{\color{white}\act}}{\nrestart}{\carry}{\one}\sat{\nspa{15}}{8}{\act}{{\color{white}\nrestart}}{{\color{white}\carry}}{{\color{white}\one}}\sat{\symbO}{7}{{\color{white}\act}}{\nrestart}{\carry}{\one}\sat{\nspa{15}}{8}{\act}{{\color{white}\nrestart}}{{\color{white}\carry}}{{\color{white}\one}}\sat{\symbI{1}}{7}{{\color{white}\act}}{\nrestart}{\carry}{\one}\sat{\nspa{29}}{8}{\act}{{\color{white}\nrestart}}{{\color{white}\carry}}{{\color{white}\one}}\sat{\symbL}{5}{{\color{white}\act}}{\nrestart}{{\color{white}\carry}}{\one}\sat{\szero}{6}{{\color{white}\act}}{\nrestart}{\carry}{{\color{white}\one}}\sat{\symbR{1}{}}{7}{{\color{white}\act}}{\nrestart}{\carry}{\one} \sat{\nspa{8}}{8}{\act}{{\color{white}\nrestart}}{{\color{white}\carry}}{{\color{white}\one}} }
\end{align*}
For compactness, we continue without displaying the symbols~$\spa$.
The length $n$ of blocks $\nspa{n}$ matters only modulo $16$, 
and can be deduced from the morphism indexes of the surrounding letters.
\begin{align*}
  &\xline{ \sat{\symbA}{15}{\act}{\nrestart}{\carry}{\one}\sat{\symbd}{15}{\act}{\nrestart}{\carry}{\one}\sat{\symbX}{15}{\act}{\nrestart}{\carry}{\one}\sat{\symbo}{15}{\act}{\nrestart}{\carry}{\one}\sat{\symbO}{15}{\act}{\nrestart}{\carry}{\one}\sat{\symbIns}{15}{\act}{\nrestart}{\carry}{\one}\sat{\symbL}{13}{\act}{\nrestart}{{\color{white}\carry}}{\one}\sat{\szero}{14}{\act}{\nrestart}{\carry}{{\color{white}\one}}\sat{\symbRns}{15}{\act}{\nrestart}{\carry}{\one} }\\
  &\xline{ \sat{\symbA}{7}{{\color{white}\act}}{\nrestart}{\carry}{\one}\sat{\symbd}{7}{{\color{white}\act}}{\nrestart}{\carry}{\one}\sat{\symba}{7}{{\color{white}\act}}{\nrestart}{\carry}{\one}\sat{\symbd}{7}{{\color{white}\act}}{\nrestart}{\carry}{\one}\sat{\symbd}{7}{{\color{white}\act}}{\nrestart}{\carry}{\one}\sat{\symbX}{7}{{\color{white}\act}}{\nrestart}{\carry}{\one}\sat{\symbO}{7}{{\color{white}\act}}{\nrestart}{\carry}{\one}\sat{\symbI{3}}{7}{{\color{white}\act}}{\nrestart}{\carry}{\one}\sat{\symbL}{4}{{\color{white}\act}}{\nrestart}{{\color{white}\carry}}{{\color{white}\one}}\sat{\szero}{5}{{\color{white}\act}}{\nrestart}{{\color{white}\carry}}{\one}\sat{\sone}{6}{{\color{white}\act}}{\nrestart}{\carry}{{\color{white}\one}}\sat{\symbR{3}}{7}{{\color{white}\act}}{\nrestart}{\carry}{\one} }\\
  &\xline{ \sat{\nsymbpp{4}}{0}{{\color{white}\act}}{{\color{white}\nrestart}}{{\color{white}\carry}}{{\color{white}\one}}\sat{\nsmin{8}}{4}{{\color{white}\act}}{\nrestart}{{\color{white}\carry}}{{\color{white}\one}}\sat{\symbA}{3}{{\color{white}\act}}{{\color{white}\nrestart}}{\carry}{\one}\sat{\symbd}{3}{{\color{white}\act}}{{\color{white}\nrestart}}{\carry}{\one}\sat{\symba}{3}{{\color{white}\act}}{{\color{white}\nrestart}}{\carry}{\one}\sat{\symbd}{3}{{\color{white}\act}}{{\color{white}\nrestart}}{\carry}{\one}\sat{\symbd}{3}{{\color{white}\act}}{{\color{white}\nrestart}}{\carry}{\one}\sat{\symbX}{3}{{\color{white}\act}}{{\color{white}\nrestart}}{\carry}{\one}\sat{\symbO}{3}{{\color{white}\act}}{{\color{white}\nrestart}}{\carry}{\one}\sat{\symbI{2}}{3}{{\color{white}\act}}{{\color{white}\nrestart}}{\carry}{\one}\sat{\symbL}{0}{{\color{white}\act}}{{\color{white}\nrestart}}{{\color{white}\carry}}{{\color{white}\one}}\sat{\szero}{1}{{\color{white}\act}}{{\color{white}\nrestart}}{{\color{white}\carry}}{\one}\sat{\sone}{2}{{\color{white}\act}}{{\color{white}\nrestart}}{\carry}{{\color{white}\one}}\sat{\symbR{2}}{3}{{\color{white}\act}}{{\color{white}\nrestart}}{\carry}{\one} }\\
  &\xline{ \sat{\nsmin{4}}{14}{\act}{\nrestart}{\carry}{{\color{white}\one}}\sat{\symba}{15}{\act}{\nrestart}{\carry}{\one}\sat{\symbc}{15}{\act}{\nrestart}{\carry}{\one}\sat{\symba}{7}{{\color{white}\act}}{\nrestart}{\carry}{\one}\sat{\symbc}{7}{{\color{white}\act}}{\nrestart}{\carry}{\one}\sat{\symbc}{15}{\act}{\nrestart}{\carry}{\one}\sat{\symbX}{7}{{\color{white}\act}}{\nrestart}{\carry}{\one}\sat{\symbO}{7}{{\color{white}\act}}{\nrestart}{\carry}{\one}\sat{\symbI{1}}{7}{{\color{white}\act}}{\nrestart}{\carry}{\one}\sat{\symbL}{4}{{\color{white}\act}}{\nrestart}{{\color{white}\carry}}{{\color{white}\one}}\sat{\szero}{5}{{\color{white}\act}}{\nrestart}{{\color{white}\carry}}{\one}\sat{\sone}{6}{{\color{white}\act}}{\nrestart}{\carry}{{\color{white}\one}}\sat{\symbR{1}}{7}{{\color{white}\act}}{\nrestart}{\carry}{\one} }\\
  &\xline{ \sat{\nsymbpp{}}{13}{\act}{\nrestart}{{\color{white}\carry}}{\one}\sat{\nsplu{2}}{10}{\act}{{\color{white}\nrestart}}{\carry}{{\color{white}\one}}\sat{\symbB}{12}{\act}{\nrestart}{{\color{white}\carry}}{{\color{white}\one}}\sat{\symbd}{12}{\act}{\nrestart}{{\color{white}\carry}}{{\color{white}\one}}\sat{\symba}{12}{\act}{\nrestart}{{\color{white}\carry}}{{\color{white}\one}}\sat{\symbc}{12}{\act}{\nrestart}{{\color{white}\carry}}{{\color{white}\one}}\sat{\symbd}{4}{{\color{white}\act}}{\nrestart}{{\color{white}\carry}}{{\color{white}\one}}\sat{\symbX}{4}{{\color{white}\act}}{\nrestart}{{\color{white}\carry}}{{\color{white}\one}}\sat{\symbO}{4}{{\color{white}\act}}{\nrestart}{{\color{white}\carry}}{{\color{white}\one}}\sat{\symbIns}{4}{{\color{white}\act}}{\nrestart}{{\color{white}\carry}}{{\color{white}\one}}\sat{\symbL}{1}{{\color{white}\act}}{{\color{white}\nrestart}}{{\color{white}\carry}}{\one}\sat{\szero}{2}{{\color{white}\act}}{{\color{white}\nrestart}}{\carry}{{\color{white}\one}}\sat{\sone}{3}{{\color{white}\act}}{{\color{white}\nrestart}}{\carry}{\one}\sat{\symbRns}{4}{{\color{white}\act}}{\nrestart}{{\color{white}\carry}}{{\color{white}\one}} }\\
  &\xline{ \sat{\nsmin{}}{12}{\act}{\nrestart}{{\color{white}\carry}}{{\color{white}\one}}\sat{\symbB}{13}{\act}{\nrestart}{{\color{white}\carry}}{\one}\sat{\symbd}{13}{\act}{\nrestart}{{\color{white}\carry}}{\one}\sat{\nsymbpp{}}{11}{\act}{{\color{white}\nrestart}}{\carry}{\one}\sat{\symbA}{12}{\act}{\nrestart}{{\color{white}\carry}}{{\color{white}\one}}\sat{\symbd}{12}{\act}{\nrestart}{{\color{white}\carry}}{{\color{white}\one}}\sat{\symbd}{12}{\act}{\nrestart}{{\color{white}\carry}}{{\color{white}\one}}\sat{\symbX}{12}{\act}{\nrestart}{{\color{white}\carry}}{{\color{white}\one}}\sat{\symbO}{12}{\act}{\nrestart}{{\color{white}\carry}}{{\color{white}\one}}\sat{\symbIns}{12}{\act}{\nrestart}{{\color{white}\carry}}{{\color{white}\one}}\sat{\symbL}{8}{\act}{{\color{white}\nrestart}}{{\color{white}\carry}}{{\color{white}\one}}\sat{\szero}{9}{\act}{{\color{white}\nrestart}}{{\color{white}\carry}}{\one}\sat{\sone}{10}{\act}{{\color{white}\nrestart}}{\carry}{{\color{white}\one}}\sat{\szero}{11}{\act}{{\color{white}\nrestart}}{\carry}{\one}\sat{\symbRns}{12}{\act}{\nrestart}{{\color{white}\carry}}{{\color{white}\one}} }\\
  &\xline{ \sat{\symbB}{4}{{\color{white}\act}}{\nrestart}{{\color{white}\carry}}{{\color{white}\one}}\sat{\symbd}{4}{{\color{white}\act}}{\nrestart}{{\color{white}\carry}}{{\color{white}\one}}\sat{\nsmin{}}{4}{{\color{white}\act}}{\nrestart}{{\color{white}\carry}}{{\color{white}\one}}\sat{\symbA}{5}{{\color{white}\act}}{\nrestart}{{\color{white}\carry}}{\one}\sat{\symbd}{5}{{\color{white}\act}}{\nrestart}{{\color{white}\carry}}{\one}\sat{\symbd}{5}{{\color{white}\act}}{\nrestart}{{\color{white}\carry}}{\one}\sat{\symbX}{5}{{\color{white}\act}}{\nrestart}{{\color{white}\carry}}{\one}\sat{\symbo}{5}{{\color{white}\act}}{\nrestart}{{\color{white}\carry}}{\one}\sat{\symbO}{5}{{\color{white}\act}}{\nrestart}{{\color{white}\carry}}{\one}\sat{\symbI{3}}{5}{{\color{white}\act}}{\nrestart}{{\color{white}\carry}}{\one}\sat{\symbL}{0}{{\color{white}\act}}{{\color{white}\nrestart}}{{\color{white}\carry}}{{\color{white}\one}}\sat{\szero}{1}{{\color{white}\act}}{{\color{white}\nrestart}}{{\color{white}\carry}}{\one}\sat{\sone}{2}{{\color{white}\act}}{{\color{white}\nrestart}}{\carry}{{\color{white}\one}}\sat{\szero}{3}{{\color{white}\act}}{{\color{white}\nrestart}}{\carry}{\one}\sat{\szero}{4}{{\color{white}\act}}{\nrestart}{{\color{white}\carry}}{{\color{white}\one}}\sat{\symbR{3}}{5}{{\color{white}\act}}{\nrestart}{{\color{white}\carry}}{\one} }\\
  &\xline{ \sat{\nsymbpp{4}}{14}{\act}{\nrestart}{\carry}{{\color{white}\one}}\sat{\nsmin{10}}{2}{{\color{white}\act}}{{\color{white}\nrestart}}{\carry}{{\color{white}\one}}\sat{\symbB}{3}{{\color{white}\act}}{{\color{white}\nrestart}}{\carry}{\one}\sat{\symbd}{3}{{\color{white}\act}}{{\color{white}\nrestart}}{\carry}{\one}\sat{\symbA}{3}{{\color{white}\act}}{{\color{white}\nrestart}}{\carry}{\one}\sat{\symbd}{3}{{\color{white}\act}}{{\color{white}\nrestart}}{\carry}{\one}\sat{\symbd}{3}{{\color{white}\act}}{{\color{white}\nrestart}}{\carry}{\one}\sat{\symbX}{3}{{\color{white}\act}}{{\color{white}\nrestart}}{\carry}{\one}\sat{\symbo}{3}{{\color{white}\act}}{{\color{white}\nrestart}}{\carry}{\one}\sat{\symbO}{3}{{\color{white}\act}}{{\color{white}\nrestart}}{\carry}{\one}\sat{\symbI{2}}{3}{{\color{white}\act}}{{\color{white}\nrestart}}{\carry}{\one}\sat{\symbL}{14}{\act}{\nrestart}{\carry}{{\color{white}\one}}\sat{\szero}{15}{\act}{\nrestart}{\carry}{\one}\sat{\sone}{0}{{\color{white}\act}}{{\color{white}\nrestart}}{{\color{white}\carry}}{{\color{white}\one}}\sat{\szero}{1}{{\color{white}\act}}{{\color{white}\nrestart}}{{\color{white}\carry}}{\one}\sat{\szero}{2}{{\color{white}\act}}{{\color{white}\nrestart}}{\carry}{{\color{white}\one}}\sat{\symbR{2}}{3}{{\color{white}\act}}{{\color{white}\nrestart}}{\carry}{\one} }\\
  &\xline{ \sat{\nsmin{4}}{14}{\act}{\nrestart}{\carry}{{\color{white}\one}}\sat{\symbb}{15}{\act}{\nrestart}{\carry}{\one}\sat{\symbc}{15}{\act}{\nrestart}{\carry}{\one}\sat{\symba}{7}{{\color{white}\act}}{\nrestart}{\carry}{\one}\sat{\symbc}{7}{{\color{white}\act}}{\nrestart}{\carry}{\one}\sat{\symbc}{15}{\act}{\nrestart}{\carry}{\one}\sat{\symbX}{7}{{\color{white}\act}}{\nrestart}{\carry}{\one}\sat{\symbo}{7}{{\color{white}\act}}{\nrestart}{\carry}{\one}\sat{\symbO}{7}{{\color{white}\act}}{\nrestart}{\carry}{\one}\sat{\symbI{1}}{7}{{\color{white}\act}}{\nrestart}{\carry}{\one}\sat{\symbL}{2}{{\color{white}\act}}{{\color{white}\nrestart}}{\carry}{{\color{white}\one}}\sat{\szero}{3}{{\color{white}\act}}{{\color{white}\nrestart}}{\carry}{\one}\sat{\sone}{4}{{\color{white}\act}}{\nrestart}{{\color{white}\carry}}{{\color{white}\one}}\sat{\szero}{5}{{\color{white}\act}}{\nrestart}{{\color{white}\carry}}{\one}\sat{\szero}{6}{{\color{white}\act}}{\nrestart}{\carry}{{\color{white}\one}}\sat{\symbR{1}}{7}{{\color{white}\act}}{\nrestart}{\carry}{\one} }\\
  &\xline{ \sat{\symbA}{15}{\act}{\nrestart}{\carry}{\one}\sat{\symbd}{15}{\act}{\nrestart}{\carry}{\one}\sat{\symba}{15}{\act}{\nrestart}{\carry}{\one}\sat{\symbc}{15}{\act}{\nrestart}{\carry}{\one}\sat{\symbd}{7}{{\color{white}\act}}{\nrestart}{\carry}{\one}\sat{\symbX}{7}{{\color{white}\act}}{\nrestart}{\carry}{\one}\sat{\symbo}{7}{{\color{white}\act}}{\nrestart}{\carry}{\one}\sat{\symbO}{7}{{\color{white}\act}}{\nrestart}{\carry}{\one}\sat{\symbIns}{7}{{\color{white}\act}}{\nrestart}{\carry}{\one}\sat{\symbL}{2}{{\color{white}\act}}{{\color{white}\nrestart}}{\carry}{{\color{white}\one}}\sat{\szero}{3}{{\color{white}\act}}{{\color{white}\nrestart}}{\carry}{\one}\sat{\sone}{4}{{\color{white}\act}}{\nrestart}{{\color{white}\carry}}{{\color{white}\one}}\sat{\szero}{5}{{\color{white}\act}}{\nrestart}{{\color{white}\carry}}{\one}\sat{\szero}{6}{{\color{white}\act}}{\nrestart}{\carry}{{\color{white}\one}}\sat{\symbRns}{7}{{\color{white}\act}}{\nrestart}{\carry}{\one} }\\
  &\xline{ \sat{\symbA}{15}{\act}{\nrestart}{\carry}{\one}\sat{\symbd}{15}{\act}{\nrestart}{\carry}{\one}\sat{\nsymbpp{}}{13}{\act}{\nrestart}{{\color{white}\carry}}{\one}\sat{\nsplu{2}}{10}{\act}{{\color{white}\nrestart}}{\carry}{{\color{white}\one}}\sat{\symbB}{12}{\act}{\nrestart}{{\color{white}\carry}}{{\color{white}\one}}\sat{\symbd}{12}{\act}{\nrestart}{{\color{white}\carry}}{{\color{white}\one}}\sat{\symbd}{12}{\act}{\nrestart}{{\color{white}\carry}}{{\color{white}\one}}\sat{\symbX}{12}{\act}{\nrestart}{{\color{white}\carry}}{{\color{white}\one}}\sat{\symbo}{12}{\act}{\nrestart}{{\color{white}\carry}}{{\color{white}\one}}\sat{\symbO}{12}{\act}{\nrestart}{{\color{white}\carry}}{{\color{white}\one}}\sat{\symbIns}{12}{\act}{\nrestart}{{\color{white}\carry}}{{\color{white}\one}}\sat{\symbL}{6}{{\color{white}\act}}{\nrestart}{\carry}{{\color{white}\one}}\sat{\szero}{7}{{\color{white}\act}}{\nrestart}{\carry}{\one}\sat{\sone}{8}{\act}{{\color{white}\nrestart}}{{\color{white}\carry}}{{\color{white}\one}}\sat{\szero}{9}{\act}{{\color{white}\nrestart}}{{\color{white}\carry}}{\one}\sat{\szero}{10}{\act}{{\color{white}\nrestart}}{\carry}{{\color{white}\one}}\sat{\sone}{11}{\act}{{\color{white}\nrestart}}{\carry}{\one}\sat{\symbRns}{12}{\act}{\nrestart}{{\color{white}\carry}}{{\color{white}\one}} }\\
  &\xline{ \sat{\symbA}{4}{{\color{white}\act}}{\nrestart}{{\color{white}\carry}}{{\color{white}\one}}\sat{\symbd}{4}{{\color{white}\act}}{\nrestart}{{\color{white}\carry}}{{\color{white}\one}}\sat{\nsmin{}}{4}{{\color{white}\act}}{\nrestart}{{\color{white}\carry}}{{\color{white}\one}}\sat{\symbB}{5}{{\color{white}\act}}{\nrestart}{{\color{white}\carry}}{\one}\sat{\symbd}{5}{{\color{white}\act}}{\nrestart}{{\color{white}\carry}}{\one}\sat{\symbd}{5}{{\color{white}\act}}{\nrestart}{{\color{white}\carry}}{\one}\sat{\symbX}{5}{{\color{white}\act}}{\nrestart}{{\color{white}\carry}}{\one}\sat{\symbo}{5}{{\color{white}\act}}{\nrestart}{{\color{white}\carry}}{\one}\sat{\symbo}{5}{{\color{white}\act}}{\nrestart}{{\color{white}\carry}}{\one}\sat{\symbO}{5}{{\color{white}\act}}{\nrestart}{{\color{white}\carry}}{\one}\sat{\symbI{3}}{5}{{\color{white}\act}}{\nrestart}{{\color{white}\carry}}{\one}\sat{\symbL}{14}{\act}{\nrestart}{\carry}{{\color{white}\one}}\sat{\szero}{15}{\act}{\nrestart}{\carry}{\one}\sat{\sone}{0}{{\color{white}\act}}{{\color{white}\nrestart}}{{\color{white}\carry}}{{\color{white}\one}}\sat{\szero}{1}{{\color{white}\act}}{{\color{white}\nrestart}}{{\color{white}\carry}}{\one}\sat{\szero}{2}{{\color{white}\act}}{{\color{white}\nrestart}}{\carry}{{\color{white}\one}}\sat{\sone}{3}{{\color{white}\act}}{{\color{white}\nrestart}}{\carry}{\one}\sat{\szero}{4}{{\color{white}\act}}{\nrestart}{{\color{white}\carry}}{{\color{white}\one}}\sat{\symbR{3}}{5}{{\color{white}\act}}{\nrestart}{{\color{white}\carry}}{\one} }\\
  &\xline{ \sat{\nsymbpp{4}}{14}{\act}{\nrestart}{\carry}{{\color{white}\one}}\sat{\nsmin{10}}{2}{{\color{white}\act}}{{\color{white}\nrestart}}{\carry}{{\color{white}\one}}\sat{\symbA}{3}{{\color{white}\act}}{{\color{white}\nrestart}}{\carry}{\one}\sat{\symbd}{3}{{\color{white}\act}}{{\color{white}\nrestart}}{\carry}{\one}\sat{\symbB}{3}{{\color{white}\act}}{{\color{white}\nrestart}}{\carry}{\one}\sat{\symbd}{3}{{\color{white}\act}}{{\color{white}\nrestart}}{\carry}{\one}\sat{\symbd}{3}{{\color{white}\act}}{{\color{white}\nrestart}}{\carry}{\one}\sat{\symbX}{3}{{\color{white}\act}}{{\color{white}\nrestart}}{\carry}{\one}\sat{\symbo}{3}{{\color{white}\act}}{{\color{white}\nrestart}}{\carry}{\one}\sat{\symbo}{3}{{\color{white}\act}}{{\color{white}\nrestart}}{\carry}{\one}\sat{\symbO}{3}{{\color{white}\act}}{{\color{white}\nrestart}}{\carry}{\one}\sat{\symbI{2}}{3}{{\color{white}\act}}{{\color{white}\nrestart}}{\carry}{\one}\sat{\symbL}{12}{\act}{\nrestart}{{\color{white}\carry}}{{\color{white}\one}}\sat{\szero}{13}{\act}{\nrestart}{{\color{white}\carry}}{\one}\sat{\sone}{14}{\act}{\nrestart}{\carry}{{\color{white}\one}}\sat{\szero}{15}{\act}{\nrestart}{\carry}{\one}\sat{\szero}{0}{{\color{white}\act}}{{\color{white}\nrestart}}{{\color{white}\carry}}{{\color{white}\one}}\sat{\sone}{1}{{\color{white}\act}}{{\color{white}\nrestart}}{{\color{white}\carry}}{\one}\sat{\szero}{2}{{\color{white}\act}}{{\color{white}\nrestart}}{\carry}{{\color{white}\one}}\sat{\symbR{2}}{3}{{\color{white}\act}}{{\color{white}\nrestart}}{\carry}{\one} }\\
  &\xline{ \sat{\nsmin{4}}{14}{\act}{\nrestart}{\carry}{{\color{white}\one}}\sat{\symba}{15}{\act}{\nrestart}{\carry}{\one}\sat{\symbc}{15}{\act}{\nrestart}{\carry}{\one}\sat{\symbb}{7}{{\color{white}\act}}{\nrestart}{\carry}{\one}\sat{\symbc}{7}{{\color{white}\act}}{\nrestart}{\carry}{\one}\sat{\symbc}{15}{\act}{\nrestart}{\carry}{\one}\sat{\symbX}{7}{{\color{white}\act}}{\nrestart}{\carry}{\one}\sat{\symbo}{7}{{\color{white}\act}}{\nrestart}{\carry}{\one}\sat{\symbo}{7}{{\color{white}\act}}{\nrestart}{\carry}{\one}\sat{\symbO}{7}{{\color{white}\act}}{\nrestart}{\carry}{\one}\sat{\symbI{1}}{7}{{\color{white}\act}}{\nrestart}{\carry}{\one}\sat{\symbL}{0}{{\color{white}\act}}{{\color{white}\nrestart}}{{\color{white}\carry}}{{\color{white}\one}}\sat{\szero}{1}{{\color{white}\act}}{{\color{white}\nrestart}}{{\color{white}\carry}}{\one}\sat{\sone}{2}{{\color{white}\act}}{{\color{white}\nrestart}}{\carry}{{\color{white}\one}}\sat{\szero}{3}{{\color{white}\act}}{{\color{white}\nrestart}}{\carry}{\one}\sat{\szero}{4}{{\color{white}\act}}{\nrestart}{{\color{white}\carry}}{{\color{white}\one}}\sat{\sone}{5}{{\color{white}\act}}{\nrestart}{{\color{white}\carry}}{\one}\sat{\szero}{6}{{\color{white}\act}}{\nrestart}{\carry}{{\color{white}\one}}\sat{\symbR{1}}{7}{{\color{white}\act}}{\nrestart}{\carry}{\one} }\\
  &\xline{ \sat{\nsymbpp{}}{13}{\act}{\nrestart}{{\color{white}\carry}}{\one}\sat{\nsplu{2}}{10}{\act}{{\color{white}\nrestart}}{\carry}{{\color{white}\one}}\sat{\symbB}{12}{\act}{\nrestart}{{\color{white}\carry}}{{\color{white}\one}}\sat{\symbd}{12}{\act}{\nrestart}{{\color{white}\carry}}{{\color{white}\one}}\sat{\symbb}{12}{\act}{\nrestart}{{\color{white}\carry}}{{\color{white}\one}}\sat{\symbc}{12}{\act}{\nrestart}{{\color{white}\carry}}{{\color{white}\one}}\sat{\symbd}{4}{{\color{white}\act}}{\nrestart}{{\color{white}\carry}}{{\color{white}\one}}\sat{\symbX}{4}{{\color{white}\act}}{\nrestart}{{\color{white}\carry}}{{\color{white}\one}}\sat{\symbo}{4}{{\color{white}\act}}{\nrestart}{{\color{white}\carry}}{{\color{white}\one}}\sat{\symbo}{4}{{\color{white}\act}}{\nrestart}{{\color{white}\carry}}{{\color{white}\one}}\sat{\symbO}{4}{{\color{white}\act}}{\nrestart}{{\color{white}\carry}}{{\color{white}\one}}\sat{\symbIns}{4}{{\color{white}\act}}{\nrestart}{{\color{white}\carry}}{{\color{white}\one}}\sat{\symbL}{13}{\act}{\nrestart}{{\color{white}\carry}}{\one}\sat{\szero}{14}{\act}{\nrestart}{\carry}{{\color{white}\one}}\sat{\sone}{15}{\act}{\nrestart}{\carry}{\one}\sat{\szero}{0}{{\color{white}\act}}{{\color{white}\nrestart}}{{\color{white}\carry}}{{\color{white}\one}}\sat{\szero}{1}{{\color{white}\act}}{{\color{white}\nrestart}}{{\color{white}\carry}}{\one}\sat{\sone}{2}{{\color{white}\act}}{{\color{white}\nrestart}}{\carry}{{\color{white}\one}}\sat{\szero}{3}{{\color{white}\act}}{{\color{white}\nrestart}}{\carry}{\one}\sat{\symbRns}{4}{{\color{white}\act}}{\nrestart}{{\color{white}\carry}}{{\color{white}\one}} }\\
  &\xline{ \sat{\nsmin{}}{12}{\act}{\nrestart}{{\color{white}\carry}}{{\color{white}\one}}\sat{\symbB}{13}{\act}{\nrestart}{{\color{white}\carry}}{\one}\sat{\symbd}{13}{\act}{\nrestart}{{\color{white}\carry}}{\one}\sat{\symbB}{13}{\act}{\nrestart}{{\color{white}\carry}}{\one}\sat{\symbd}{13}{\act}{\nrestart}{{\color{white}\carry}}{\one}\sat{\symbd}{13}{\act}{\nrestart}{{\color{white}\carry}}{\one}\sat{\symbX}{13}{\act}{\nrestart}{{\color{white}\carry}}{\one}\sat{\symbo}{13}{\act}{\nrestart}{{\color{white}\carry}}{\one}\sat{\symbo}{13}{\act}{\nrestart}{{\color{white}\carry}}{\one}\sat{\symbO}{13}{\act}{\nrestart}{{\color{white}\carry}}{\one}\sat{\symbIns}{13}{\act}{\nrestart}{{\color{white}\carry}}{\one}\sat{\symbL}{5}{{\color{white}\act}}{\nrestart}{{\color{white}\carry}}{\one}\sat{\szero}{6}{{\color{white}\act}}{\nrestart}{\carry}{{\color{white}\one}}\sat{\sone}{7}{{\color{white}\act}}{\nrestart}{\carry}{\one}\sat{\szero}{8}{\act}{{\color{white}\nrestart}}{{\color{white}\carry}}{{\color{white}\one}}\sat{\szero}{9}{\act}{{\color{white}\nrestart}}{{\color{white}\carry}}{\one}\sat{\sone}{10}{\act}{{\color{white}\nrestart}}{\carry}{{\color{white}\one}}\sat{\szero}{11}{\act}{{\color{white}\nrestart}}{\carry}{\one}\sat{\szero}{12}{\act}{\nrestart}{{\color{white}\carry}}{{\color{white}\one}}\sat{\symbRns}{13}{\act}{\nrestart}{{\color{white}\carry}}{\one} }\\
  &\xline{ \sat{\symbB}{5}{{\color{white}\act}}{\nrestart}{{\color{white}\carry}}{\one}\sat{\symbd}{5}{{\color{white}\act}}{\nrestart}{{\color{white}\carry}}{\one}\sat{\symbB}{5}{{\color{white}\act}}{\nrestart}{{\color{white}\carry}}{\one}\sat{\symbd}{5}{{\color{white}\act}}{\nrestart}{{\color{white}\carry}}{\one}\sat{\symbd}{5}{{\color{white}\act}}{\nrestart}{{\color{white}\carry}}{\one}\sat{\symbX}{5}{{\color{white}\act}}{\nrestart}{{\color{white}\carry}}{\one}\sat{\symbo}{5}{{\color{white}\act}}{\nrestart}{{\color{white}\carry}}{\one}\sat{\symbo}{5}{{\color{white}\act}}{\nrestart}{{\color{white}\carry}}{\one}\sat{\symbo}{5}{{\color{white}\act}}{\nrestart}{{\color{white}\carry}}{\one}\sat{\symbO}{5}{{\color{white}\act}}{\nrestart}{{\color{white}\carry}}{\one}\sat{\symbI{3}}{5}{{\color{white}\act}}{\nrestart}{{\color{white}\carry}}{\one}\sat{\symbL}{12}{\act}{\nrestart}{{\color{white}\carry}}{{\color{white}\one}}\sat{\szero}{13}{\act}{\nrestart}{{\color{white}\carry}}{\one}\sat{\sone}{14}{\act}{\nrestart}{\carry}{{\color{white}\one}}\sat{\szero}{15}{\act}{\nrestart}{\carry}{\one}\sat{\szero}{0}{{\color{white}\act}}{{\color{white}\nrestart}}{{\color{white}\carry}}{{\color{white}\one}}\sat{\sone}{1}{{\color{white}\act}}{{\color{white}\nrestart}}{{\color{white}\carry}}{\one}\sat{\szero}{2}{{\color{white}\act}}{{\color{white}\nrestart}}{\carry}{{\color{white}\one}}\sat{\szero}{3}{{\color{white}\act}}{{\color{white}\nrestart}}{\carry}{\one}\sat{\sone}{4}{{\color{white}\act}}{\nrestart}{{\color{white}\carry}}{{\color{white}\one}}\sat{\symbR{3}}{5}{{\color{white}\act}}{\nrestart}{{\color{white}\carry}}{\one} }\\
  &\xline{ \sat{\nsymbpp{4}}{14}{\act}{\nrestart}{\carry}{{\color{white}\one}}\sat{\nsmin{10}}{2}{{\color{white}\act}}{{\color{white}\nrestart}}{\carry}{{\color{white}\one}}\sat{\symbB}{3}{{\color{white}\act}}{{\color{white}\nrestart}}{\carry}{\one}\sat{\symbd}{3}{{\color{white}\act}}{{\color{white}\nrestart}}{\carry}{\one}\sat{\symbB}{3}{{\color{white}\act}}{{\color{white}\nrestart}}{\carry}{\one}\sat{\symbd}{3}{{\color{white}\act}}{{\color{white}\nrestart}}{\carry}{\one}\sat{\symbd}{3}{{\color{white}\act}}{{\color{white}\nrestart}}{\carry}{\one}\sat{\symbX}{3}{{\color{white}\act}}{{\color{white}\nrestart}}{\carry}{\one}\sat{\symbo}{3}{{\color{white}\act}}{{\color{white}\nrestart}}{\carry}{\one}\sat{\symbo}{3}{{\color{white}\act}}{{\color{white}\nrestart}}{\carry}{\one}\sat{\symbo}{3}{{\color{white}\act}}{{\color{white}\nrestart}}{\carry}{\one}\sat{\symbO}{3}{{\color{white}\act}}{{\color{white}\nrestart}}{\carry}{\one}\sat{\symbI{2}}{3}{{\color{white}\act}}{{\color{white}\nrestart}}{\carry}{\one}\sat{\symbL}{10}{\act}{{\color{white}\nrestart}}{\carry}{{\color{white}\one}}\sat{\szero}{11}{\act}{{\color{white}\nrestart}}{\carry}{\one}\sat{\sone}{12}{\act}{\nrestart}{{\color{white}\carry}}{{\color{white}\one}}\sat{\szero}{13}{\act}{\nrestart}{{\color{white}\carry}}{\one}\sat{\szero}{14}{\act}{\nrestart}{\carry}{{\color{white}\one}}\sat{\sone}{15}{\act}{\nrestart}{\carry}{\one}\sat{\szero}{0}{{\color{white}\act}}{{\color{white}\nrestart}}{{\color{white}\carry}}{{\color{white}\one}}\sat{\szero}{1}{{\color{white}\act}}{{\color{white}\nrestart}}{{\color{white}\carry}}{\one}\sat{\sone}{2}{{\color{white}\act}}{{\color{white}\nrestart}}{\carry}{{\color{white}\one}}\sat{\symbR{2}}{3}{{\color{white}\act}}{{\color{white}\nrestart}}{\carry}{\one} }\\
  &\xline{ \sat{\nsmin{4}}{14}{\act}{\nrestart}{\carry}{{\color{white}\one}}\sat{\symbb}{15}{\act}{\nrestart}{\carry}{\one}\sat{\symbc}{15}{\act}{\nrestart}{\carry}{\one}\sat{\symbb}{7}{{\color{white}\act}}{\nrestart}{\carry}{\one}\sat{\symbc}{7}{{\color{white}\act}}{\nrestart}{\carry}{\one}\sat{\symbc}{15}{\act}{\nrestart}{\carry}{\one}\sat{\symbX}{7}{{\color{white}\act}}{\nrestart}{\carry}{\one}\sat{\symbo}{7}{{\color{white}\act}}{\nrestart}{\carry}{\one}\sat{\symbo}{7}{{\color{white}\act}}{\nrestart}{\carry}{\one}\sat{\symbo}{7}{{\color{white}\act}}{\nrestart}{\carry}{\one}\sat{\symbO}{7}{{\color{white}\act}}{\nrestart}{\carry}{\one}\sat{\symbI{1}}{7}{{\color{white}\act}}{\nrestart}{\carry}{\one}\sat{\symbL}{14}{\act}{\nrestart}{\carry}{{\color{white}\one}}\sat{\szero}{15}{\act}{\nrestart}{\carry}{\one}\sat{\sone}{0}{{\color{white}\act}}{{\color{white}\nrestart}}{{\color{white}\carry}}{{\color{white}\one}}\sat{\szero}{1}{{\color{white}\act}}{{\color{white}\nrestart}}{{\color{white}\carry}}{\one}\sat{\szero}{2}{{\color{white}\act}}{{\color{white}\nrestart}}{\carry}{{\color{white}\one}}\sat{\sone}{3}{{\color{white}\act}}{{\color{white}\nrestart}}{\carry}{\one}\sat{\szero}{4}{{\color{white}\act}}{\nrestart}{{\color{white}\carry}}{{\color{white}\one}}\sat{\szero}{5}{{\color{white}\act}}{\nrestart}{{\color{white}\carry}}{\one}\sat{\sone}{6}{{\color{white}\act}}{\nrestart}{\carry}{{\color{white}\one}}\sat{\symbR{1}}{7}{{\color{white}\act}}{\nrestart}{\carry}{\one} }\\
  &\xline{ \sat{\symbA}{15}{\act}{\nrestart}{\carry}{\one}\sat{\symbd}{15}{\act}{\nrestart}{\carry}{\one}\sat{\symbb}{15}{\act}{\nrestart}{\carry}{\one}\sat{\symbc}{15}{\act}{\nrestart}{\carry}{\one}\sat{\symbd}{7}{{\color{white}\act}}{\nrestart}{\carry}{\one}\sat{\symbX}{7}{{\color{white}\act}}{\nrestart}{\carry}{\one}\sat{\symbo}{7}{{\color{white}\act}}{\nrestart}{\carry}{\one}\sat{\symbo}{7}{{\color{white}\act}}{\nrestart}{\carry}{\one}\sat{\symbo}{7}{{\color{white}\act}}{\nrestart}{\carry}{\one}\sat{\symbO}{7}{{\color{white}\act}}{\nrestart}{\carry}{\one}\sat{\symbIns}{7}{{\color{white}\act}}{\nrestart}{\carry}{\one}\sat{\symbL}{14}{\act}{\nrestart}{\carry}{{\color{white}\one}}\sat{\szero}{15}{\act}{\nrestart}{\carry}{\one}\sat{\sone}{0}{{\color{white}\act}}{{\color{white}\nrestart}}{{\color{white}\carry}}{{\color{white}\one}}\sat{\szero}{1}{{\color{white}\act}}{{\color{white}\nrestart}}{{\color{white}\carry}}{\one}\sat{\szero}{2}{{\color{white}\act}}{{\color{white}\nrestart}}{\carry}{{\color{white}\one}}\sat{\sone}{3}{{\color{white}\act}}{{\color{white}\nrestart}}{\carry}{\one}\sat{\szero}{4}{{\color{white}\act}}{\nrestart}{{\color{white}\carry}}{{\color{white}\one}}\sat{\szero}{5}{{\color{white}\act}}{\nrestart}{{\color{white}\carry}}{\one}\sat{\sone}{6}{{\color{white}\act}}{\nrestart}{\carry}{{\color{white}\one}}\sat{\symbRns}{7}{{\color{white}\act}}{\nrestart}{\carry}{\one} }\\
  &\xline{ \sat{\symbA}{15}{\act}{\nrestart}{\carry}{\one}\sat{\symbd}{15}{\act}{\nrestart}{\carry}{\one}\sat{\symbA}{15}{\act}{\nrestart}{\carry}{\one}\sat{\symbd}{15}{\act}{\nrestart}{\carry}{\one}\sat{\symbd}{15}{\act}{\nrestart}{\carry}{\one}\sat{\symbX}{15}{\act}{\nrestart}{\carry}{\one}\sat{\symbo}{15}{\act}{\nrestart}{\carry}{\one}\sat{\symbo}{15}{\act}{\nrestart}{\carry}{\one}\sat{\symbo}{15}{\act}{\nrestart}{\carry}{\one}\sat{\symbO}{15}{\act}{\nrestart}{\carry}{\one}\sat{\symbIns}{15}{\act}{\nrestart}{\carry}{\one}\sat{\symbL}{5}{{\color{white}\act}}{\nrestart}{{\color{white}\carry}}{\one}\sat{\szero}{6}{{\color{white}\act}}{\nrestart}{\carry}{{\color{white}\one}}\sat{\sone}{7}{{\color{white}\act}}{\nrestart}{\carry}{\one}\sat{\szero}{8}{\act}{{\color{white}\nrestart}}{{\color{white}\carry}}{{\color{white}\one}}\sat{\szero}{9}{\act}{{\color{white}\nrestart}}{{\color{white}\carry}}{\one}\sat{\sone}{10}{\act}{{\color{white}\nrestart}}{\carry}{{\color{white}\one}}\sat{\szero}{11}{\act}{{\color{white}\nrestart}}{\carry}{\one}\sat{\szero}{12}{\act}{\nrestart}{{\color{white}\carry}}{{\color{white}\one}}\sat{\sone}{13}{\act}{\nrestart}{{\color{white}\carry}}{\one}\sat{\sone}{14}{\act}{\nrestart}{\carry}{{\color{white}\one}}\sat{\symbRns}{15}{\act}{\nrestart}{\carry}{\one} }\\
  &\xline{ \sat{\symbA}{7}{{\color{white}\act}}{\nrestart}{\carry}{\one}\sat{\symbd}{7}{{\color{white}\act}}{\nrestart}{\carry}{\one}\sat{\symbA}{7}{{\color{white}\act}}{\nrestart}{\carry}{\one}\sat{\symbd}{7}{{\color{white}\act}}{\nrestart}{\carry}{\one}\sat{\symbd}{7}{{\color{white}\act}}{\nrestart}{\carry}{\one}\sat{\symba}{7}{{\color{white}\act}}{\nrestart}{\carry}{\one}\sat{\symbd}{7}{{\color{white}\act}}{\nrestart}{\carry}{\one}\sat{\symbd}{7}{{\color{white}\act}}{\nrestart}{\carry}{\one}\sat{\symbd}{7}{{\color{white}\act}}{\nrestart}{\carry}{\one}\sat{\symbd}{7}{{\color{white}\act}}{\nrestart}{\carry}{\one}\sat{\symbX}{7}{{\color{white}\act}}{\nrestart}{\carry}{\one}\sat{\symbO}{7}{{\color{white}\act}}{\nrestart}{\carry}{\one}\sat{\symbI{3}}{7}{{\color{white}\act}}{\nrestart}{\carry}{\one}\sat{\symbL}{12}{\act}{\nrestart}{{\color{white}\carry}}{{\color{white}\one}}\sat{\szero}{13}{\act}{\nrestart}{{\color{white}\carry}}{\one}\sat{\sone}{14}{\act}{\nrestart}{\carry}{{\color{white}\one}}\sat{\szero}{15}{\act}{\nrestart}{\carry}{\one}\sat{\szero}{0}{{\color{white}\act}}{{\color{white}\nrestart}}{{\color{white}\carry}}{{\color{white}\one}}\sat{\sone}{1}{{\color{white}\act}}{{\color{white}\nrestart}}{{\color{white}\carry}}{\one}\sat{\szero}{2}{{\color{white}\act}}{{\color{white}\nrestart}}{\carry}{{\color{white}\one}}\sat{\szero}{3}{{\color{white}\act}}{{\color{white}\nrestart}}{\carry}{\one}\sat{\sone}{4}{{\color{white}\act}}{\nrestart}{{\color{white}\carry}}{{\color{white}\one}}\sat{\sone}{5}{{\color{white}\act}}{\nrestart}{{\color{white}\carry}}{\one}\sat{\sone}{6}{{\color{white}\act}}{\nrestart}{\carry}{{\color{white}\one}}\sat{\symbR{3}}{7}{{\color{white}\act}}{\nrestart}{\carry}{\one} }\\
  &\xline{ \sat{\nsymbpp{4}}{0}{{\color{white}\act}}{{\color{white}\nrestart}}{{\color{white}\carry}}{{\color{white}\one}}\sat{\nsmin{8}}{4}{{\color{white}\act}}{\nrestart}{{\color{white}\carry}}{{\color{white}\one}}\sat{\symbA}{3}{{\color{white}\act}}{{\color{white}\nrestart}}{\carry}{\one}\sat{\symbd}{3}{{\color{white}\act}}{{\color{white}\nrestart}}{\carry}{\one}\sat{\symbA}{3}{{\color{white}\act}}{{\color{white}\nrestart}}{\carry}{\one}\sat{\symbd}{3}{{\color{white}\act}}{{\color{white}\nrestart}}{\carry}{\one}\sat{\symbd}{3}{{\color{white}\act}}{{\color{white}\nrestart}}{\carry}{\one}\sat{\symba}{3}{{\color{white}\act}}{{\color{white}\nrestart}}{\carry}{\one}\sat{\symbd}{3}{{\color{white}\act}}{{\color{white}\nrestart}}{\carry}{\one}\sat{\symbd}{3}{{\color{white}\act}}{{\color{white}\nrestart}}{\carry}{\one}\sat{\symbd}{3}{{\color{white}\act}}{{\color{white}\nrestart}}{\carry}{\one}\sat{\symbd}{3}{{\color{white}\act}}{{\color{white}\nrestart}}{\carry}{\one}\sat{\symbX}{3}{{\color{white}\act}}{{\color{white}\nrestart}}{\carry}{\one}\sat{\symbO}{3}{{\color{white}\act}}{{\color{white}\nrestart}}{\carry}{\one}\sat{\symbI{2}}{3}{{\color{white}\act}}{{\color{white}\nrestart}}{\carry}{\one}\sat{\symbL}{8}{\act}{{\color{white}\nrestart}}{{\color{white}\carry}}{{\color{white}\one}}\sat{\szero}{9}{\act}{{\color{white}\nrestart}}{{\color{white}\carry}}{\one}\sat{\sone}{10}{\act}{{\color{white}\nrestart}}{\carry}{{\color{white}\one}}\sat{\szero}{11}{\act}{{\color{white}\nrestart}}{\carry}{\one}\sat{\szero}{12}{\act}{\nrestart}{{\color{white}\carry}}{{\color{white}\one}}\sat{\sone}{13}{\act}{\nrestart}{{\color{white}\carry}}{\one}\sat{\szero}{14}{\act}{\nrestart}{\carry}{{\color{white}\one}}\sat{\szero}{15}{\act}{\nrestart}{\carry}{\one}\sat{\sone}{0}{{\color{white}\act}}{{\color{white}\nrestart}}{{\color{white}\carry}}{{\color{white}\one}}\sat{\sone}{1}{{\color{white}\act}}{{\color{white}\nrestart}}{{\color{white}\carry}}{\one}\sat{\sone}{2}{{\color{white}\act}}{{\color{white}\nrestart}}{\carry}{{\color{white}\one}}\sat{\symbR{2}}{3}{{\color{white}\act}}{{\color{white}\nrestart}}{\carry}{\one} }\\
  &\xline{ \sat{\nsmin{4}}{14}{\act}{\nrestart}{\carry}{{\color{white}\one}}\sat{\symba}{15}{\act}{\nrestart}{\carry}{\one}\sat{\symbc}{15}{\act}{\nrestart}{\carry}{\one}\sat{\symba}{7}{{\color{white}\act}}{\nrestart}{\carry}{\one}\sat{\symbc}{7}{{\color{white}\act}}{\nrestart}{\carry}{\one}\sat{\symbc}{15}{\act}{\nrestart}{\carry}{\one}\sat{\symba}{7}{{\color{white}\act}}{\nrestart}{\carry}{\one}\sat{\symbc}{7}{{\color{white}\act}}{\nrestart}{\carry}{\one}\sat{\symbc}{15}{\act}{\nrestart}{\carry}{\one}\sat{\symbc}{7}{{\color{white}\act}}{\nrestart}{\carry}{\one}\sat{\symbc}{15}{\act}{\nrestart}{\carry}{\one}\sat{\symbX}{7}{{\color{white}\act}}{\nrestart}{\carry}{\one}\sat{\symbO}{7}{{\color{white}\act}}{\nrestart}{\carry}{\one}\sat{\symbI{1}}{7}{{\color{white}\act}}{\nrestart}{\carry}{\one}\sat{\symbL}{12}{\act}{\nrestart}{{\color{white}\carry}}{{\color{white}\one}}\sat{\szero}{13}{\act}{\nrestart}{{\color{white}\carry}}{\one}\sat{\sone}{14}{\act}{\nrestart}{\carry}{{\color{white}\one}}\sat{\szero}{15}{\act}{\nrestart}{\carry}{\one}\sat{\szero}{0}{{\color{white}\act}}{{\color{white}\nrestart}}{{\color{white}\carry}}{{\color{white}\one}}\sat{\sone}{1}{{\color{white}\act}}{{\color{white}\nrestart}}{{\color{white}\carry}}{\one}\sat{\szero}{2}{{\color{white}\act}}{{\color{white}\nrestart}}{\carry}{{\color{white}\one}}\sat{\szero}{3}{{\color{white}\act}}{{\color{white}\nrestart}}{\carry}{\one}\sat{\sone}{4}{{\color{white}\act}}{\nrestart}{{\color{white}\carry}}{{\color{white}\one}}\sat{\sone}{5}{{\color{white}\act}}{\nrestart}{{\color{white}\carry}}{\one}\sat{\sone}{6}{{\color{white}\act}}{\nrestart}{\carry}{{\color{white}\one}}\sat{\symbR{1}}{7}{{\color{white}\act}}{\nrestart}{\carry}{\one} }\\
  &\xline{ \sat{\nsymbpp{}}{13}{\act}{\nrestart}{{\color{white}\carry}}{\one}\sat{\nsplu{2}}{10}{\act}{{\color{white}\nrestart}}{\carry}{{\color{white}\one}}\sat{\symbB}{12}{\act}{\nrestart}{{\color{white}\carry}}{{\color{white}\one}}\sat{\symbd}{12}{\act}{\nrestart}{{\color{white}\carry}}{{\color{white}\one}}\sat{\symba}{12}{\act}{\nrestart}{{\color{white}\carry}}{{\color{white}\one}}\sat{\symbc}{12}{\act}{\nrestart}{{\color{white}\carry}}{{\color{white}\one}}\sat{\symbd}{4}{{\color{white}\act}}{\nrestart}{{\color{white}\carry}}{{\color{white}\one}}\sat{\symba}{4}{{\color{white}\act}}{\nrestart}{{\color{white}\carry}}{{\color{white}\one}}\sat{\symbc}{4}{{\color{white}\act}}{\nrestart}{{\color{white}\carry}}{{\color{white}\one}}\sat{\symbd}{12}{\act}{\nrestart}{{\color{white}\carry}}{{\color{white}\one}}\sat{\symbc}{12}{\act}{\nrestart}{{\color{white}\carry}}{{\color{white}\one}}\sat{\symbd}{4}{{\color{white}\act}}{\nrestart}{{\color{white}\carry}}{{\color{white}\one}}\sat{\symbX}{4}{{\color{white}\act}}{\nrestart}{{\color{white}\carry}}{{\color{white}\one}}\sat{\symbO}{4}{{\color{white}\act}}{\nrestart}{{\color{white}\carry}}{{\color{white}\one}}\sat{\symbIns}{4}{{\color{white}\act}}{\nrestart}{{\color{white}\carry}}{{\color{white}\one}}\sat{\symbL}{9}{\act}{{\color{white}\nrestart}}{{\color{white}\carry}}{\one}\sat{\szero}{10}{\act}{{\color{white}\nrestart}}{\carry}{{\color{white}\one}}\sat{\sone}{11}{\act}{{\color{white}\nrestart}}{\carry}{\one}\sat{\szero}{12}{\act}{\nrestart}{{\color{white}\carry}}{{\color{white}\one}}\sat{\szero}{13}{\act}{\nrestart}{{\color{white}\carry}}{\one}\sat{\sone}{14}{\act}{\nrestart}{\carry}{{\color{white}\one}}\sat{\szero}{15}{\act}{\nrestart}{\carry}{\one}\sat{\szero}{0}{{\color{white}\act}}{{\color{white}\nrestart}}{{\color{white}\carry}}{{\color{white}\one}}\sat{\sone}{1}{{\color{white}\act}}{{\color{white}\nrestart}}{{\color{white}\carry}}{\one}\sat{\sone}{2}{{\color{white}\act}}{{\color{white}\nrestart}}{\carry}{{\color{white}\one}}\sat{\sone}{3}{{\color{white}\act}}{{\color{white}\nrestart}}{\carry}{\one}\sat{\symbRns}{4}{{\color{white}\act}}{\nrestart}{{\color{white}\carry}}{{\color{white}\one}} }\\
  &\xline{ \sat{\nsmin{}}{12}{\act}{\nrestart}{{\color{white}\carry}}{{\color{white}\one}}\sat{\symbB}{13}{\act}{\nrestart}{{\color{white}\carry}}{\one}\sat{\symbd}{13}{\act}{\nrestart}{{\color{white}\carry}}{\one}\sat{\nsymbpp{}}{11}{\act}{{\color{white}\nrestart}}{\carry}{\one}\sat{\symbA}{12}{\act}{\nrestart}{{\color{white}\carry}}{{\color{white}\one}}\sat{\symbd}{12}{\act}{\nrestart}{{\color{white}\carry}}{{\color{white}\one}}\sat{\symbd}{12}{\act}{\nrestart}{{\color{white}\carry}}{{\color{white}\one}}\sat{\symba}{12}{\act}{\nrestart}{{\color{white}\carry}}{{\color{white}\one}}\sat{\symbc}{12}{\act}{\nrestart}{{\color{white}\carry}}{{\color{white}\one}}\sat{\symbd}{4}{{\color{white}\act}}{\nrestart}{{\color{white}\carry}}{{\color{white}\one}}\sat{\symbd}{4}{{\color{white}\act}}{\nrestart}{{\color{white}\carry}}{{\color{white}\one}}\sat{\symbd}{4}{{\color{white}\act}}{\nrestart}{{\color{white}\carry}}{{\color{white}\one}}\sat{\symbX}{4}{{\color{white}\act}}{\nrestart}{{\color{white}\carry}}{{\color{white}\one}}\sat{\symbO}{4}{{\color{white}\act}}{\nrestart}{{\color{white}\carry}}{{\color{white}\one}}\sat{\symbIns}{4}{{\color{white}\act}}{\nrestart}{{\color{white}\carry}}{{\color{white}\one}}\sat{\symbL}{8}{\act}{{\color{white}\nrestart}}{{\color{white}\carry}}{{\color{white}\one}}\sat{\szero}{9}{\act}{{\color{white}\nrestart}}{{\color{white}\carry}}{\one}\sat{\sone}{10}{\act}{{\color{white}\nrestart}}{\carry}{{\color{white}\one}}\sat{\szero}{11}{\act}{{\color{white}\nrestart}}{\carry}{\one}\sat{\szero}{12}{\act}{\nrestart}{{\color{white}\carry}}{{\color{white}\one}}\sat{\sone}{13}{\act}{\nrestart}{{\color{white}\carry}}{\one}\sat{\szero}{14}{\act}{\nrestart}{\carry}{{\color{white}\one}}\sat{\szero}{15}{\act}{\nrestart}{\carry}{\one}\sat{\sone}{0}{{\color{white}\act}}{{\color{white}\nrestart}}{{\color{white}\carry}}{{\color{white}\one}}\sat{\sone}{1}{{\color{white}\act}}{{\color{white}\nrestart}}{{\color{white}\carry}}{\one}\sat{\sone}{2}{{\color{white}\act}}{{\color{white}\nrestart}}{\carry}{{\color{white}\one}}\sat{\szero}{3}{{\color{white}\act}}{{\color{white}\nrestart}}{\carry}{\one}\sat{\symbRns}{4}{{\color{white}\act}}{\nrestart}{{\color{white}\carry}}{{\color{white}\one}} }\\
  &\xline{ \sat{\symbB}{12}{\act}{\nrestart}{{\color{white}\carry}}{{\color{white}\one}}\sat{\symbd}{12}{\act}{\nrestart}{{\color{white}\carry}}{{\color{white}\one}}\sat{\nsmin{}}{12}{\act}{\nrestart}{{\color{white}\carry}}{{\color{white}\one}}\sat{\symbA}{13}{\act}{\nrestart}{{\color{white}\carry}}{\one}\sat{\symbd}{13}{\act}{\nrestart}{{\color{white}\carry}}{\one}\sat{\symbd}{13}{\act}{\nrestart}{{\color{white}\carry}}{\one}\sat{\nsymbpp{}}{11}{\act}{{\color{white}\nrestart}}{\carry}{\one}\sat{\symbA}{12}{\act}{\nrestart}{{\color{white}\carry}}{{\color{white}\one}}\sat{\symbd}{12}{\act}{\nrestart}{{\color{white}\carry}}{{\color{white}\one}}\sat{\symbd}{12}{\act}{\nrestart}{{\color{white}\carry}}{{\color{white}\one}}\sat{\symbd}{12}{\act}{\nrestart}{{\color{white}\carry}}{{\color{white}\one}}\sat{\symbd}{12}{\act}{\nrestart}{{\color{white}\carry}}{{\color{white}\one}}\sat{\symbX}{12}{\act}{\nrestart}{{\color{white}\carry}}{{\color{white}\one}}\sat{\symbO}{12}{\act}{\nrestart}{{\color{white}\carry}}{{\color{white}\one}}\sat{\symbIns}{12}{\act}{\nrestart}{{\color{white}\carry}}{{\color{white}\one}}\sat{\symbL}{15}{\act}{\nrestart}{\carry}{\one}\sat{\szero}{0}{{\color{white}\act}}{{\color{white}\nrestart}}{{\color{white}\carry}}{{\color{white}\one}}\sat{\sone}{1}{{\color{white}\act}}{{\color{white}\nrestart}}{{\color{white}\carry}}{\one}\sat{\szero}{2}{{\color{white}\act}}{{\color{white}\nrestart}}{\carry}{{\color{white}\one}}\sat{\szero}{3}{{\color{white}\act}}{{\color{white}\nrestart}}{\carry}{\one}\sat{\sone}{4}{{\color{white}\act}}{\nrestart}{{\color{white}\carry}}{{\color{white}\one}}\sat{\szero}{5}{{\color{white}\act}}{\nrestart}{{\color{white}\carry}}{\one}\sat{\szero}{6}{{\color{white}\act}}{\nrestart}{\carry}{{\color{white}\one}}\sat{\sone}{7}{{\color{white}\act}}{\nrestart}{\carry}{\one}\sat{\sone}{8}{\act}{{\color{white}\nrestart}}{{\color{white}\carry}}{{\color{white}\one}}\sat{\sone}{9}{\act}{{\color{white}\nrestart}}{{\color{white}\carry}}{\one}\sat{\szero}{10}{\act}{{\color{white}\nrestart}}{\carry}{{\color{white}\one}}\sat{\szero}{11}{\act}{{\color{white}\nrestart}}{\carry}{\one}\sat{\symbRns}{12}{\act}{\nrestart}{{\color{white}\carry}}{{\color{white}\one}} }
\end{align*}

\section{Discussion}
In Section~\ref{sec:exp} we have encoded the state of a binary counter using a binary encoding.
In comparison with the binary counter obtained from the Fractran encoding, this yields an
enormous simplification concerning the number of required morphisms.
Moreover, we have illustrated a construction which allows for shifting the activity
from one letter to the next in each iteration of the morphism,
and how the letters can `communicate' computation results to the following letter.
It would be interesting to investigate whether Turing machines can be encoded in a similar way. 
The crucial difference would be that for Turing machines we need to shift the
activity left or right depending on the outcome of the current step;
the binary counter always shifts the activity to the right.
It is unclear to us whether the encoding from Section~\ref{sec:exp}
can be extended in this direction.
Compared to our Fractran encoding of Section~\ref{sec:comp},
such an encoding of Turing machines 
could lead to significantly less morphisms (but with a slightly larger alphabet).

\bibliography{main}

\end{document}